\newcommand{\algprobm}[1]{\textsc{#1}\xspace}
\newcommand{\Soc}{\text{Soc}}
\newcommand{\aut}{\mathrm{Aut}}
\theoremstyle{plain}
\newtheorem{theorem}{Theorem}[section]
\newtheorem{proposition}[theorem]{Proposition}
\newtheorem{corollary}[theorem]{Corollary}
\newtheorem{lemma}[theorem]{Lemma}
\newtheorem{observation}[theorem]{Observation}
\theoremstyle{definition}
\newtheorem{definition}[theorem]{Definition}
\newtheorem{remark}[theorem]{Remark}
\newtheorem{problem}[theorem]{Problem}
\newcommand{\betacc}[1]{\ifthenelse{\equal{#1}{1}}{\exists^{\log n}}{\exists^{\log^{#1}n}}} 
\newcommand{\alphacc}[1]{\ifthenelse{\equal{#1}{1}}{\forall^{\log n}}{\forall^{\log^{#1}n}}} 
\DeclareMathOperator{\GL}{GL}
\DeclareMathOperator{\SL}{SL}
\DeclareMathOperator{\PSL}{PSL}
\newcommand{\PSp}{\mathrm{PSp}}
\newcommand{\POmegaPlus}{\mathrm{P}\Omega^{+}}
\newcommand{\POmega}{\mathrm{P}\Omega}
\DeclareMathOperator{\Aut}{Aut}
\DeclareMathOperator{\rad}{Rad}
\DeclareMathOperator{\poly}{poly}
\newcommand{\ord}{\mathrm{ord}}
\newcommand{\G}{G_2}
\newcommand{\F}{F_4}
\newcommand{\Esix}{E_6}
\newcommand{\PGammaL}{{\mathrm{P\Gamma L}}(d,q)}
\newcommand{\mathbbm}[1]{\mathfrak{#1}}
\newcommand*{\Sym}{\text{Sym}}
\title{Complexity of Constructing Minimal Faithful Permutation Representations for Fitting-free Groups\footnote{A preliminary version of this work appeared in the proceedings of FCT 2025 \cite{LevetSrivastavaThakkarFCT}.
}}
\author[1]{Michael Levet}
\author[2]{Pranjal Srivastava}
\author[3]{Dhara Thakkar}
\affil[1]{Department of Computer Science, College of Charleston}
\affil[2]{Department of Math and Computing, Indian Institute of Information Technology Vadodara, India}
\affil[3]{Graduate School of Mathematics, Nagoya University, Japan}
\begin{document}
\maketitle

\begin{abstract}
In this paper, we investigate the complexity of computing 
minimal faithful permutation representations for groups without abelian normal subgroups (a.k.a. Fitting-free groups). When our groups are given as quotients of permutation groups, we exhibit a polynomial-time algorithm for constructing such representations. Furthermore, in the setting of permutation groups, we obtain an $\textsf{NC}$ procedure for computing the minimal faithful permutation degree, and a randomized $\textsf{NC}$ ($\textsf{RNC}$) algorithm for computing a minimal faithful permutation representation. This improves upon the work of Das and Thakkar (STOC 2024, \textit{SIAM J. Comput.} 2026), who established a Las Vegas polynomial-time algorithm for computing the minimal faithful permutation degree for this class in the setting of permutation groups.
\end{abstract}

\newpage
\section{Introduction}

Cayley's theorem states that for every finite group $G$, there exists $m \in \mathbb{N}$ such that $G$ can be embedded into $\Sym(m)$. The \emph{minimal faithful permutation degree}, denoted $\mu(G)$, is the smallest such $m$ for which $G$ can be embedded into $\Sym(m)$. A permutation representation of $G$ on $\mu(G)$ points is called a \emph{minimal faithful permutation representation}. Note that $\mu(G) \leq |G| \leq \mu(G)!$. By convention, $\mu(G) = 0$ if and only if $G = 1$. Johnson \cite{JohnsonMPD} exhibited the following characterization of $\mu(G)$.

\begin{proposition}[{\cite{JohnsonMPD}}] \label{prop:Johnson}
Let $G$ be a finite group, and let $\mathcal{G}$ be the collection of all subgroups of $G$. We have that:
\[
\mu(G) := \min_{\emptyset \neq \mathcal{H} \subseteq \mathcal{G}} \left\{ \sum_{H \in \mathcal{H}} [G : H] \ \biggr| \bigcap_{H \in \mathcal{H}} \text{Core}_{G}(H) = 1 \right\},
\]
where $\text{Core}_{G}(H) = \bigcap_{g \in G} gHg^{-1}$ and $|\mathcal{H}| \leq \log |G|$.\footnote{The fact that we can require any witnessing collection of subgroups $\mathcal{H}$ to satisfy $|\mathcal{H}| \leq \log |G|$ was carefully analyzed by \cite{DasThakkarMPD}.}
\end{proposition}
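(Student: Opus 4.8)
The plan is to set up the standard dictionary between faithful permutation representations of $G$ and families of subgroups, and then simply read off the formula. First I would recall that a faithful permutation representation of $G$ of degree $m$ is precisely the data of a $G$-action on a set $\Omega$ with $|\Omega| = m$ and trivial kernel. Decomposing $\Omega$ into its $G$-orbits, $\Omega = \bigsqcup_i \Omega_i$, each transitive constituent $\Omega_i$ is $G$-equivariantly isomorphic to a coset space $G/H_i$ for a point stabiliser $H_i \le G$, with $|\Omega_i| = [G:H_i]$. The kernel of the $G$-action on $G/H_i$ is exactly $\text{Core}_G(H_i) = \bigcap_{g \in G} g H_i g^{-1}$, the largest normal subgroup of $G$ contained in $H_i$; hence the kernel of the action on all of $\Omega$ is $\bigcap_i \text{Core}_G(H_i)$. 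So the representation is faithful iff this intersection is trivial, and then its degree is $\sum_i [G:H_i]$. Minimising over all such families gives $\mu(G)$. Finally one observes that a repeated subgroup in the multiset $\{H_i\}$ may be deleted without enlarging the core-intersection and without increasing the index-sum, so a minimising family may be taken to be a genuine set $\mathcal{H} \subseteq \mathcal{G}$; this yields the displayed expression (where $\bigcap_{H \in \mathcal H}\text{Core}_G(H) = \emptyset$ is shorthand for this intersection being the trivial subgroup).

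For the cardinality bound $|\mathcal{H}| \le \log|G|$, I would show that from any feasible set $\mathcal{H} = \{H_1,\dots,H_t\}$ one can extract a feasible subset of size at most $\log_2 |G|$ whose index-sum is no larger, so in particular a minimiser may be assumed to satisfy the bound. Form the descending chain of partial intersections $N_0 = G \supseteq N_1 \supseteq \cdots \supseteq N_t$, where $N_j = \bigcap_{i \le j} \text{Core}_G(H_i)$, and retain only those indices $j$ at which $N_j \subsetneq N_{j-1}$. The retained subgroups still have trivial core-intersection, and since each strict drop at least halves the order of a subgroup of $G$, there are at most $\log_2 |G|$ of them; discarding subgroups only decreases $\sum_{H}[G:H]$, so optimality is preserved. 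As the footnote indicates, the careful form of this counting is due to \cite{DasThakkarMPD}.

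The routine ingredients are the orbit decomposition and the identification of the kernel of a coset action with the core, both of which are standard. The only point demanding a moment's care is verifying that passing first from multisets to sets, and then to sets of size at most $\log |G|$, leaves the minimum genuinely unchanged rather than merely bounding it on one side — but each reduction only deletes subgroups, which can never increase the objective and never destroys feasibility, so the optimum is attained within the restricted family. I do not anticipate any serious obstacle; the argument is a clean translation of Cayley-type embeddings into the language of subgroup cores.
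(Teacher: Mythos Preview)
Your argument is correct and is the standard proof of Johnson's characterisation; however, note that the paper does not actually prove this proposition at all --- it is stated as a citation to \cite{JohnsonMPD}, with the $|\mathcal{H}| \le \log|G|$ refinement attributed via footnote to \cite{DasThakkarMPD}. There is therefore nothing in the paper to compare your approach against beyond the bare statement, and your proof supplies exactly the expected orbit-decomposition / core-intersection dictionary together with the Lagrange-based pruning for the logarithmic bound.
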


In this paper, we investigate the computational complexity of computing $\mu(G)$. Precisely, we will consider the decision variant, \algprobm{Min-Per-Deg}, which takes as input a finite group $G$ and $k \in \mathbb{N}$, and asks if $\mu(G) \leq k$. We will also consider the search variant of \algprobm{Min-Per-Deg}, for which we will construct an explicit embedding $\varphi : G \to \Sym(\mu(G))$. Key motivation for this comes from the setting of permutation group algorithms. Given a group $G \leq \text{Sym}(n)$, there is a natural divide-and-conquer strategy where we decompose $G$ in terms of a normal subgroup $N \trianglelefteq G$ and the corresponding quotient $G/N$. We then seek to solve the algorithmic problem on $N$ and $G/N$, before recombining the solutions on the subproblems to a solution for $G$. The techniques involved when reasoning about quotients of two permutation groups are considerably more involved than when dealing with a single permutation group \cite{KantorLuksQuotients}. Thus, if $G/N$ has a \textit{small} faithful permutation representation that can be efficiently constructed, then we can take advantage of the extensive suite of permutation group algorithms (e.g., \cite{CH03, SeressBook, BabaiLuksSeress, LuksReduction}). This approach is not always possible, as there exist quotients with exponentially large degree. Let $D_{8}$ be the dihedral group of order $8$. Neumann observed that $D_{8}^{\ell}$ admits a faithful permutation representation on $4\ell$ points, but admits a quotient $Q$ that is an extraspecial $2$-group with $\mu(Q) \geq 2^{\ell+1}$ \cite{neumann1985}. For a survey of algorithms to deal with quotient groups, we refer to \cite{KantorLuksQuotients, WilsonDirectProductsArxiv, LuksMiyazaki, SeressBook}.

The parameter $\mu(G)$ has received considerable attention, primarily from a mathematical perspective. Johnson showed that the Cayley representation of $G$ (that is, the representation of $G$ induced by the left-regular action of $G$ on itself) is minimal ($\mu(G) = |G|$) if and only if $G$ is cyclic of prime power order, a generalized quaternion $2$-group, or the Klein group of order $4$ \cite{JohnsonMPD}. Povzner \cite{Povzner} and Johnson \cite{JohnsonMPD} independently showed that for an abelian group $G$, $\mu(G)$ is the sum of the prime powers that occur in the direct product decomposition of $G$ into indecomposable factors. For a general survey, we refer to \cite{wright1974,neumann1985,easdownPraeger1988,Holt1997,babai1993,kovacs2000,elias2010,HoltWalton2002,Saunders2011} and the references therein.

 To the best of our knowledge, Kantor and Luks \cite[Sec.~13, Problem 1]{KantorLuksQuotients} were the first to propose investigating the computational complexity of \algprobm{Min-Per-Deg}. In the setting of permutation groups and their quotients, \algprobm{Min-Per-Deg} belongs to $\textsf{NP}$ \cite{KantorLuksQuotients, DasThakkarMPD}.
 Proposition~\ref{prop:Johnson} yields a trivial upper-bound of $|G|^{O(\log^{2} |G|)}$-time for this problem, in the multiplication (Cayley) table model. This is essentially the generator-enumeration strategy. Namely, we enumerate over all collections $\mathcal{H}$ of subgroups of $G$, where $|\mathcal{H}| \leq \log |G|$. Each subgroup in $\mathcal{H}$ is specified by generators, and so we require at most $\log |G|$ generators per subgroup in $\mathcal{H}$. Note that, given generators for a subgroup $H \leq G$, we may write down the elements of $H$ in polynomial time. Given the explicit elements belonging to each subgroup $H \in \mathcal{H}$, we may easily handle the remainder of the computations. 

Das and Thakkar \cite{DasThakkarMPD}  recently made significant progress in the computational complexity of the decision variant of \algprobm{Min-Per-Deg}. Note that their work was non-constructive; they computed the minimal faithful permutation degree $\mu(G)$, but did not construct witnessing representations $\varphi : G \to \Sym(\mu(G))$. In the setting of permutation groups, they exhibited a Las Vegas polynomial-time algorithm for groups without abelian normal subgroups, and a quasipolynomial-time algorithm for primitive permutation groups. When the groups are given by their multiplication (Cayley) tables, Das and Thakkar exhibited an upper bound of $\textsf{DSPACE}(\log^{3} |G|)$, which improves upon the trivial $|G|^{O(\log^2 |G|)}$ bound. 

\noindent \\ \textbf{Our Contributions.} In this paper, we further investigate both the decision and search variants of the \algprobm{Min-Per-Deg} problem for groups without abelian normal subgroups, addressing one of the open questions posed by Das and Thakkar \cite{DasThakkarMPD}. Our main result is the following.

\begin{theorem} \label{thm:MainFittingFree}
Let $\textbf{G}$ be a group without abelian normal subgroups (a.k.a. Fitting-free).\footnote{In this paper, we will use the term \textit{Fitting-free} to refer to the class of groups without abelian normal subgroups, and \textit{semisimple} to refer to algebraic structure decomposes as a direct sum of simple objects. The terms \textit{Fitting-free} or \textit{trivial-Fitting} are common in the group theory literature (see e.g., \cite{CH03, Holt2005HandbookOC}), while the term \textit{semisimple} has been used in the complexity theory literature to refer to the class of groups without abelian normal subgroups (see e.g., \cite{BCGQ, BCQ}).}
\begin{enumerate}[label=(\alph*)]
\item Suppose $\textbf{G}$ is given as the quotient $G/K$ of two permutation groups $K \trianglelefteq G \leq \text{Sym}(n)$. Then we can compute $\mu(\textbf{G})$, as well as a minimal faithful permutation representation $\varphi : \textbf{G} \to \Sym(\mu(G))$, in polynomial time.

\item If $\textbf{G}$ is given as a permutation group ($K = 1$), then we can compute $\mu(\textbf{G})$ in $\textsf{NC}$. Furthermore, we can construct a minimal faithful permutation representation $\varphi : \textbf{G} \to \Sym(\mu(G))$ using a randomized $\textsf{NC}$ ($\textsf{RNC}$) algorithm.
\end{enumerate}
\end{theorem}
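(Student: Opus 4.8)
The plan is to reduce the computation of $\mu(\mathbf{G})$ to a structured combinatorial optimization over a bounded-size family of subgroups, exploiting the structure theory of Fitting-free groups. Recall that a Fitting-free group $\mathbf{G}$ has socle $\Soc(\mathbf{G}) = T_1 \times \cdots \times T_r$ with each $T_i$ nonabelian simple, $\mathbf{G}$ embeds into $\Aut(\Soc(\mathbf{G})) = \prod_j (\Aut(T) \wr \Sym(m_j))$ grouped by isomorphism type, and $C_{\mathbf{G}}(\Soc(\mathbf{G})) = 1$. The first step is to compute the socle and its decomposition into simple factors. In the quotient setting $\mathbf{G} = G/K$, one works with preimages in $G \le \Sym(n)$: normal subgroup / socle computations for permutation groups and their quotients are available in polynomial time (and in $\NC$ when $K = 1$) by standard permutation-group machinery (Babai--Luks--Seress, Kantor--Luks); I would cite these as black boxes. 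The key point is that the output size is controlled: $r \le \log|\mathbf{G}|$, which is polynomial in $n$, and each $T_i$ has order bounded polynomially in $n$ as well.

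The second and main step is a structural analysis of minimal faithful representations of $\mathbf{G}$ that shows Johnson's optimization (Proposition~\ref{prop:Johnson}) decomposes along the socle factors. The heuristic is: a faithful representation must, in particular, act faithfully on $\Soc(\mathbf{G})$; since the $T_i$ are nonabelian simple, any subgroup $H$ with $\bigcap_g gHg^{-1} = 1$ must "see" every $T_i$, and one expects the optimal witnessing family $\mathcal{H}$ to split as a union of families each responsible for a single orbit of $\mathbf{G}$ on the set of simple factors. Concretely, I would group the simple factors into $\mathbf{G}$-orbits $\mathcal{O}_1, \dots, \mathcal{O}_s$, let $M_\ell = \prod_{T_i \in \mathcal{O}_\ell} T_i$ be the corresponding minimal normal subgroup, and show that
\[
\mu(\mathbf{G}) = \sum_{\ell=1}^{s} \mu_{\mathbf{G}}(M_\ell),
\]
where $\mu_{\mathbf{G}}(M_\ell)$ is the minimal degree of a transitive (or bounded-component) faithful action of $\mathbf{G}$ "relative to" the factor $M_\ell$ — i.e., the minimum of $[\mathbf{G}:H]$ over subgroups $H$ whose core intersected with $M_\ell$ is trivial but which may contain the other $M_{\ell'}$. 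This kind of additive decomposition for semisimple groups is exactly the sort of result one would expect to import or adapt from Das--Thakkar \cite{DasThakkarMPD}; the honest version needs care because $\mu$ is not in general additive over direct factors, so one must use that the relevant normal subgroups are the socle components and invoke results à la Neumann/Wright on when $\mu(A \times B) = \mu(A) + \mu(B)$ for groups with no common "bad" quotients. I expect this decomposition lemma — pinning down precisely which local quantity to optimize per socle-orbit and proving the sum is exactly $\mu(\mathbf{G})$ — to be the principal obstacle.

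The third step is to compute each local quantity $\mu_{\mathbf{G}}(M_\ell)$ efficiently. Since $\mathbf{G}$ acts on $\mathcal{O}_\ell$ as a transitive permutation group of degree $m_\ell \le \log|\mathbf{G}|$, and the point stabilizer of this action has a normal subgroup of bounded index lying inside $\Aut(T) \wr \Sym(m_\ell - 1)$-type structure, the candidate subgroups $H$ realizing $\mu_{\mathbf{G}}(M_\ell)$ can be described by: (i) a choice of how to be faithful on a single factor $T_i \in \mathcal{O}_\ell$, which reduces to understanding minimal faithful actions of almost-simple groups $A$ with $T \le A \le \Aut(T)$ — a finite computation that can be read off from the (polynomially bounded) order of $T$ and known facts about minimal indices of almost-simple groups, or simply by brute-force search over subgroups of the polynomial-size group $A$; and (ii) combining across the orbit via the wreath structure. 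Each such computation involves groups of order $\poly(n)$, so enumerating subgroups up to the needed index, computing cores, and taking the minimum is polynomial-time; and the orbit-merging and socle-decomposition bookkeeping is low-depth. For part (b), every ingredient — socle, simple-factor decomposition, orbit structure on factors, and the bounded-index subgroup searches inside $\poly(n)$-size almost-simple groups — can be carried out in $\NC$ using the parallel permutation-group toolkit (membership, orbits, transitive blocks, normal closure are all in $\NC$), and the final arithmetic (summing $s \le \log|\mathbf{G}|$ terms, comparing to $k$) is trivially in $\NC$; so the overall bound is $\NC$. The one place to double-check for the $\NC$ claim is that the per-orbit optimization does not secretly require iterating a polynomial number of times in a sequentially dependent way — but since each orbit has only $O(\log|\mathbf{G}|)$ factors and the almost-simple "local" problem is over a $\poly(n)$-size group handled by exhaustive (hence parallelizable) search, this is fine.
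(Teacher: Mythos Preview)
Your high-level architecture matches the paper: compute $\Soc(\mathbf{G})$ and its simple factors, group them into $\mathbf{G}$-orbits to get the minimal normal subgroups, and use the additive decomposition $\mu(\mathbf{G}) = \sum_i \ell_i\, \mu(\mathbf{G},N_i)$ (Cannon--Holt--Unger), reducing to the almost-simple quantity $\mu(A)$ with $S_1 \le A \le \Aut(S_1)$. The paper imports exactly this decomposition as a black box (their Lemma~\ref{CHU-general-Lemma}), so your ``principal obstacle'' is in fact already available in the literature.

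The genuine gap is in your third step. You assert that ``each $T_i$ has order bounded polynomially in $n$'' and propose to compute $\mu(A)$ by brute-force search over subgroups of the ``polynomial-size group $A$''. This is false: a Fitting-free $G \le \Sym(n)$ can be $A_n$ itself, with a single socle factor of order $n!/2$; more generally $\PSL(d,q)$ with $d \in O(\sqrt{n\log n})$ has order roughly $q^{d^2}$, which is exponential in $n$. So the almost-simple group $A$ is typically exponentially large and cannot be enumerated. The paper's actual work is precisely here: it uses constructive recognition of $S_1$ (polynomial-time for quotients, $\mathsf{NC}$ for permutation groups) to obtain the \emph{name} of $S_1$ and an explicit isomorphism to a standard copy, then invokes the Cannon--Holt--Unger result that $\mu(A) = \mu(S_1)$ except for an explicit finite list of families (their Table~\ref{Table-MD-AlmostSimple}). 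Deciding which row of the table applies requires determining whether the generators of $A$, viewed as automorphisms of $S_1$, involve a graph automorphism (e.g., whether $A \le \mathrm{P\Gamma L}(d,q)$). This is done by lifting each automorphism from $\PSL$ to $\SL$ (and analogues for $\mathrm{Sp}$, $\Omega^+$), then solving a linear system $FU_j = \alpha(U_j)F$ over $\mathbb{F}_q$ and invoking Schur's lemma to detect inner/diagonal automorphisms---all of which is $\mathsf{NC}^2$ linear algebra on matrices of polylogarithmic dimension. Only for the exceptional Lie types $G_2(3^e), F_4(2^e), E_6(q)$ does the paper use a size bound ($|S_1| \le n^9$) to fall back on Cayley-table methods; for the classical families your brute-force idea is not available, and the automorphism-type analysis is the missing ingredient in your proposal.
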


Given a finite group $G$ by its multiplication (Cayley) table, we may easily construct the regular representation in $\textsf{NC}$. Thus, Theorem~\ref{thm:MainFittingFree} yields the following immediate corollary.

\begin{corollary}
Let $G$ be a Fitting-free group given by its multiplication table. We can compute $\mu(G)$ in $\textsf{NC}$, as well as a minimal faithful permutation representation $\varphi : \textbf{G} \to \Sym(\mu(G))$, in $\textsf{RNC}$.    
\end{corollary}

We compare Theorem~\ref{thm:MainFittingFree} to the previous result of Das and Thakkar \cite{DasThakkarMPD}. Here, Das and Thakkar established a Las Vegas polynomial-time analogue of Theorem~\ref{thm:MainFittingFree} in the setting of \textit{permutation groups}. They explicitly did not deal with quotients \cite{DasThakkarMPD}. Das and Thakkar crucially leveraged randomness by using a suite of Las Vegas algorithms to constructively recognize finite simple groups in the black-box model \cite{KantorSeress}. Note that the black-box model also includes matrix groups, where order-finding is already at least as hard as \algprobm{Discrete Logarithm}, and so Las Vegas polynomial-time is essentially optimal in the black-box setting.

In a preliminary version of this work \cite{LevetSrivastavaThakkarFCT}, we investigated the decision variant of \algprobm{Min-Per-Deg} for Fitting-free groups. Here, we showed that for a Fitting-free group $\textbf{G}$, we can compute $\mu(\textbf{G})$ in (i) polynomial-time if $\textbf{G}$ is specified as the quotient of two permutation groups, and (ii) $\textsf{NC}$ if $\textbf{G}$ is specified as a permutation group. In this work, we further extend \cite{LevetSrivastavaThakkarFCT} by constructing explicit permutation representations.

\noindent \\ \textbf{Techniques for computing $\mu(G)$.} We will first discuss our methods for handling the decision variant of \algprobm{Min-Per-Deg} (computing $\mu(G)$). In order to compute $\mu(G)$, we take advantage of several previous works, including the extensive suite of polynomial-time algorithms for quotients \cite{KantorLuksQuotients} and $\textsf{NC}$ algorithms for permutation groups \cite{BabaiLuksSeress}, as well as standard parallel algorithms for linear algebra (e.g., \cite{Eberly91, MulmuleyRank}). Additionally, we crucially leverage the framework of Kantor, Luks, and Mark \cite{KantorLuksMark, MarkThesis} for constructive recognition of finite simple groups in $\textsf{NC}$ (see Section~\ref{sec:ConstructiveRecognition} for more details).

Still, in order to compute $\mu(\textbf{G})$ in the setting of permutation groups, we require an additional ingredient; namely, computing the socle\footnote{Recall that the \emph{socle} of a group $G$ is the direct product of the minimal normal subgroups of $G$.} in $\textsf{NC}$. While such polynomial-time algorithms have long been known for both permutation groups (see e.g., \cite{SeressBook, Holt2005HandbookOC}) and their quotients \cite{KantorLuksQuotients}, it is open whether we can compute the socle in $\textsf{NC}$. Prior to this work, it was open, even in the restricted setting where we have a permutation group that is Fitting-free. To this end, we establish the following.

\begin{proposition} \label{prop:MainSocle}
Let $G$ be a Fitting-free group specified by a generating sequence of permutations from $\Sym(n)$. We can compute $\Soc(G)$ in $\textsf{NC}$. 
\end{proposition}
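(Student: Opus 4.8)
The plan is to exploit the structure theory of Fitting-free groups: if $G$ has no abelian normal subgroup, then $\Soc(G) = T_1 \times \cdots \times T_r$ is a direct product of nonabelian simple groups, and $C_G(\Soc(G)) = 1$, so that $G$ embeds into $\Aut(\Soc(G))$ with $\Inn(\Soc(G)) = \Soc(G) \trianglelefteq G$. The key observation is that the simple factors $T_i$ are precisely the minimal normal subgroups of $G$ (when $G$ is Fitting-free, every minimal normal subgroup is a single nonabelian simple group, since a minimal normal subgroup is a product of isomorphic simple groups transitively permuted by $G$, and here each such product that is itself a product of $\geq 2$ factors would still be a \emph{minimal} normal subgroup — so actually one must be slightly careful: the minimal normal subgroups are the $G$-orbit products of the $T_i$'s). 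So I first recall the standard reduction: $\Soc(G)$ is the product of the minimal normal subgroups of $G$, and it suffices to (i) locate one minimal normal subgroup, or rather to directly identify the full set of simple direct factors, and (ii) take the group they generate.

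The concrete approach I would take is the following. First, compute $G' = [G,G]$ and iterate to reach the last term of the derived series, or better, use the fact (available in NC for permutation groups via \cite{BabaiLuksSeress}) that we can compute normal closures, commutator subgroups, centralizers of normal subgroups, and handle homomorphisms. Since $G$ is Fitting-free, $\Soc(G)$ equals the \emph{layer} $E(G)$ (the product of the components), and one has $\Soc(G) \leq G^{(\infty)}$, the perfect core. I would compute the perfect core $P$ of $G$ in NC by iterating the commutator subgroup construction $O(\log|G|) = O(\log n \cdot \log\log n)$-ish many... actually only $O(\log n)$ iterations suffice since each proper step at least halves... hmm, the derived length can be up to $O(\log n)$, and each commutator subgroup computation is in NC, and composition of $O(\log n)$ many NC computations is still NC, so this is fine. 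Now $P$ is a perfect Fitting-free group with $\Soc(P) = \Soc(G)$. Next I would use constructive recognition: decompose $P$ using the NC algorithms of Babai–Luks–Seress \cite{BabaiLuksSeress} to find a normal series with simple or elementary-abelian factors; since $P$ is Fitting-free and perfect, pull out the bottom layer, which is $\Soc(G)$. Alternatively — and this is cleaner — I would directly compute the set of minimal normal subgroups: compute a composition series (NC, \cite{BabaiLuksSeress}), take the bottom composition factor's preimage to get \emph{a} minimal normal subgroup $N_1$ (nonabelian simple since $G$ is Fitting-free), then compute $C_G(N_1)$, which is again Fitting-free, and recurse on $C_G(N_1)$; the recursion depth is $r \leq \log_{60}|G| = O(\log n)$, each stage is NC, so the whole thing is NC. Finally $\Soc(G) = \langle N_1, \ldots, N_r\rangle = N_1 \times \cdots \times N_r$.

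The main obstacle I anticipate is two-fold. First, ensuring that \emph{iterated} composition of NC subroutines stays in NC: the recursion on centralizers has depth $r = O(\log n)$, and each level invokes composition-series / centralizer computations that are themselves NC but of depth $\mathrm{polylog}(n)$; composing $O(\log n)$ levels gives total depth $\mathrm{polylog}(n)$, which is still NC, but this must be checked carefully, and it is the reason one wants the recursion depth logarithmic rather than merely polynomial. Second, and more delicate: the cited NC permutation-group toolkit \cite{BabaiLuksSeress} gives composition series and handling of normal subgroups, but extracting a genuinely \emph{minimal} normal subgroup (as opposed to just a composition factor) requires knowing that the smallest term is normal in all of $G$ — for this I would take the normal closure in $G$ of the bottom composition factor and argue, using Fitting-freeness, that this normal closure is itself simple (it is a minimal normal subgroup of $G$, hence a product of simple groups permuted transitively by $G$; to get a \emph{single} simple factor $T_i$ I then intersect with an appropriate point-stabilizer structure, or equivalently take the normal closure of a single composition factor inside $N_1$ rather than inside $G$). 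Making this last identification fully rigorous in NC — separating the transitive constituents of a minimal normal subgroup into individual simple factors — is the technical heart of the argument, and I expect it to lean on the NC constructive-recognition machinery of Kantor–Luks \cite{KantorLuksMark} applied to $N_1$ together with an analysis of how $G$ permutes the factors.
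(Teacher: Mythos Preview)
Your recursion-depth bound is wrong, and this is a genuine gap. You write ``the recursion depth is $r \leq \log_{60}|G| = O(\log n)$'', but $G \leq \Sym(n)$ only gives $|G| \leq n!$, so $\log_{60}|G| = O(n\log n)$, not $O(\log n)$. This is not just a loose estimate: take $G = A_5^{k}$ acting on $n = 5k$ points. Then $G$ is Fitting-free with $k = n/5$ minimal normal subgroups, and your peel-off-one-minimal-normal-then-recurse-on-its-centralizer scheme runs for $\Theta(n)$ rounds. Composing $\Theta(n)$ many $\textsf{NC}$ subroutines yields circuits of depth $n \cdot \mathrm{polylog}(n)$, which is polynomial time but not $\textsf{NC}$. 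The same blowup afflicts your first sketch (iterating commutator subgroups to reach the perfect core is fine, but you still need to split the perfect core into its socle, and your proposed ``bottom layer of a normal series with simple factors'' runs into the same length issue, since such a series can have $\Theta(n\log n)$ terms).

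The paper avoids this by not recursing on minimal normal subgroups at all. Instead it uses the Babai--Luks--Seress normal series $1 = N_0 \leq N_1 \leq \cdots \leq N_r = G$ with each $N_i \trianglelefteq G$ and each quotient $N_i/N_{i-1}$ \emph{semisimple}; the crucial point is that this series has length $r \in O(\log^2 n)$, independent of the number of socle factors. One then sets $M_1 := N_1$ and $M_i := M_{i-1} \times (N_i \cap C_G(M_{i-1}))$, and proves (using Fitting-freeness) that $M_i = \Soc(N_i)$ for all $i$, so $M_r = \Soc(G)$. Each step needs only centralizer-of-a-normal-subgroup and intersection-with-a-normalized-subgroup, both $\textsf{NC}$ by \cite{BabaiLuksSeress}, and there are only $O(\log^2 n)$ steps. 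Your individual tools (normal closure, centralizer of a normal subgroup, identifying a subnormal simple subgroup as a socle component) are all correct and available; what you are missing is a polylogarithmic-length scaffold along which to iterate them.
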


In order to prove Proposition~\ref{prop:MainSocle}, we take advantage of an $\textsf{NC}$-computable series of length $O(\log^2 n)$, where each term is normal in $G$ and the successive quotients are semisimple (direct products of simple groups) \cite{BabaiLuksSeress}. For an arbitrary group, $\Soc(G)$ may contain abelian normal subgroups. Constructing generators for both the abelian and non-abelian parts of the socle together, in $\textsf{NC}$, remains open. This was an obstacle to computing $\Soc(G)$ in polynomial time, in the setting of permutation groups \cite{BabaiKantorLuksCFSG}.

Extending Proposition~\ref{prop:MainSocle} to the setting of quotients is also non-trivial (see Remark~\ref{rmk:Quotients} for further discussion). To prove Proposition~\ref{prop:MainSocle}, we take advantage of the fact that we can, in $\textsf{NC}$, compute the centralizer of a normal subgroup \cite{BabaiLuksSeress}. While such a polynomial-time algorithm exists for quotients, the machinery is considerably more complicated than for permutation groups and appears resistant to parallelization \cite[Section~6]{KantorLuksQuotients}. This, in turn, presents a significant obstacle to improving the bound in Theorem~\ref{thm:MainFittingFree}(a) from $\textsf{P}$ to $\textsf{NC}$. \\

\noindent \textbf{Techniques for Computing Minimal Faithful Permutation Representation.} While our work on computing $\mu(\textbf{G})$, building on the work of Das and Thakkar \cite{DasThakkarMPD}, serves as a foundation, we still require several additional ingredients to compute a minimal faithful permutation representation $\varphi : \textbf{G} \to \Sym(\mu(\textbf{G}))$. Recall that the \emph{socle} $\Soc(\textbf{G})$ is the direct product of the minimal normal subgroups of $\textbf{G}$. As $\textbf{G}$ is Fitting-free, each minimal normal subgroup of $\textbf{G}$ is of the form $N = S^k$, for some non-abelian simple group $S$. Let $S_1$ be one such non-abelian simple direct factor of $N$. Let $A \leq \Aut(S_1)$ be the almost-simple subgroup induced by the conjugation action of $N_{\textbf{G}}(S_1)$ on $S_1$. Our first step is to build a minimal faithful permutation representation of $A$. For computing $\mu(\textbf{G})$, we only require constructive recognition of finite simple groups. However, in order to compute an embedding $\varphi : A \to \Sym(\mu(A))$, we require minimal faithful permutation representations for each non-abelian simple group $S$. 

If $|S| \in \poly(n)$, then we can write down the multiplication table for $S$. As $S$ is simple, we have from \cite{JohnsonMPD} that there exists a maximal subgroup $P \leq S$ such that the regular action of $S$ on the right cosets of $S/P$ is a minimal faithful permutation representation. Kantor \cite{kantorPrimeOrderElement} established that if $S$ is an exceptional group of Lie type, then $|S| < n^9$ (recalled as Theorem~\ref{thm:Kantor}). Note that there are $26$ sporadic groups, and so these all have bounded order. Thus, for the exceptional groups of Lie type and the sporadic groups, we may construct a minimal faithful permutation degree in this way (see Lemma~\ref{Minper-Simple}).

In the case when $|S| \geq n^9$, then $S$ is either an alternating group or a classical simple group. We do not require detailed knowledge of the minimal faithful permutation representation of the alternating groups. However, we crucially utilize the fact that, except in a few small cases, a classical simple group of dimension $m$ over $\mathbb{F}_{q}$ admits a minimal faithful permutation representation over an appropriate orbit of $1$-spaces from $\mathbb{F}_{q}^{m}$. A careful reading of the work of Kantor, Luks, and Mark \cite{KantorLuksMark} (see in particular, Mark's thesis \cite{MarkThesis}) shows that if $S$ is a classical simple group of order $|S| \geq n^9$, that their procedure for constructive recognition of finite simple groups returns a minimal faithful permutation representation detailing the action of $S$ on the corresponding orbit of $1$-spaces. While Kantor, Luks, and Mark need not construct such representations for classical simple groups of order less than $n^9$, we will need to do so (Lemma~\ref{lem:MinPerClassicalSimple}).

Now if $S$ is a classical or exceptional simple group of Lie type (excluding $\POmegaPlus(8,q)$), then every automorphism of $S$ can be decomposed in terms of inner, diagonal, field, and graph automorphisms \cite{carter-book} (see Section~\ref{subsec: simplegroup} for more details). In the case when $S = \POmegaPlus(8,q)$, we have the inner, diagonal, and field automorphisms, as well as the triality automorphism instead of the graph automorphism \cite{HallTrialityNotes}. For these cases, we will need to build permutations for the inner, field, diagonal, and graph and triality automorphisms. The key technical challenge is to build permutation representations for the graph and triality automorphisms in the cases where $S$ is a classical simple group. While classical simple groups act naturally on orbits of $1$-spaces, the graph and triality automorphisms introduce new auxiliary vector spaces. We leverage the Classification of Finite Simple Groups (see e.g., \cite{ref4MV, ref7VClassical, carter-book, cameronnotes, HallTrialityNotes}) detailing the actions of the graph and triality automorphisms, in order to build the representative permutations.

Suppose now that we have constructed a minimal faithful permutation representation $\varphi_{i}$ for each almost-simple group $A_1, \ldots, A_k$, corresponding to each minimal normal subgroup $N_1, \ldots, N_k$ of $\textbf{G}$. It remains to coalesce $\varphi_{1}, \ldots, \varphi_{k}$ into a minimal faithful permutation representation of $\textbf{G}$. We accomplish this using the techniques of Cannon and Holt \cite{CH04}. One key technicality here is in constructing a transversal for the cosets of a subgroup $\textbf{H} \leq \textbf{G}$ of polynomial index. While such a transversal can be constructed in polynomial time for both permutation groups and their quotients \cite{LuksReduction}, it is a long-standing open problem as to whether this is doable in $\textsf{NC}$ for permutation groups \cite{BabaiLuksSeress}. However, Babai, Luks, and Seress \cite{BabaiLuksSeress} observed that the techniques of Babai \cite{BabaiMonteCarlo} yield a \emph{randomized} $\textsf{NC}$ algorithm ($\textsf{RNC}$) for constructing such a transversal for permutation groups. This is the sole obstacle for obtaining $\textsf{NC}$ bounds to construct a minimal faithful permutation representation for Theorem~\ref{thm:MainFittingFree}(b).

We note that Kantor, Luks, and Mark \cite{KantorLuksMark} exhibited an $\textsf{NC}$ algorithm for computing such a transversal, provided $|G| \leq n^{O(\log^c n)}$ for some fixed constant $c \geq 1$. They claimed to have a solution for the general problem (no restrictions on $|G|$), indicating that such a solution would appear in an in-preparation paper they cited as [KL2]. We were unable to locate such a solution, and to the best of our knowledge, this solution has not appeared.

\begin{remark} \label{rmk:Quotients}
It remains an intriguing open question whether our polynomial-time bounds for quotients (Theorem~\ref{thm:MainFittingFree}(a)) can be improved to $\textsf{NC}$. In the setting of polynomial-time computation, Kantor and Luks \cite{KantorLuksQuotients} conjectured a Quotient Group Thesis, stating:
\begin{quote}
``If a problem for quotient groups $G/K$ of permutation groups has a polynomial-time solution when $K = 1$, then it has a polynomial-time solution in general."    
\end{quote}

In the same paper, Kantor and Luks gave considerable supporting evidence for their Quotient Group Thesis by exhibiting a number of polynomial-time algorithms for quotient groups. There has been considerable work on quotient groups since, strengthening the case for the Quotient Group Thesis \cite{WilsonDirectProductsArxiv, BMWGenus2, LW12, LuksMiyazaki, DietrichWilsonCubeFree}. In some cases, such as computing direct product decompositions, access to quotients appears necessary \cite{WilsonDirectProductsArxiv}.

It is worth noting that while the polynomial-time library for quotients is similar to that of permutation groups, the techniques are often quite different. In the setting of permutation groups, algorithmic techniques often crucially leverage the action of the group on the permutation domain.  For quotient groups, such an action is not available \cite{neumann1985}, and so the techniques rely more deeply on the theory of the underlying group.

The setting of $\textsf{NC}$ highlights this distinction. We note that in order to obtain $\textsf{NC}$ bounds, we are limited to polylogarithmic iterations. However, at each iteration, we can perform a ``polynomial amount of work" in parallel (see Section~\ref{sec:Complexity} for a precise formulation of $\textsf{NC}$). Consider the problem of computing the centralizer $C_{G}(H)$, provided $G$ normalizes $H$. Babai, Luks, and Seress \cite{BabaiLuksSeress} exhibited an $\textsf{NC}$ algorithm for this problem in the setting of permutation groups, crucially taking advantage of the action of $G$ and $H$ on the underlying permutation domain. In the setting of quotient groups, Kantor and Luks \cite{KantorLuksMark} exhibited a polynomial-time algorithm for this problem by utilizing polynomial-time subroutines to compute Sylow subgroups \cite{KantorLuksMark,MarkThesis}, as well as the intersection $G \cap P$ between an arbitrary permutation group $G$ and a $p$-group $P$ \cite{LUKS198242,  LuksMiyazaki}. 

While Sylow subgroups of permutation groups can be computed in $\textsf{NC}$ \cite{KantorLuksMark, MarkThesis}, obtaining such bounds for the intersection problem remains elusive. Such procedures for computing $G \cap P$ employ a divide-and-conquer strategy, processing the orbits sequentially and then breaking the orbits up into blocks of imprimitivity. Note that there exist permutation groups of degree $n$ with a linear number of orbits (take, for instance, the action of $\mathbb{Z}_{2}^{n}$ on $2n$ points, which yields $n$ orbits of size $2$). Thus, the key bottleneck lies in the sequential processing of the orbits. Overcoming this obstacle appears to require new techniques, and so improving the bounds on Theorem~\ref{thm:MainFittingFree}(a) from polynomial-time to $\textsf{NC}$ is non-trivial. It is an intriguing open question as to whether the Quotient Group Thesis of Kantor and Luks holds in the setting of $\textsf{NC}$. 
\end{remark}

\begin{remark}
We also compare Theorem~\ref{thm:MainFittingFree} to the work of Cannon and Holt \cite{CH03}, for practical isomorphism testing of permutation groups. Cannon and Holt start with a permutation group $G$, and compute the solvable radical $\rad(G)$. They then deal with the Fitting-free quotient $G/\rad(G)$ by explicitly constructing a faithful permutation representation. Given a quotient group $\textbf{G} := G/K$, with $K \trianglelefteq G \leq \Sym(n)$, we can compute $\rad(\textbf{G})$ in polynomial-time \cite{KantorLuksQuotients}. Combining this with Theorem~\ref{thm:MainFittingFree}(a) formalizes that much of the framework established by Cannon and Holt \cite{CH03} is in fact polynomial-time computable. On the other hand, Grochow, Johnson, and Levet \cite{GrochowJohnsonLevet} exhibited a reduction from \algprobm{Linear Code Equivalence} (and hence, \algprobm{Graph Isomorphism}) to isomorphism testing of Fitting-free groups given by generating sequences of permutations. While \algprobm{Graph Isomorphism} is known to be solvable in quasipolynomial-time \cite{BabaiGraphIso}, the best known bound for \algprobm{Linear Code Equivalence} is $(2+o(1))^n$ \cite{BCGQ}. This provides formal evidence that the framework of Cannon and Holt \cite{CH03} is unlikely to run in polynomial time in the worst case. 

It remains open whether the solvable radical can be computed in $\textsf{NC}$, even in the setting of permutation groups.
\end{remark}

\noindent \textbf{Further Related Work.} Fitting-free groups have received considerable attention from the computational complexity community. Babai and Beals brought this class of groups to the attention of the computational complexity community in their work on black-box groups \cite{BabaiBeals}. In the multiplication (Cayley) table model, isomorphism testing of Fitting-free groups is known to be in $\textsf{P}$, through a series of two papers \cite{BCGQ, BCQ}, and this bound was recently improved to $\textsf{AC}^{3}$ \cite{GrochowJohnsonLevet}. This class of groups, as well as important sub-classes such as direct products of non-abelian simple groups and almost simple groups, have recently received attention from the perspective of the Weisfeiler--Leman algorithm \cite{BrachterSchweitzerWLLibrary, GLDescriptiveComplexity, GLWL1, CollinsLevetWL, BrachterThesis, GrochowJohnsonLevet, JLVWQuasigroups}. 


\section{Preliminaries}\label{sec:prelims}

\subsection{Groups}
All groups are assumed to be finite. Let $\text{Sym}(n)$ denote the symmetric group on $n$ letters. The \textit{normal closure} of a subset $S \subseteq G$, denoted $\text{ncl}_{G}(S)$, is the smallest normal subgroup of $G$ that contains $S$. The \textit{socle} of a group $G$, denoted $\text{Soc}(G)$, is the subgroup generated by the minimal normal subgroups of $G$. If $G$ has no abelian normal subgroups, then $\text{Soc}(G)$ decomposes uniquely as the direct product of non-abelian simple factors. The \textit{solvable radical} $\text{Rad}(G)$ is the unique maximal solvable normal subgroup of $G$. Every finite group can be written as an extension of $\text{Rad}(G)$ by the quotient $G/\text{Rad}(G)$, the latter of which has no abelian normal subgroups. We refer to the class of groups without abelian normal subgroups as \textit{Fitting-free}.

Let $G_1$ and $G_2$ be two groups acting on $\Delta$ and $\Gamma$ respectively. Let $\Gamma=\{\gamma_1,\ldots,\gamma_m\}$. Let $B$ be the set of functions from $\Gamma$ to $G_1$ (i.e., $\mathrm{Fun}(\Gamma,G_1)$). It is easy to see that $B$ forms a group with binary operation composition on two functions (see e.g., \cite[Section 2.6]{Dixon1996}). Then $B$ is isomorphic to the direct product of  $|\Gamma|$ copies of $K$ via isomorphism $f \mapsto (f(\gamma_1),\ldots,f(\gamma_m))$. Let $\Omega$ be the set of all functions from $\Gamma$ to $\Delta$. Then the \emph{wreath product} of $G_1$ by $G_2$ with respect to the set $\Gamma$, denoted by $G_1 \wr G_2 $, is defined to be the semidirect product of $B \rtimes G_2$ via $f^{g_2}(\gamma):=f(\gamma^{g_2^{-1}})$ for all $f \in \mathrm{Fun}(\Gamma,G_1), \gamma \in \Gamma$ and $g_2 \in G_2$. Clearly, $|G_1 \wr G_2|=|G_1|^{m}|G_2|$.

\subsection{Computational Complexity} \label{sec:Complexity}
We assume that the reader is familiar with standard complexity classes such as $\textsf{P}, \textsf{NP}, \textsf{L}$, and $\textsf{NL}$. Let $\textsf{FL}$ denote the class of logspace-computable functions. For a standard reference on circuit complexity, see \cite{VollmerText}. We consider Boolean circuits using the \textsf{AND}, \textsf{OR}, \textsf{NOT}, and \textsf{Majority}, where $\textsf{Majority}(x_{1}, \ldots, x_{n}) = 1$ if and only if $\geq n/2$ of the inputs are $1$. Otherwise, $\textsf{Majority}(x_{1}, \ldots, x_{n}) = 0$. In this paper, we will consider $\textsf{DLOGTIME}$-uniform circuit families $(C_{n})_{n \in \mathbb{N}}$. For this,
one encodes the gates of each circuit $C_n$ by bit strings of length $O(\log n)$. Then the circuit family $(C_n)_{n \geq 0}$
is called \emph{\textsf{DLOGTIME}-uniform}  if (i) there exists a deterministic Turing machine that computes for a given gate $u \in \{0,1\}^*$
of $C_n$ ($|u| \in O(\log n)$) in time $O(\log n)$ the type of gate $u$, where the types are $x_1, \ldots, x_n$, \textsf{NOT}, \textsf{AND}, \textsf{OR}, or \textsf{Majority} gates,
and (ii) there exists a deterministic Turing machine that decides for two given gates $u,v \in \{0,1\}^*$
of $C_n$ ($|u|, |v| \in O(\log n)$) and a binary encoded integer $i$ with $O(\log n)$ many bits
in time $O(\log n)$ whether $u$ is the $i$-th input gate for $v$.

\begin{definition}
Fix $k \geq 0$. We say that a language $L$ belongs to (uniform) $\textsf{NC}^{k}$ if there exist a (uniform) family of circuits $(C_{n})_{n \in \mathbb{N}}$ over the $\textsf{AND}, \textsf{OR}, \textsf{NOT}$ gates such that the following hold:
\begin{itemize}
\item The $\textsf{AND}$ and $\textsf{OR}$ gates take exactly $2$ inputs. That is, they have fan-in $2$.
\item $C_{n}$ has depth $O(\log^{k} n)$ and uses (has size) $n^{O(1)}$ gates. Here, the implicit constants in the circuit depth and size depend only on $L$.

\item $x \in L$ if and only if $C_{|x|}(x) = 1$. 
\end{itemize}
\end{definition}

Now the complexity class $\textsf{NC}$ is defined as:
\[
\textsf{NC} := \bigcup_{k \in \mathbb{N}} \textsf{NC}^{k}.
\]

\noindent We also allow circuits to compute functions by using multiple output gates. 

\begin{definition}
Fix $k \geq 0$. We say that a language $L$ belongs to (uniform) $\textsf{RNC}^{k}$ if there exists a polynomial $p_{L}(n)$ depending only on $L$, and a language $L'$ in (uniform) $\textsf{NC}^{k}$ such that the following hold. Let $(C_n)_{n \in \mathbb{N}}$ be a (uniform) family of circuits witnessing $L'$ belonging to $\textsf{NC}^k$.
\begin{itemize}
\item If $x \in L$, then:
\[
\text{Pr}_{y \in \{0,1\}^{\leq p(|x|)}} [C_{n}((x,y)) = 1] \geq 1/2, \text{ and }
\]

\item If $x \not \in L$, then for all $y \in \{0,1\}^{\leq p_{L}(|x|)}$, $C_{n}((x,y)) = 0$.
\end{itemize}

\noindent \\ Now define:
\[
\textsf{RNC} := \bigcup_{k \in \mathbb{N}} \textsf{RNC}^{k}.
\]
\end{definition}


\subsection{Permutation Group Algorithms}

In this paper, we will consider groups specified succinctly by generating sequences. The first model we will consider is that of permutation groups. Here, a group $G$ is specified by a sequence $S$ of permutations from $\Sym(n)$. 

Alternatively, we will consider the more general quotients of permutation groups model, which we will refer to as the \textit{quotients model}. To the best of our knowledge, this model was first introduced by Kantor and Luks \cite{KantorLuksQuotients}. Here, we will be given $K \trianglelefteq G \leq \Sym(n)$, represented in the following way:
\begin{itemize}
    \item $G, K$ are both specified by a generating sequence of permutations in $\Sym(n)$, and 
    \item Each element of $\textbf{G} := G/K$ is specified by a single coset representative.
\end{itemize}

The computational complexity for both the permutation group and the quotients models will be measured in terms of $n$. We recall a standard suite of problems with known $\textsf{NC}$ solutions in the setting of permutation groups.

\begin{lemma}[{\cite{BabaiLuksSeress}}]\label{PermutationGroupsNC}\label{lem:Order-NC}
Let $G$ be a permutation group. The following problems are in $\textsf{NC}$:
\begin{enumerate}[label=(\alph*)]
\item Compute the order of $G$.
\item Whether a given permutation $\sigma$ is in $G$; and if so, exhibit a word $\omega$ such that $\sigma = \omega(S)$. 

\item Find the kernel of any action of $G$.
\item Find the pointwise stabilizer of $B \subseteq [m]$.
\item Find the normal closure of any subset of $G$.

\item For $H \leq \Sym(n)$ such that $G$ normalizes $H$, compute $C_{G}(H)$.

\item For $H \leq \Sym(n)$ such that $G$ normalizes $H$, compute $G \cap H$.
\end{enumerate}
\end{lemma}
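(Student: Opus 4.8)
The plan is to follow the strategy of Babai, Luks, and Seress \cite{BabaiLuksSeress}. First I would reduce all seven items to a single core task: computing, in $\NC$, a \emph{strong generating set} for $G = \langle S\rangle \le \Sym(n)$ relative to the pointwise-stabilizer chain for an arbitrary prescribed ordering of $[n]$ --- equivalently, solving membership together with order, since everything else follows. Indeed: (a) $|G|$ is the product of the basic orbit lengths along the chain; (d) the pointwise stabilizer of $B$ is the $|B|$-th term of the chain built for an ordering of $[n]$ that begins with $B$; (b) $\sigma \in G$ iff $|\langle S \cup \{\sigma\}\rangle| = |G|$, and a witnessing word is recovered from the transversal data; (c) for an action $\phi\colon G \to \Sym(\Delta)$ with $|\Delta| = n^{O(1)}$ and $\phi$ computable in $\NC$, the kernel is the pointwise stabilizer of $\Delta$ in $\langle (s,\phi(s)) : s\in S\rangle \le \Sym([n]\sqcup\Delta)$, read back on $[n]$, which reduces to (d); (e) the normal closure is obtained by conjugating the given subset through the transversal chain of the strong generating set; and (f), (g) follow by further reductions that exploit that $G$ normalises $H$ --- e.g., an element of $C_G(H)$ is pinned down by its action on one point per $H$-orbit, turning $C_G(H)$ into a membership computation in an associated group of polynomial degree, and $H \trianglelefteq \langle G,H\rangle$ lets $G\cap H$ be extracted from the action of $G$ on the $H$-orbit structure.

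The core task is where the difficulty lies. The benign ingredients are: orbits of $\langle S\rangle$, via iterated squaring of a Boolean reachability matrix (in $\NC$); for each transitive constituent, a breadth-first spanning tree of its Schreier graph together with the transversal \emph{permutations} (as arrays on $[n]$, not as words), obtained as balanced prefix products of generators along tree paths (in $\NC$); and the Schreier generators of a point stabilizer, all produced in parallel. The obstacle --- and I expect it to be the main one, as it is essentially the whole technical content of \cite{BabaiLuksSeress} --- is that the classical Schreier--Sims procedure builds the stabilizer chain in up to $n$ \emph{sequential} rounds, and naive iteration both blows up the generating sets and incurs linear depth; for groups such as $\Sym(n)$ itself the chain genuinely has $n-1$ nontrivial levels, so no uniform polylog-round Schreier--Sims can exist.

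The resolution rests on a structural dichotomy. A transitive $G$ of degree $n$ either (i) has a base of polylogarithmic size, in which case its stabilizer chain has only $\operatorname{polylog}(n)$ strictly descending levels (each strict drop at least doubles the order, and $|G| \le n^{O(\operatorname{polylog} n)}$), so a sparse, parallel Schreier--Sims processes it in $\operatorname{polylog}(n)$ rounds, each in $\NC$; or (ii) it has a large symmetric/alternating section, in which case --- appealing to the structure theory of primitive permutation groups, after reducing to the primitive case via $\NC$ block-system computation --- its structure (a controlled action of $\Sym(m)$ or $A_m$ on subsets or partitions, possibly inside product and wreath constructions) is recognised explicitly in $\NC$, and order, membership, and stabilizers are read off from the known structure of symmetric groups. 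Finally, an intransitive $G$ is assembled from its transitive constituents as a subdirect product: organising the orbits in a balanced binary tree and, at each of the $O(\log n)$ merge levels, recomputing a strong generating set for the image of $G$ on a union of orbits from those of the two halves together with the relevant projection kernels, all in $\NC$. This dichotomy and the subdirect-product bookkeeping --- together with the bounds on primitive groups that underlie them (whose sharp forms rest on the classification of finite simple groups) --- constitute the heart of the argument; the seven items then follow routinely as above.
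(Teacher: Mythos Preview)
The paper does not prove this lemma at all: it is stated as a citation of \cite{BabaiLuksSeress} and used as a black box, with no proof or even a proof sketch supplied. Your proposal therefore goes well beyond what the paper does---you have outlined the actual Babai--Luks--Seress argument, whereas the paper simply quotes the result.

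As a sketch of the BLS strategy your outline is broadly faithful: the reduction of (a)--(g) to the core task of building a strong generating set in $\NC$, the small-base vs.\ large-alternating-section dichotomy for transitive groups (with CFSG-dependent bounds on primitive groups underpinning it), and the subdirect-product assembly for the intransitive case are indeed the backbone of \cite{BabaiLuksSeress}. A couple of your reductions are a bit loose---for instance, the argument you give for (f) (that an element of $C_G(H)$ is determined by its action on one point per $H$-orbit) is not quite right as stated, and (g) in BLS goes through the normal-closure and kernel machinery rather than ``action on the $H$-orbit structure''---but these are minor and the overall plan is sound. For the purposes of this paper, however, none of this is needed: a one-line citation is all the authors provide, and all that is required.
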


\noindent We easily extend a subset of these problems to the setting of quotients. 

\begin{lemma} \label{QuotientsNC}
Let $\textbf{G} = G/K$ be a quotient of a permutation group. The following problems are in $\textsf{NC}$:
\begin{enumerate}[label=(\alph*)]
\item Compute the order of $\textbf{G}$.

\item Given a permutation $\sigma$, test whether $\sigma K$ is in $\textbf{G}$; and if so, exhibit a word $\omega$ that evaluates to $\sigma K$. 
\end{enumerate}
\end{lemma}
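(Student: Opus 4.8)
The plan is to reduce both parts directly to their permutation-group analogues in \Lem{PermutationGroupsNC}, together with the standard fact that arithmetic on integers with polynomially many bits is $\textsf{NC}$-computable.

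For part (a), recall that $|\textbf{G}| = |G/K| = |G|/|K|$. By \Lem{lem:Order-NC}(a) we can compute both $|G|$ and $|K|$ in $\textsf{NC}$; each is an integer with $O(n\log n)$ bits. Since integer division on such inputs lies in $\textsf{NC}$ (indeed in $\TCz$), we obtain $|\textbf{G}|$ in $\textsf{NC}$.

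For part (b), the key observation is that membership in $\textbf{G}$ reduces to membership in $G$. Let $S$ be the given generating sequence for $G$, so that the images $\overline{S} = \{\, sK : s \in S \,\}$ generate $\textbf{G}$. Given $\sigma \in \Sym(n)$, the coset $\sigma K$ lies in $\textbf{G}$ if and only if $\sigma K = gK$ for some $g \in G$, which, since $K \leq G$, holds if and only if $\sigma \in G$: the forward direction gives $g^{-1}\sigma \in K \leq G$, hence $\sigma \in G$, and the converse is immediate (take $g = \sigma$). We therefore invoke the $\textsf{NC}$ membership test of \Lem{lem:Order-NC}(b) on $\sigma$ relative to the sequence $S$. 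If it reports $\sigma \notin G$, we output that $\sigma K \notin \textbf{G}$. Otherwise it returns a word $\omega$ with $\omega(S) = \sigma$; evaluating the same word on $\overline{S}$ yields $\omega(\overline{S}) = \omega(S)K = \sigma K$, so $\omega$ is the required witness.

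Each step is a single call to an $\textsf{NC}$ subroutine composed with $\textsf{NC}$-computable integer arithmetic, so the whole procedure is in $\textsf{NC}$. There is no substantial obstacle here: the lemma is a routine lift of \Lem{PermutationGroupsNC} to the quotients model, and its purpose is simply to record the primitives on $\textbf{G} = G/K$ that we will use later; the only point requiring (minor) care is the reduction of coset membership in $G/K$ to membership in $G$, which we have spelled out above.
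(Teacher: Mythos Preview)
Your proof is correct and follows essentially the same approach as the paper's: compute $|G|$ and $|K|$ in $\textsf{NC}$ and divide for part (a), and reduce $\sigma K \in \textbf{G}$ to $\sigma \in G$ via the $\textsf{NC}$ constructive membership test for part (b). Your write-up is in fact a bit more careful than the paper's, spelling out the division step and the justification that the returned word evaluates to $\sigma K$ in the quotient.
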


\begin{proof}
We proceed as follows.
\begin{enumerate}[label=(\alph*)]
\item As $G, K$ are permutation groups, we can compute $|G|, |K|$ in $\textsf{NC}$ \cite{BabaiLuksSeress}.

\item We note that $\sigma K \in \textbf{G}$ if and only if $\sigma \in G$. So we use the $\textsf{NC}$ constructive membership test from \cite{BabaiLuksSeress} to test whether $\sigma \in G$.  \qedhere
 \end{enumerate}
\end{proof}

\begin{lemma} \label{lem:List}
Let $c > 0$. Let $K \trianglelefteq G \leq \Sym(n)$, and let $\textbf{G} := G/K$. If $|\textbf{G}| \leq n^{c}$, then we can list $\textbf{G}$ in $\textsf{NC}$.
\end{lemma}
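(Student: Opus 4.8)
The plan is to reduce listing $\textbf{G} = G/K$ to a problem we can already solve in $\textsf{NC}$, namely computing orders of permutation groups and doing constructive membership tests (Lemma~\ref{QuotientsNC}). Since $|\textbf{G}| \le n^{O(1)}$, the output list has polynomial length, so there is no size obstruction to an $\textsf{NC}$ circuit; the only issue is producing the list in $O(\log^c n)$ depth without sequentially walking through cosets.

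First I would observe that listing $\textbf{G}$ is equivalent to producing a set $T \subseteq G$ of coset representatives, one per coset of $K$ in $G$, since each element of $\textbf{G}$ is represented by a single permutation and we can normalize representatives afterward. To get such a $T$ in parallel, I would build a strong generating set for $G$ relative to a base, which is computable in $\textsf{NC}$ \cite{BabaiLuksSeress}, and likewise for $K$. From the two stabilizer chains one reads off transversals; multiplying out the transversal elements along the chain for $G$ and then sieving modulo $K$ (using the $\textsf{NC}$ membership test for $K$ to decide, for any two candidate representatives $g, g'$, whether $gK = g'K$, i.e. whether $g^{-1}g' \in K$) yields one representative per coset. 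Because $|G/K| \le n^{O(1)}$, the number of products we need to form is polynomial, and each product of permutations is $\textsf{NC}$-computable; all the membership tests run in parallel.

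Concretely: enumerate in parallel a polynomial-size candidate set of words in the generators of $G$ of bounded length (or, more cleanly, all sifted products arising from the $G$-transversals) that is guaranteed to hit every coset of $K$ — this is where one uses that the transversal products along the stabilizer chain for $G$ already cover all of $G$, hence all cosets. For each ordered pair of candidates, test coset equality via $K$-membership; this defines an equivalence relation on the candidate set whose classes are in bijection with $\textbf{G}$. Picking the lexicographically least candidate in each class (an $\textsf{NC}$ operation, since it is a minimum over a polynomial-size set determined by a polynomial-size relation) gives the canonical representative list, which we output as the listing of $\textbf{G}$.

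The main obstacle is ensuring the candidate set is simultaneously (i) polynomial in size, (ii) guaranteed to meet every coset of $K$, and (iii) generated in $\textsf{NC}$ — a naive enumeration of all short words in the generators of $G$ can be exponentially large, so one genuinely needs the structured transversals from the $\textsf{NC}$ strong-generating-set construction rather than brute force. Once that set is in hand, the rest (parallel membership tests, forming the equivalence classes, and selecting minima) is routine $\textsf{NC}$ work using Lemma~\ref{PermutationGroupsNC} and Lemma~\ref{QuotientsNC}.
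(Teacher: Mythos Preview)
The proposal has a genuine gap at precisely the point you flag as ``the main obstacle.'' Multiplying out the transversal elements along the stabilizer chain for $G$ yields $\prod_i |R_i| = |G|$ products --- one for each element of $G$ --- and $|G|$ may be as large as $n!$ even when $|G/K|$ is polynomial. Nothing in the strong-generating-set data for $G$ and $K$ separately tells you which polynomially many of these products already cover all cosets of $K$; ``structured transversals'' versus ``brute force'' is a false dichotomy here, because the structured transversals enumerate exactly the same exponentially large set (all of $G$), just in a tidier order. So the candidate set you describe fails your condition (i), and you never explain an alternative that succeeds.

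The paper sidesteps this by working entirely in the quotient via iterated squaring (Wolf's method). Take $T_0$ to be the given generators (viewed as cosets of $K$), and set $T_{i+1} = \{xy : x, y \in T_i\}$, deduplicated modulo $K$ using the $\textsf{NC}$ membership test for $K$. After $O(\log |\textbf{G}|) = O(\log n)$ rounds one has $T_i = \textbf{G}$; crucially, at every round $|T_i| \le |\textbf{G}| \le n^{O(1)}$, so each round involves only polynomially many permutation multiplications and $K$-membership tests. The total depth is $O(\log n)$ rounds times $\textsf{NC}$ per round. The point is that deduplicating modulo $K$ \emph{at each step} is what keeps the working set polynomial-sized --- and this is exactly the mechanism your transversal-product approach lacks.
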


\begin{proof}
We begin by computing $|\textbf{G}|$ in $\textsf{NC}$ (Lemma~\ref{QuotientsNC}). Now let $S$ be the set of generators we are given for $\textbf{G}$. Wolf \cite{Wolf} showed that we can write down $\langle S \rangle$ in $O(\log |\textbf{G}|)$ rounds. We recall Wolf's procedure here. Let $T_{0} = S \cup \{1\}$. Now for $i \geq 0$, we construct $T_{i+1}$ by computing $xy$, for each $x, y \in T_{i}$. Our procedure terminates when $|T_{i}| = |\textbf{G}|$. 

We note that Wolf is considering quasigroups given by their multiplication tables, in which case we can multiply two elements in $\textsf{AC}^{0}$. However, we are dealing with quotients of permutation groups. In particular, any element of $\textbf{G}$ is represented as a permutation in $\Sym(n)$. Note that multiplying two permutations is $\textsf{FL}$-computable \cite{COOK1987385}. Thus, given $T_{i}$, we can compute $T_{i+1}$ in $\textsf{FL}$. The result follows.
\end{proof}

\subsection{Background on Simple Groups} \label{subsec: simplegroup}
In this section, we examine classical simple groups, exceptional simple groups of Lie type, and other essential concepts related to simple groups. For a comprehensive treatment of the relevant details concerning simple groups, the reader is referred to \cite[Section 2.2]{DharaPhDThesis}. For a more in-depth exploration, further references include \cite{carter-book}, \cite{KleidmanLiebeck}, and \cite{WLS}.

\noindent \subparagraph*{Classical Simple Groups.} Let $V$ be a vector space of dimension $d$ over $\mathbb{F}_{q}$, where $q=p^{e}$. We write $\mathrm{GL}(d,q)$ to denote the \emph{general linear group}, which is the group of all invertible linear transformations from $V$ to $V$. The \emph{special linear group} (denoted by $\mathrm{SL}(d,q)$) is a subgroup of $\mathrm{GL}(d,q)$ containing all invertible linear transformations with determinant $1$. The center $Z({\rm{GL}}(d,q))$ consists of all transformation of the form $T(x)=\beta x$ for $\beta (\neq 0) \in \mathbb{F}_{q}$. The factor group $\mathrm{SL}(d,q)/Z(\mathrm{SL}(d,q)$ is called the \emph{projective special linear group} and is denoted by $\PSL(d,q)$. It is known that $\PSL(d,q)$ is a simple group with few exceptions \cite{carter-book,WLS}.

Now we define the simple group $\PSp(2d, q)$. Let $V$ be a vector space of dimension $2d$ over $\mathbb{F}_{q}$, where $q=p^{e}$. The symplectic classical group $\mathrm{Sp}(2d,q)$ is the set of invertible linear transformations of a $2d$-dimensional vector space over the field $\mathbb{F}_{q}$ which preserves a non-degenerate, alternating, skew-symmetric bilinear form (see, e.g., \cite[Page~13]{KleidmanLiebeck}). It is known that any non-degenerate, skew-symmetric, bilinear form can be represented with respect to some basis by the matrix \[X= \begin{bmatrix}
    0 & 1 & & & & & & & \\
    -1 & 0 & &  & &   & \bf{0} & &  \\
   & & 0 & 1  & &  & & &\\
 & & -1 & 0 & &  & & & \\
   & \bf{0}  & & & &\ddots & & & \\
   & & & & & & & 0 & 1   \\
   & & & & & & & -1 & 0   \\
\end{bmatrix}.\] 

Let $T$ be the matrix corresponding to a linear transformation $\tau$. We have that $\tau \in {\rm{Sp}}(2d,q)$ if and only if $T^tXT=X$, where $T^t$ is the transpose of $T$. The center  $Z({\rm{Sp}}(2d,q))$ consists of the linear transformations $f(x)=\beta x$ where $\beta=\pm 1$. The factor group ${\mathrm{Sp}}(2d,q)/Z({\rm{Sp}}(2d,q))$ is called \emph{projective symplectic group} $\PSp(2d,q)$. The group $\PSp(2d,q)$ is a simple  provided ${d} \geq 2$ except when ${\rm PSp}(4,2)$. In the case when $\mathbb{F}_{q}$ is of characteristic $2$, we have $\PSp(2d,q)={\mathrm{Sp}}(2d,q)$.

Let us now define the simple group $\POmegaPlus(2d, q)$.  Let $V$ be vector space of dimension $m \geq 2$ over  $\mathbb{F}_{q}$, where $q=p^{e}$ and $p$ odd. Let $Q$ be a quadratic form whose associated bilinear form is $\mathfrak{f}_{Q}$. Then, the isometries of $V$ are called orthogonal linear transformations; they comprise the orthogonal group ${\rm{O}}(V, Q)$ \cite{LarryBook, WLS}. Thus 

\[{\rm{O}}(V,Q)=\{\tau \in \mathrm{GL}(m,q) | Q(\tau u)= Q(u) \text{ for all } u \in V\} \leq   \mathrm{GL}(m,q)\]

We denote by $\Omega(V,Q)$ the commutator subgroup of ${\rm{O}}(V,Q)$. Let $Z({\rm{O}}(V,Q))$ be the center of ${\rm{O}}(V,Q)$. Then the corresponding projective group is defined as \[\mathrm{P\Omega}(V,Q)=\Omega(V,Q)/(Z({\rm{O}}(V,Q)) \cap \Omega(V,Q)).\]

Now assume that $m=2d$. In this case, there are two non-equivalent quadratic forms on $V$, say $Q$ and $Q'$; these forms give rise to two different orthogonal groups (see, e.g., section 1.4, \cite{carter-book}, \cite{KleidmanLiebeck}). These quadratic forms are represented by the following matrices $X$ and $X'$, respectively,

$$X=\begin{bmatrix}
    0 & I_{d} \\
    I_{d} & 0
\end{bmatrix} \text{ and } X'= \begin{bmatrix}
    0_{d-1} & & & I_{d-1} & 0 & 0\\
    
    &  & & & \vdots & \vdots \\
    I_{d-1} &  &  & 0_{d-1}  & 0 & 0\\
    0 & & \ldots & 0 & 1 & 0 \\
    0 & & \ldots & 0 & 0 & - \epsilon 
\end{bmatrix},$$

where $\epsilon$ is a non-square element in $\mathbb{F}_{q}$. The orthogonal groups corresponding to these quadratic forms are denoted by
$O^{+}(2d,q)=O(V,Q)$ and $O^{-}(2d,q)=O(V,Q')$, respectively. Similarly, $\Omega^{+}(2d, q)$ is the commutator subgroup of $O^{+}(2d, q)$, and $\Omega^{-}(2d, q)$ is the commutator subgroup of $O^{-}(2d, q)$. The associated simple groups corresponding to $O^{+}(2d,q)$ and $O^{-}(2d,q)$ are denoted by $\mathrm{P\Omega}^{+}(2d,q)$ and  $\mathrm{P\Omega}^{-}(2d,q)$, respectively (see, e.g., \cite[Pages~4-7]{carter-book}). The groups ${\rm P}\Omega^{+}(2{d},q)$ and ${\rm P}\Omega^{-}(2{d},q)$ are simple when ${d} \geq 4$.

A subspace $W$ of $V$ is said to be \emph{anisotropic} with respect to a quadratic form $Q$ if the restriction of $Q$ to the subspace is nondegenerate. A bilinear map $\mathfrak{f}_Q$ is the null-form if $\mathfrak{f}_Q(u,v)=0$ for all $u,v \in V$. A subspace $W$ of $V$ is said to be \emph{isotropic} with respect to quadratic form $Q$ if the restriction of $\mathfrak{f}_Q$ to $W$ is the null-form and $Q(w)=0$ for all $w \in W$ \cite{ref4MV, ref7VClassical}.

Now we describe various automorphisms of $\SL(d,q)$, $\Omega^+(2d,3)$ ($d \geq 4$), or $\mathrm{Sp}(4, 2^{e})$ ($e \geq 2$)(see \cite{carter-book} for more details).

\begin{definition}\label{def:automorphisms}
Let $G$ be one of the following: $\SL(d,q)$, $\Omega^+(2d,3)$ ($d \geq 4$), or $\mathrm{Sp}(4, 2^{e})$ ($e \geq 2$). We introduce several automorphisms.
\begin{enumerate}[label=(\alph*)]
\item An automorphism $\mathbbm{i} : G \to G$ is called an \emph{inner automorphism} if there exists a matrix $F \in G$ such that $\mathbbm{i}(U) = F^{-1}UF$ for all $U \in G$.

\item An automorphism $\mathbbm{d} : G \to G$ is called a \emph{diagonal automorphism} if there exists a diagonal matrix $F \in \text{GL}(d,q)$ such that $\mathbbm{d}(F) = F^{-1}UF$ for all $U \in G$.

\item Let $\sigma$ be an automorphism of the field $\mathbb{F}_{q}$. A \emph{field automorphism} $\mathbbm{f}_{\sigma} : G \to G$ is defined as $\mathbbm{f}_{\sigma}([u_{ij}]) = [\sigma(u_{ij})]$, where $[u_{ij}]$ denotes the matrix in $G$ whose $(i,j)$th entry is $u_{ij}$.

Let $q=p^e$. Let us consider the Frobenius automorphism $\sigma_0$ of $\aut(\mathbb{F}_q)$ given by $a \mapsto a^p$. The automorphism $\sigma_0$ is a generator of $\aut(\mathbb{F}_q)$ of order $e$. Therefore, $\mathbbm{f}_{\sigma}=\mathbbm{f}_{\sigma_0}^{t'}$ for some $1 \leq t' \leq e$.

\item The \emph{graph automorphism} $\mathbbm{g} : G \to G$ is defined as $\mathbbm{g}(U) = (U^{t})^{-1}$ for all $U \in G$. Here, $U^{t}$ denotes the transpose of $U$.
\end{enumerate}
\end{definition}

A map $f: V \rightarrow V$ is called \emph{semi-linear transformation} of $V$ if there is an automorphism $\sigma \in \aut(\mathbb{F}_{q})$ such that for all $u,v \in V$ and $\beta \in \mathbb{F}_{q}$, $f(u+v)=f(u)+f(v)$ and $f(\beta v)=\sigma(\beta) f(v)$. A semi-linear transformation $f$ is invertible if $\{v\in V \mid f(v)=0\}=\{0\}$. The group of all invertible semi-linear transformations is called \emph{general semi-linear group}, denoted by $\Gamma{\rm L}(d,q)$ \cite{KleidmanLiebeck}. The \emph{projective general semi-linear group}, ${\rm P}\Gamma{\rm L}(d,q)$, is obtained by factoring out the scalars in $\Gamma{\rm L}(d,q)$

However, for the current purpose the only fact that we need regarding ${\rm P}\Gamma{\rm L}(d,q)$ is that it can be regarded as a subgroup of $\aut(\SL(d,q))$ and automorphisms of $\SL(d,q)$ that contain the graph automorphisms $\mathbbm{g}$ in its decomposition are exactly the elements in $\aut(\SL(d,q)) \setminus {\rm P}\Gamma{\rm L}(d,q)$ \cite{KleidmanLiebeck,carter-book}.

\noindent \subparagraph*{Exceptional Groups of Lie Type.} In this section, we define the Chevalley generators for $\mathcal{L}(\mathbb{F}_{q}) \in \{\G(3^e), \F(2^e), \Esix(q)\}$. Unlike the classical simple groups, the exceptional simple groups of Lie type do not have matrix representations. Thus, we opt for this Chevalley presentation. Additionally, the Chevalley generator framework provides a well-established method for studying the automorphisms of these groups. Let us now define the Chevalley generators of the simple group $\mathcal{L}(\mathbb{F}_{q})$ from \cite[Chapter 3-4, 7]{carter-book}.

Let $\mathcal{L}$ be a Lie algebra over $\mathbb{C}$. Let $\Phi$ be a root system corresponding to $\mathcal{L}$ and $\Pi \subsetneq \Phi$ be a fundamental system of roots. Let $\mathcal{L}_{\mathbb{Z}}$ be the subset of $\mathcal{L}$ of all linear combinations of basis elements with coefficients in the ring $\mathbb{Z}$ of integers. Define $\mathcal{L}_{\mathbb{F}_q}$ using the tensor product as follows: $\mathcal{L}_{\mathbb{F}_q}= \mathbb{F}_q  \otimes \mathcal{L}_{\mathbb{Z}}$ (see e.g., \cite[Section~4.4, Page 62]{carter-book}). The automorphisms of $\mathcal{L}_{\mathbb{F}_q}$ are the Chevalley generators of $\mathcal{L}(\mathbb{F}_{q})$. For each $r \in \Phi$ and $\beta \in \mathbb{F}_q$, $x_{r}(\beta)$ is an automorphisms of $\mathcal{L}_{\mathbb{F}_q}$. The action of $x_{r}(\beta)$ on the generating set of $\mathcal{L}_{\mathbb{F}_q}$ has been explicitly defined in \cite[Chapters~3-4, 7]{carter-book} for the comprehensive details.

We have $\mathcal{L}(\mathbb{F}_{q}) = \langle \{x_{r}(\beta) \mid r \in \Phi \text{ and } \beta \in \mathbb{F}_{q}\} \rangle$ (see e.g., \cite[Section~4.4, Page~64]{carter-book}). E.g., let $\mathcal{L}=\G$, then $\mathcal{L}(\mathbb{F}_{q})= \G(q)$ is the exceptional group of Lie type obtained from $\mathcal{L}$. Also, we have $\G(q) = \langle \{ x_{r}(\beta) \mid r \in \Phi \text{ and } \beta \in \mathbb{F}_q\} \rangle$ and $|\Phi|=12$. Similarly, we can define the Chevalley generators for $\F(2^e), \Esix(q)$. In case of $\F(2^e)$, $|\Phi|=2^7 3^2$, and in case of $\Esix(q)$, we have $|\Phi|=2^7 3^4 5$ (see  \cite[Page~43]{carter-book}). Therefore, the size of the Chevalley generators is polynomial in $q$. Additionally, note that, since $|\Phi|$ is bounded by a constant, the size of a generating set $\{h_{r}, r \in \Pi, e_r, r \in \Phi \}$ of $\mathcal{L}_{\mathbb{F}_q}$ is also bounded by a constant.

Let us now describe the automorphisms of the Chevalley group $\mathcal{L}(\mathbb{F}_{q})$ (see e.g., \cite[Chapter~12]{carter-book}). Notice that it is enough to define the automorphism at the Chevalley generators of the group $\mathcal{L}(\mathbb{F}_{q})$. 

There are four types of automorphism of a finite simple group: the inner automorphism, the diagonal automorphism, the field automorphism, and the graph automorphism. Any automorphism of $\mathcal{L}(\mathbb{F})$ can be decomposed as a product of these four automorphisms \cite{carter-book}. We first define the inner and diagonal automorphisms of $\mathcal{L}(\mathbb{F}_{q})$.

Let $P=\mathbb{Z}\Phi$ be the set of all linear combinations of elements of $\Phi$ with rational integer coefficients. $P$ is the additive group generated by the roots of $\mathcal{L}$. A homomorphism from $P$ into the multiplicative group $\mathbb{F}^{*}_{q}$ is called $\mathbb{F}_{q}$-character of $P$. The $\mathbb{F}_{q}$-character of $P$ forms a multiplicative group and each $\mathbb{F}_{q}$-character $\chi$ gives rise to an automorphism $h(\chi)$ of $\mathcal{L}_{\mathbb{F}_{q}}$. The automorphisms $\mathcal{L}_{\mathbb{F}_{q}}$ of the form $h(\chi)$ form a subgroup $\hat{H}$ of the full automorphism group of $\mathcal{L}_{\mathbb{F}_{q}}$ (see e.g., \cite[Section~7.1, Page~98]{carter-book}). Let $H$ be the subgroup of $\hat{H}$ generated by the automorphisms $h_{r}(\beta)$ for all $r\in \Phi$, $\beta \in \mathbb{F}^{*}_{q}$. Then $G=\mathcal{L}(\mathbb{F}_{q})$ normalized by $\hat{H}$ in the group of all automorphisms of $\mathcal{L}_{\mathbb{F}_{q}}$. Thus, if $h(\chi) \in \hat{H}$ and $g \in G$, then a map $g \rightarrow h(\chi) g h(\chi)^{-1}$ is an automorphism of $G$. If $h(\chi) \in \hat{H} \setminus H$, then the automorphism is called a \emph{diagonal automorphism} of $G$. If $h(\chi) \in H$, then the automorphism is called an \emph{inner automorphism} of $G$ (see e.g., \cite[Section~12.2, Page~200]{carter-book}).

Any automorphism $\sigma_0$ of the field $\mathbb{F}_{q}$ induces a field automorphism, say $\mathbbm{f}_{\sigma_0}$, of $\mathcal{L}(\mathbb{F}_{q})$.
It is defined in the following manner: $\mathbbm{f}_{\sigma_0}: \mathcal{L}(\mathbb{F}_{q}) \rightarrow \mathcal{L}(\mathbb{F}_{q})$, $\mathbbm{f}_{\sigma_0}(x_{r}(\beta))=x_{r}(\sigma_0(\beta))$ for $r \in \Phi$. The set of field automorphisms of $\mathcal{L}(\mathbb{F}_{q})$ is a cyclic group isomorphic to $\aut(\mathbb{F}_{q})$ (see e.g., \cite[Section~12.2, Page~200]{carter-book}).

The graph automorphisms of the Chevalley group $\mathcal{L}(\mathbb{F}_{q})$ arise from the symmetries of the Dynkin diagram. A symmetry of the Dynkin diagram of $\mathcal{L}$ is a permutation $\varrho$ of the nodes of the diagram such that the number of bonds joining nodes $i.j$ is the same as the number of bonds joining $\varrho(i),\varrho(j)$ for any $i\neq j$. Let $r \in \Phi$, $\beta \in \mathbb{F}_{q}$, and let $r \rightarrow {\bar r}$ be a map of $\Phi$ into itself arising from a symmetry of the Dynkin diagram of $\mathcal{L}$ (see e.g., \cite[Sections~12.2-12.4]{carter-book} and \cite[Section~2]{auto-Chevalley-groups}).  The graph automorphisms $\mathbbm{g}$ of the Chevalley groups $\mathcal{L}(\mathbb{F}_{q}) \in \{\G(3^e), \F(2^e), \Esix(q)\}$ are defined in the following ways (\cite{carter-book}): 

\begin{enumerate}
    \item if $\mathcal{L}(\mathbb{F}_{q})=\G(3^e)$,
    $$  \mathbbm{g} (x_{r}(\beta))=x_{\bar {r}}(\beta^{\lambda(\bar {r})}),$$
    where $\lambda(\bar {r_i})=1$ if $r$ is short root and $3$ if $r$ is long root.
    
    \item $\mathcal{L}(\mathbb{F}_{q})=\F(2^e)$,
    $$  \mathbbm{g} (x_{r}(\beta))=x_{\bar {r}}(\beta^{\lambda(\bar {r})}),$$
    where $\lambda(\bar {r_i})=1$ if $r$ is short root and $2$ if $r$ is long root.
    
    \item $\mathcal{L}(\mathbb{F}_{q})=\Esix(q)$,
    $$  \mathbbm{g} (x_{r}(\beta))=x_{\bar {r}}(\gamma_{r} z),$$
    where $\gamma_{r} \in \mathbb{Z}$ can be chosen so that $\gamma_{r}=1$ if $r \in \Pi$, $\gamma_{r}=-1$ if $-r \in \Pi$.
\end{enumerate}

\begin{theorem}(cf. \cite[Theorem 12.5.1, p. 211]{carter-book})\label{carter Thm-decomposition}
Let $G$ be a classical simple group or an exceptional simple group of Lie type defined over a field $\mathbb{F}$. Then every automorphisms of $G$ can be decomposed as $ \mathbbm{idgf}$, where $\mathbbm{i}$, $\mathbbm{d}$,  and $\mathbbm{f}$ are inner, diagonal, field automorphisms respectively, and $\mathbbm{g}$ is the graph automorphism.
\end{theorem}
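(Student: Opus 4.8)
The plan is to reproduce the classical argument of Steinberg that underlies Carter's Theorem~12.5.1, peeling off one automorphism type at a time using the $(B,N)$-pair structure of $G = \mathcal{L}(\mathbb{F})$. Fix the defining characteristic $p$ (so $\mathbb{F} = \mathbb{F}_q$, $q = p^e$), a Borel subgroup $B = UH$ with $U$ its unipotent radical and $H$ a maximal torus, and the root subgroups $X_r = \{x_r(\beta) : \beta \in \mathbb{F}_q\}$ for $r \in \Phi$; for untwisted types $U$ is a Sylow $p$-subgroup of $G$. Let $\phi \in \aut(G)$ be arbitrary. \textbf{Step 1 (absorb an inner automorphism).} Since $\phi(U)$ is again a Sylow $p$-subgroup, Sylow's theorem provides $g \in G$ with $\mathbbm{i}_g(\phi(U)) = U$; replacing $\phi$ by $\mathbbm{i}_g \circ \phi$ I may assume $\phi(U) = U$, hence $\phi(B) = \phi(N_G(U)) = B$, and after a further adjustment by an inner automorphism coming from $U$ (all maximal tori of $B$ are $U$-conjugate) I may assume $\phi(H) = H$ as well.

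\textbf{Step 2 (absorb a graph automorphism).} Now $\phi$ permutes the minimal parabolic subgroups containing $B$, and this permutation preserves their incidence data, hence induces a symmetry $\varrho$ of the Dynkin diagram of $\mathcal{L}$. Composing $\phi$ with the graph automorphism $\mathbbm{g}$ attached to $\varrho^{-1}$, I may assume $\phi$ stabilizes each minimal parabolic $\langle B, X_{-r_i}\rangle$ with $r_i \in \Pi$, and therefore stabilizes each simple root subgroup $X_{r_i}$ (and $X_{-r_i}$) setwise. \textbf{Step 3 (absorb a field automorphism).} For each $r_i \in \Pi$ the restriction $\phi|_{X_{r_i}}$ is an automorphism of $(\mathbb{F}_q, +)$, say $x_{r_i}(\beta) \mapsto x_{r_i}(\sigma_i(\beta))$; the Chevalley commutator relations together with the action of $H$ force all the $\sigma_i$ to coincide with a single field automorphism $\sigma \in \aut(\mathbb{F}_q)$ (in the characteristic-exceptional cases $\G(3^e)$ and $\F(2^e)$ one must simultaneously track the Frobenius twists $\beta \mapsto \beta^{\lambda(\bar r)}$ carried by $\mathbbm{g}$ in Step~2). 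Composing with $\mathbbm{f}_\sigma^{-1}$, I may assume $\phi$ fixes every $X_{\pm r_i}$ pointwise.

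\textbf{Step 4 (what remains is diagonal).} Since $U = \langle X_{r_i} : r_i \in \Pi\rangle$ and $U^- = \langle X_{-r_i} : r_i \in \Pi\rangle$ (positive root subgroups are built from the simple ones by iterated commutators) and $G = \langle U, U^-\rangle$, the automorphism $\phi$ is now determined entirely by its restriction to $H$; compatibility with the fixed root subgroups forces this restriction to be conjugation by an element of the extended torus $\widehat H$, which is by definition a diagonal automorphism $\mathbbm{d}$. Unwinding the three compositions gives $\phi$ as a product of an inner, a graph, a field, and a diagonal automorphism, and reordering the four factors into the normal form $\mathbbm{i}\,\mathbbm{d}\,\mathbbm{g}\,\mathbbm{f}$ uses only the standard commutation relations among these types ($\mathbbm{f}\mathbbm{g} = \mathbbm{g}\mathbbm{f}$, and conjugates of inner/diagonal automorphisms by graph or field automorphisms are again inner/diagonal).

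\textbf{Main obstacle and caveats.} The crux is Step~3: it is the only point where the multiplicative structure of $\mathbb{F}_q$ is genuinely used, and one must argue that the additive bijections $\sigma_i$ not only agree across all simple roots but are honest field automorphisms, while bookkeeping the Frobenius twists introduced by the graph automorphism in small characteristic. A second point requiring care is that the four-step scheme as stated is for untwisted classical and exceptional groups; for the twisted families (${}^{2}A_n$, ${}^{2}D_n$, ${}^{3}D_4$, ${}^{2}E_6$, and the Suzuki and Ree groups) the same strategy works but with the twisted $(B,N)$-pair and twisted root system replacing $\Phi$, and I would handle those by citing Carter while presenting the untwisted case in detail.
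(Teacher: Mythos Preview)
The paper does not supply a proof of this theorem at all; it is stated with a citation to Carter's book and used as a black box. There is therefore no ``paper's proof'' to compare against. Your sketch is essentially the classical Steinberg argument that Carter presents, so in spirit you are reproducing exactly the intended proof.

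That said, there is a genuine slip in the handoff between your Steps~3 and~4. At the end of Step~3 you write that after composing with $\mathbbm{f}_\sigma^{-1}$ you may assume $\phi$ fixes every $X_{\pm r_i}$ \emph{pointwise}. But if that were true, then since $G = \langle X_{r_i}, X_{-r_i} : r_i \in \Pi\rangle$, the automorphism $\phi$ would already be the identity and there would be no room for a nontrivial diagonal part. What actually happens is this: the additive bijections $\sigma_i$ in Step~3 are not bare field automorphisms but have the form $\beta \mapsto c_i\,\sigma(\beta)$ for a single $\sigma \in \aut(\mathbb{F}_q)$ and scalars $c_i \in \mathbb{F}_q^*$ (this is where the $H$-action argument you allude to is really used). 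After peeling off $\mathbbm{f}_\sigma$, the remaining map sends $x_{r_i}(\beta) \mapsto x_{r_i}(c_i\beta)$, i.e.\ it stabilizes each $X_{\pm r_i}$ \emph{setwise} and acts by scaling. The Chevalley commutator relations then force the $c_i$ to assemble into a character $\chi$ of the root lattice, which is precisely conjugation by some $h(\chi) \in \widehat H$. So Step~4 should read: what remains acts on each simple root group by a scalar, and these scalars constitute a diagonal automorphism. Your sentence ``$\phi$ is now determined entirely by its restriction to $H$'' is also backwards---it is determined by its effect on the $X_{r_i}$, and that effect is realized by an element of $\widehat H$.

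Fix that transition and the outline is sound.
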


\begin{remark}\label{baarnhielm Thm-decomposition}
Note that every automorphism of $G$ can also be decomposed as $\mathbbm{idfg}$ as described in \cite[Section 10]{baarnhielm2015}.
\end{remark}

\subsection{The Minimal Faithful Permutation degree of Simple and Almost simple groups} \label{sec:ConstructiveRecognition}

In this section, we will examine how to construct minimal faithful permutation representations of finite simple groups.  We will crucially use the following result of Kantor.

\begin{theorem}[{\cite{kantorPrimeOrderElement}}] \label{thm:Kantor}
If $G \leq \Sym(\Omega)$, with $|G| \geq |\Omega|^9$, then $G$ and $\Omega$ are (up to a permutation isomorphism):
\begin{itemize}
\item $G = \text{Alt}(r)$ for some $r$, and there exists some $k$ such that $\Omega$ is the set of all $k$-element subsets of $[r]$;

\item $G = \text{Alt}(r)$ for some $r$, and $\Omega$ is the set of partitions of the underlying set $[r]$ into blocks of size $k$ for some $k$; or 

\item $G$ is a classical group, and $\Omega$ is an orbit of subspaces of an underlying vector space.
\end{itemize}
\end{theorem}

In our setting, Theorem~\ref{thm:Kantor} essentially allows us to focus on classical simple groups $G$, where the underlying vector space $V$ satisfies $\text{dim}(V) > 8$ (see e.g., \cite{kantorPrimeOrderElement, KantorLuksMark}). Thus, we are able to avoid technicalities when the dimension of the underlying vector space is \emph{small}. Additionally, if the  simple group in question has order at most $n^9$, then we may write down its multiplication table in $\textsf{NC}$, using Lemma~\ref{lem:List}. 

We now recall key facts concerning minimal faithful permutation representations of certain classical simple groups. Let $V = \mathbb{F}_{q}^{m}$, and let $\overline{V}$ be an (orbit of) $1$-spaces of $V$. The \emph{natural action} of a classical group $G$ on $\overline{V}$ takes $g \in G$ and $[v] \in \overline{V}$, and sends $[v] \mapsto [gv]$. 

\begin{theorem} \label{thm:MinPerRepClassical}
We have the following.
\begin{enumerate}[label=(\alph*)]
\item Let $G = \PSL(d,q)$, with $d \geq 3$ and $(d,q) \neq (3,2), (4,2)$. Then $\mu(G) = (q^d - 1)/(q-1)$. Let $\overline{V}$ be the set of all $1$-spaces of $\mathbb{F}_{q}^{d}$. The natural action of $G$ on $\overline{V}$ yields a minimal faithful permutation representation of $G$ (see e.g., \cite{ref7VClassical, KantorLuksMark, MarkThesis}).

\item Let $G = \PSp(4, 2^e)$ with $e \geq 2$. Then $\mu(G) = (2^{4e} - 1)/(2^{e}-1)$. Let $\overline{V}$ be the set of all $1$-spaces of $\mathbb{F}_{2^e}^{4}$. The natural action of $G$ on $\overline{V}$ yields a minimal faithful permutation representation of $G$ (see e.g., \cite{ref7VClassical, KantorLuksMark, MarkThesis}).

\item Let $G = \POmegaPlus(2d,q)$, where (i) $d = 4$ and $q \geq 4$; or (ii) $d \geq 4$ and $q = 3$. Let $Q$ be the corresponding quadratic form, and $\mathfrak{f}_{Q}$ be the associated bilinear form. Let $\overline{V}$ be the set of $1$-spaces of $\mathbb{F}_{q}^{2d}$ that are anisotropic with respect to $\mathfrak{f}_{Q}$, spanned by vectors with equal values on $Q$. The natural action of $G$ on $\overline{V}$ yields a minimal faithful permutation representation of $G$ (see e.g., \cite{ref4MV, KantorLuksMark, MarkThesis}).
\end{enumerate}
\end{theorem}

\noindent For the remaining classical simple groups, we will not need to know detailed descriptions of their minimal faithful permutation representations on the appropriate (orbit of) $1$-spaces of the underlying vector space $V$. We refer to \cite{ref7VClassical} for an overview of the linear, symplectic, and unitary groups; and \cite{ref4MV} for the orthogonal simple groups.

We first recall the following result due to Kantor, Luks, and Mark \cite{KantorLuksMark} (see \cite{MarkThesis} for more details).

\begin{theorem} \label{thm:ConstructiveRecognition}
Let $K \trianglelefteq G \leq \Sym(n)$, and let $\mathbf{G} := G/K$. If $\mathbf{G}$ is a non-abelian simple group, then we can, in $\textsf{NC}$, determine the name $S$ of $\mathbf{G}$, as well as construct an isomorphism $\varphi$ from $\mathbf{G}$ to an isomorphic copy $\mathbf{G}_{1}$ of $\mathbf{G}$. Furthermore, $\varphi$ is a minimal permutation representation of $\mathbf{G}$. 

Suppose first that $\textbf{G}$ is a sporadic group, exceptional group of Lie type, an alternating group of order less than $n^9$, or a classical simple group of dimension $m$ over $\mathbb{F}_{q}$ of order less than $n^9$ that does not admit a minimal faithful permutation representation over an appropriate orbit of $1$-spaces from $\mathbb{F}_{q}^{m}$. Then in these cases, $\varphi$ details the action of $\textbf{G}$ on the cosets $\textbf{G}/\textbf{P}$ of a maximal subgroup $\textbf{P}$. In particular, our procedure returns $\textbf{P}$ explicitly, including a $4$-element generating sequence.

Suppose instead that $\textbf{G}$ is a classical simple group, then  of dimension $m$ over $\mathbb{F}_{q}$ that admits a minimal faithful permutation representation detailing the action of $\textbf{G}$ on an appropriate orbit of $1$-spaces of $\mathbb{F}_{q}^{m}$. Then the following hold:

\begin{enumerate}
\item $\varphi$ is a minimal faithful permutation representation, detailing the action of $\textbf{G}$ on the appropriate orbit $\overline{V}$ of $1$-spaces of $\mathbb{F}_{q}^{m}$. In particular, if $\textbf{G}$ is one of the cases outlined in Theorem~\ref{thm:MinPerRepClassical}, then the permutation representation is that described in Theorem~\ref{thm:MinPerRepClassical}.\footnote{For $m \geq 4$ and $q \geq 3$, the action of $\POmegaPlus(2m, q)$ on the set $\Delta_{C}$ of anisotropic $1$-spaces of the underlying vector space $\mathbb{F}_{q}^{2m}$. When $q > 3$, $|\Delta_{C}| = (q^{m}-1)(q^{m-1}+1)/(q-1)$. When $q = 3$, $|\Delta_{C}| = 3^{m-1}(3^{m}-1)/2$ \cite{ref4MV}. Kantor, Luks, and Mark \cite{KantorLuksMark, MarkThesis} recognize $\POmegaPlus(2m,q)$ by constructing the action $\POmegaPlus(2m,q)$ on $\Delta_{C}$. In particular, they construct a minimal faithful permutation representation of $\POmegaPlus(2m, q)$ when $q \geq 3$.

However, in the procedure to determine the name of the finite simple group \cite[Procedure for Problem V.55]{MarkThesis}, the case $\POmegaPlus(2m, q)$ is incorrectly handled by checking whether $|\Delta_{C}| = (q^{m}-1)(q^{m-1}+1)/(q-1)$. This is easily fixed by checking whether $|\Delta_{C}| = 3^{m-1}(3^{m}-1)/2$ when $q = 3$ (in which case, we have $\POmegaPlus(2d,3$); or $|\Delta_{C}| = (q^{m}-1)(q^{m-1}+1)/(q-1)$ when $q > 3$ (in which case, we have $\POmegaPlus(2m, q)$ for $q > 3$).}

\item The elements of $\mathbf{G}_{1}$ are matrices acting on the vector space $V = \mathbb{F}_{q}^{m}$ of dimension $m \in O(\log n)$,
\item The isomorphism $\varphi$ comes with a basis $\beta$ for $V$ and a data structure that:
\begin{enumerate}
\item Given a permutation $g \in \textbf{G}$, we can, in $\textsf{NC}$, construct the corresponding matrix representation $T$ with respect to $\beta$, and 
\item Given a matrix $T \in \mathbf{G}_{1}$, we can, in $\textsf{NC}$, construct the corresponding permutation $g \in \textbf{G}$. 
\end{enumerate}
\end{enumerate}

\noindent Furthermore, if $\mathbf{G}$ is instead an alternating group of order at least $n^9$, then $\varphi$ details the appropriate corresponding action outlined in Theorem~\ref{thm:Kantor}.
\end{theorem}

We note that $\PSL(2,5) \cong \text{Alt}(5)$. While there are $6$ one-dimensional subspaces of $\mathbb{F}_{5}^{2}$, $\mu(\PSL(2,5)) = \mu(\text{Alt}(5)) = 5$. Thus, the hypothesis about a classical simple group admitting a minimum permutation representation over an appropriate orbit of $1$-spaces is indeed necessary. However, every sufficiently large classical simple group admits a minimal faithful permutation representation over an appropriate orbit of $1$-spaces of the underlying vector space \cite{ref4MV, ref7VClassical, kantorPrimeOrderElement}.

\begin{proof}[Proof of Theorem~\ref{thm:ConstructiveRecognition}]
We first show how to reduce to the case when $K = 1$ in $\textsf{NC}$. Kantor, Luks, and Mark \cite[Section~3.6]{KantorLuksMark} established that there exists an $\textsf{NC}$-computable series $K = N_0 \trianglelefteq N_1 \trianglelefteq \cdots \trianglelefteq N_{r} = G$, where for each $0 \leq i < r$, $N_{i+1}/N_{i}$ is either abelian or a direct product of non-abelian simple groups. Furthermore, their procedure constructs a faithful permutation representation of each non-abelian simple group. Thus, we may determine if $\textbf{G} = G/K$ is simple in $\textsf{NC}$; and if so, construct a faithful permutation representation of $\textbf{G}$. Thus, we may now assume without loss of generality that $K = 1.$ 

Now in $\textsf{NC}$, we may compute $|\textbf{G}|$ (Lemma~\ref{PermutationGroupsNC}(a)). If $|\textbf{G}| \geq n^9$, then the result follows from Kantor, Luks, and Mark \cite{KantorLuksMark, MarkThesis}. 

Suppose instead that $|\textbf{G}| < n^9$. If $\textbf{G}$ is a classical simple group of dimension $m$ over $\mathbb{F}_{q}$ that admits a minimal faithful permutation representation over an appropriate orbit of $1$-spaces of $\mathbb{F}_{q}^{m}$, then we use Lemma~\ref{lem:MinPerClassicalSimple}. Otherwise, we use Lemma~\ref{Minper-Simple}. The result now follows.
\end{proof}

\begin{lemma}\label{Minper-Simple}
Let $G, K \leq \Sym(n)$ with $K \trianglelefteq G$. Let $\textbf{S} = G/K$ be a simple group such that $|\textbf{S}| < n^9$ or $\textbf{G}$ is sporadic. Then in $\textsf{NC}$, we can compute a minimal faithful permutation representation $\varphi : \textbf{S} \to \Sym(\mu(\textbf{S}))$ detailing the action of $\textbf{S}$ on the cosets $\textbf{S}/\textbf{P}$ of a maximal subgroup $\textbf{P}$. In particular, our procedure returns a $4$-element generating sequence for $\textbf{P}$. 
\end{lemma}

\begin{proof}
As $|\mathbf{S}| < n^9$, we may write down the multiplication table for $\textbf{S}$ in $\textsf{NC}$ (Lemma~\ref{lem:List}). As $\mathbf{S}$ is simple, we have from \cite{JohnsonMPD} that there exists a maximal subgroup $\textbf{P} \leq \textbf{S}$ such that the regular action of $\textbf{S}$ on the right cosets of $\textbf{S}/\textbf{P}$ is a minimal faithful permutation representation. Any maximal subgroup of a finite simple group is $4$-generated \cite{BurnessLiebeckShalev}. As we have constructed the multiplication table for $\mathbf{S}$, we can use the generator-enumeration strategy to examine all $4$-generated subgroups of $\textbf{S}$ in $\textsf{FL}$ \cite{TangThesis}. We take the subgroup $\textbf{P}$ minimizing $[\mathbf{S} : \mathbf{P}]$. Note that as $\mathbf{S}$ is simple, $\text{core}_{\mathbf{S}}(\mathbf{P}) = 1$. The result now follows.
\end{proof}

\begin{lemma} \label{lem:MinPerClassicalSimple}
Let $G, K \leq \Sym(n)$ with $K \trianglelefteq G$. Let $\textbf{S} = G/K$ such that $|\textbf{S}| < n^9$. Suppose that $\textbf{S}$ is a classical group of dimension $d$ over $\mathbb{F}_{q}$, and that we are given its standard name (e.g., $\PSL(d,q)$). Furthermore, suppose that $\textbf{S}$ admits a minimal faithful permutation representation detailing the action of $\textbf{S}$ on an appropriate set of $1$-spaces of $V = \mathbb{F}_{q}^{d}$. Then we can in $\textsf{NC}$, do the following.
\begin{enumerate}[label=(\alph*)]
\item Construct $\mathbb{F}_{q}^{d}$ by explicitly listing its elements, as well as a basis $\beta$.

\item Compute a minimal faithful permutation representation $\varphi : \textbf{S} \to \Sym(\mu(S))$ detailing the action of $\textbf{S}$ on the appropriate orbit of $1$-spaces $\overline{V}$ of $V$.

\item Construct the underlying form.
\end{enumerate}
\end{lemma}

\begin{proof}
We may use a standard $2$-element generating sequence of $\textbf{S}$, where the generators are represented as $d \times d$ matrices over $V$. In particular, as $|\textbf{S}| < n^9$, we can write down the multiplication table in $\textsf{NC}$ using Lemma~\ref{lem:List} (note that as $q, d \leq n$, we can multiply two matrices in $\textsf{NC}$). In particular, we may identify each element in the multiplication table of $\textbf{S}$ with its corresponding matrix representation.

We may then, in $\textsf{NC}$, write down the vector space $V = \mathbb{F}_{q}^{d}$ by listing its elements. Additionally, as we are given the standard name of $\textbf{S}$, we may in $\textsf{NC}$ construct the underlying form. Once we have constructed the underlying form, then we may identify the appropriate orbit $\overline{V}$ of $1$-spaces of $V$ in $\textsf{NC}$ by evaluating the forms on the elements of $V$. We may then construct a minimal faithful permutation representation of $\textbf{S}$ by considering the action of each matrix on $\overline{V}$. The result now follows.    
\end{proof}

Let $S$ be a simple group, then a group $A$ with $S \leq A \leq \aut(S)$ is called an \emph{almost simple group}. A normal subgroup $H\neq 1$ of $G$ is called \emph{minimal normal} if the only normal subgroups of $G$ contained in $H$ are $1$ and $H$. A minimal normal subgroup $H$ of a finite group $G$ is either simple or a direct product of isomorphic simple groups (see e.g., \cite[Page~106]{rotman}). 

The following proposition, due to Cannon, Holt, and Unger, is one of the crucial ingredients for our results.

\begin{proposition}[\cite{CannonHoltUnger}]\label{MD-AlmostSimple}
Let $S \leq A \leq \rm{Aut}(S)$ for a finite non-abelian simple group $S$. Then $\mu(A)=\mu(S)$, except for the following cases described in the Table \ref{Table-MD-AlmostSimple}.

\begin{table}[H]

\caption{Table containing $\mu$ value of certain simple groups and almost simple groups}
\label{Table-MD-AlmostSimple}
  
{\begin{tabular}{ccccl}
    \toprule
    & $S$ & $A $ & $\mu(S)$ & $ \mu(A)$\\
    \midrule
    $1$ & $A_6 $ & $A \not \leq S_6 $ & $6$ & $10$ \\ 
  
$2$ & $\PSL(2,7)$ & $\mathrm{PGL}(2,7)$ & $7$ & $8$\\ 
 
$3$ & $\rm{M}_{12}$ & $\mathrm{Aut}(M_{12})=M_{12}.2$ & $12$ & $2 \mu(S)$ \\

$4$ & $\rm{O'N} $ & $\mathrm{Aut}(O'N)=O'N.2$ & $122760$ & $2 \mu(S)$ \\
 
$5$ &  $\mathrm{PSU}(3,5)$ & $A \not \leq \rm{P}\Sigma \rm{U}(3,5)$ & 50 & $126$\\ 
 $6$ & $\mathrm{P\Omega^{+}(8,2)}$ & $3 \mid | A /S|$ & $120$ & $3 \mu(S)$ \\  

$7$ & $\mathrm{P\Omega^{+}(8,3)}$ & $3 \mid | A /S|, 12\nmid | A /S|$ & $1080$ & $3 \mu(S)$\\

$8$ & $\mathrm{P\Omega^{+}(8,3)}$ & $12 \mid  |A/S|$ & $1080$ & $3360$\\ 

$9$ & $\G(3)$ & $\mathrm{Aut(G_2(3))}$ & $351$ & $2 \mu(S)$\\ 
$10$ & $\POmegaPlus(8,q)$, $q\geq 4 $ & $3 \mid | A /S|$&  $\frac{ (q^4-1)(q^3+1)}{q-1}$ & $3 \mu(S)$ \\

$11$& $\PSL(d,q)$, $d \geq 3$ $(d,q) \neq (3,2),(4,2)$ & $A \not \leq \PGammaL$ & $\frac{q^d -1 }{q-1}$ &$2 \mu(S)$  \\ 

$12$ & $\PSp(4,2^e)$, $e\geq 2 $ & $A \not \leq \mathrm{P\Gamma Sp}(4,2^e)$ & $\frac{2^{4e} -1 }{2^e-1}$ & $2 \mu(S)$   \\ 

$13$ & $ \POmegaPlus(2d,3)$, $d \geq 4$ & $3 \nmid | A /S|, (\ast) $  & $\frac{3^{d -1}(3^d-1)}{2}$ & $\frac{(3^{d}-1)(3^{d-1}+1)}{2}$ \\ 

$14$ & $\G(3^e)$, $ e >1$ & $A \not \leq \mathrm{\Gamma G}_2(3^e)$ & $\frac{(q^{6}-1)}{q-1}$ & $2 \mu(S)$\\ 

$15$ & $\F(2^e) $ & $A \not \leq \mathrm{\Gamma F}_4(2^e)$ & $\frac{(q^{12}-1)(q^{4}+1)}{q-1}$  & $2 \mu(S)$\\ 

$16$ & $\Esix(q)$ & $A \not \leq \mathrm{\Gamma E}_6(q)$ & $\frac{(q^9-1)(q^{8}+q^{4}+1)}{q-1}$ &$2 \mu(S)$ \\ 
\bottomrule
\end{tabular}}

The condition $(\ast)$ in the row $13$ is $A \not\leq \mathrm{PO}^+(2d,3)$ when $d>4$, and $A$ is not contained in any conjugate of $\mathrm{PO}^+(8,3)$ in $\mathrm{Aut(P\Omega^+}(8,3))$ when $d=4$ \tablefootnote{the group $\rm{PO}^{+}$ is denoted as $\rm{PGO}^{+}$ in the paper \cite{CannonHoltUnger}} .  
\end{table}
\end{proposition}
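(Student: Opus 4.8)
The easy half is the inequality $\mu(A) \geq \mu(S)$: restrict a faithful permutation representation of $A$ of degree $\mu(A)$ to the normal subgroup $S$; the restriction is still faithful (its kernel is a normal subgroup of $A$ contained in the original kernel, hence trivial), so $\mu(S) \leq \mu(A)$. Everything else is (i) proving $\mu(A) = \mu(S)$ for every almost simple $A$ \emph{not} listed in \Cref{Table-MD-AlmostSimple}, and (ii) computing $\mu(A)$ exactly for the finitely many families and individual groups that are listed. Both tasks go through the classification of finite simple groups, splitting $S$ into alternating, sporadic, and Lie type cases, and both are organized around Johnson's formula (\Cref{prop:Johnson}): for any finite group, $\mu$ equals the minimum of $\sum_{H \in \mathcal{H}} [\,\cdot : H]$ over collections $\mathcal{H}$ of subgroups whose cores intersect trivially. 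For a simple group every proper subgroup is core-free, so $\mu(S)$ is just the smallest index of a maximal subgroup of $S$, realized by a single subgroup.

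\textbf{The basic mechanism.} I would first isolate the following reduction. Let $M \leq S$ be a subgroup with $[S:M] = \mu(S)$, and set $N := N_A(M)$. If $NS = A$ (equivalently $[A:N] = [S:M]$), then the coset action of $A$ on $A/N$ is faithful: its kernel $\mathrm{Core}_A(N)$ is a normal subgroup of $A$, and an almost simple group with socle $S$ has no nontrivial proper normal subgroup avoiding $S$ (since $C_{\mathrm{Aut}(S)}(\mathrm{Inn}(S)) = 1$), while $\mathrm{Core}_A(N) \supseteq S$ would force $S$ to normalize $M$, impossible as $1 \neq M \neq S$. Hence $\mu(A) = \mu(S)$ whenever $A$ stabilizes the $S$-conjugacy class of some minimal-degree core-free subgroup of $S$. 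Using \Cref{carter Thm-decomposition} and the list of automorphism types in \Cref{def:automorphisms}, the plan is to check that inner, diagonal, and (almost always) field automorphisms fix the relevant class of "best" subgroups, so the only way equality can fail is when a \emph{graph}/triality-type automorphism in $A$ moves every minimal-degree subgroup to a non-conjugate one. In that situation one must pay for several $S$-classes at once (a union of the swapped parabolic classes gives $2\mu(S)$ or, under triality, $3\mu(S)$) or else fall back on a genuinely larger $A$-invariant subgroup; one then checks which is cheaper and proves it is optimal.

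\textbf{Case analysis.} For alternating groups with $n \geq 5$, $n \neq 6$, we have $\mathrm{Aut}(A_n) = S_n$, the unique smallest-index subgroup of $A_n$ is the point stabilizer $A_{n-1}$, and its $S_n$-normalizer $S_{n-1}$ has index $n = \mu(A_n)$, so the mechanism gives equality; $A_6$ is the classical exception, where $\mathrm{Out}(A_6) \cong \Z/2 \times \Z/2$, the index-$6$ subgroups split into two classes interchanged by the exceptional outer automorphism, and the overgroups not inside $S_6$ all have minimal degree $10$, verified directly. For sporadic $S$ there are finitely many groups; one reads $\mu(S)$ and $\mu(\mathrm{Aut}(S))$ from the \textsc{Atlas} \cite{Atlas}, producing the $M_{12}$ and $O'N$ rows. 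For Lie type $S$ — the bulk of the work — $\mu(S)$ is the index of the smallest parabolic (stabilizer of a $1$-space, of a totally singular subspace of appropriate dimension, etc.) as tabulated in \cite{TableDegreesimple}; the graph automorphism interchanges the stabilizers of $1$-spaces and hyperplanes for $\PSL(d,q)$, triality cyclically permutes three classes for $\POmegaPlus(8,q)$, and the exceptional graph automorphisms of $\PSp(4,2^e)$, $\G(3^e)$, $\F(2^e)$, $\Esix(q)$ move the minimal parabolic similarly; when $A$ contains such an element one is forced to $2\mu(S)$ or $3\mu(S)$, with the small-field anomalies (rows $8$ and $13$) arising because over $\mathbb{F}_3$ a single larger invariant subgroup (a non-singular point stabilizer) is cheaper than two moved classes.

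\textbf{Main obstacle.} The delicate direction throughout the exceptional cases is the matching lower bound: showing via \Cref{prop:Johnson} that no clever collection $\mathcal{H}$ can beat $2\mu(S)$ (resp.\ $3\mu(S)$, resp.\ the anomalous value). This requires control of \emph{all} core-free subgroups of $A$ of index below the claimed bound — in practice the complete low-index maximal subgroup data for the almost simple groups in question, via the O'Nan--Scott/Aschbacher analysis (e.g.\ \cite{KleidmanLiebeck, carter-book}) — together with a combinatorial argument that any trivial-core collection must meet two (resp.\ three) of the mutually-swapped parabolic classes and hence incurs the cost of each. Pushing this through for the exceptional groups $\G$, $\F$, $\Esix$ and for the small-field anomalies is where essentially all of the technical effort concentrates; the generic equality cases are comparatively routine once the mechanism above is in place.
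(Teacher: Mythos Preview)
The paper does not prove this proposition at all: it is quoted verbatim from Cannon, Holt, and Unger \cite{CannonHoltUnger} (note the citation attached to the proposition header), and is used only as a black box in Section~5. There is therefore nothing in the paper to compare your proposal against.

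That said, your sketch is a reasonable outline of how the result is established in \cite{CannonHoltUnger}: the normalizer mechanism you describe (extend a minimal-degree point stabilizer of $S$ to its normalizer in $A$ and check $NS=A$) is exactly the device used there, and the case split via CFSG into alternating/sporadic/Lie type, with the exceptional rows arising from graph and triality automorphisms that move the minimal parabolic class, matches their organization. One small correction: your ``easy half'' argument for $\mu(S)\leq\mu(A)$ is slightly mis-stated. Restricting a faithful representation of $A$ to $S$ gives a faithful representation of $S$ of the same degree $\mu(A)$, but the kernel of the restriction is $\ker\cap S$, which is trivial simply because $\ker=1$; your parenthetical about it being ``a normal subgroup of $A$ contained in the original kernel'' is true but circuitous. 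More substantively, you correctly flag the lower-bound direction in the exceptional rows as the real work, and you should be aware that in \cite{CannonHoltUnger} several of those bounds (particularly the small-field anomalies and the exceptional Lie types) are handled by explicit computation in \textsc{Magma} rather than by a purely structural argument of the kind you outline; a fully hand-written proof along your lines would be substantially longer than the published one.
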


\subsection{Background on Representation Theory}
A \emph{representation} of a group $G$ is a homomorphism $\eta:G \rightarrow {\rm{GL}}(V)$ for some finite-dimensional non-zero vector space $V$. A subspace $W$ of $V$ is called $G$-\emph{invariant} if for all $g \in G$ and $w \in W$, we have $\eta_g(w) \in W$. A representation $\eta:G \rightarrow {\rm{GL}}(V)$ is called \emph{irreducible} if the only $G$-invariant subspaces of $V$ are ${0}$ and $V$. Let $V$ and $W$ be two finite-dimensional vector spaces over a field $\mathbb{F}$. 

\begin{definition}\label{def-equivalentOfRep.}
Let $G$ be a group. We say that two representations $\vartheta: G \rightarrow {\rm{GL}}(V)$ and $\psi:G \rightarrow {\rm{GL}}(W)$ are \emph{equivalent} if there is an isomorphism $f:V \rightarrow W$ such that  $f\circ \vartheta_g = \psi_g \circ f$ for all $g \in G$. We write $\vartheta \sim \psi$.
\end{definition}

\begin{lemma}[Schur's Lemma o{\cite{serre1977}}]\label{thm-schur} Let $\vartheta: G \rightarrow {\rm{GL}}(V)$ and $\psi:G \rightarrow {\rm{GL}}(W)$ be irreducible representations of a group $G$. Let $f: V \rightarrow W$ be a linear map such that $f \circ \vartheta=\psi \circ f$, then either $f$ is invertible or $f = 0$. Moreover, the only non-trivial linear maps $f$ are identity or scalar multiples of identity.
\end{lemma}

\begin{lemma}\label{lem: Irreducibility}
Let $G$ be one of the following: $\SL(d,q)$, or ${\rm{Sp}}(4, 2^{e})$ ($e \geq 2$). Let $\vartheta: G \rightarrow {\rm{GL}}(d,q)$ be a non-trivial representation of $G$ over a vector space $\mathbb{F}_q^d$ then $\vartheta$ is irreducible.
\end{lemma}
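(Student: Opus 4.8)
The statement says: if $G$ is $\SL(d,q)$ or $\mathrm{Sp}(4,2^e)$ (with $e \geq 2$), then every non-trivial representation $\vartheta\colon G \to \GL(d,q)$ on the $d$-dimensional space $\mathbb{F}_q^d$ is irreducible. (For $\mathrm{Sp}(4,2^e)$ we have $d = 4$.) The natural approach is by contradiction: suppose $W \subsetneq \mathbb{F}_q^d$ is a proper non-zero $G$-invariant subspace, say $\dim W = k$ with $0 < k < d$. I would argue that the existence of such a $W$ forces a non-trivial homomorphism from $G$ into a small linear group — either via the action of $G$ on $W$ itself, or via the action on the quotient $\mathbb{F}_q^d / W$, or via the block-triangular form — and then derive a contradiction from the (near-)simplicity of $G$.

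**Key steps.** First, I would reduce to studying a composition factor: passing to $W$ (or $V/W$), pick a $G$-invariant subspace of smallest positive dimension $k$, so that $G$ acts \emph{irreducibly} on a $k$-dimensional space with $1 \le k < d$. Since $\SL(d,q)$ and $\mathrm{Sp}(4,2^e) = \Sp(4,2^e)$ are quasisimple (indeed $\mathrm{Sp}(4,2^e)$ is simple for $e \ge 2$, and $\SL(d,q)$ is quasisimple with simple quotient $\PSL(d,q)$ outside the tiny exceptions), the kernel of this $k$-dimensional action is either all of $G$ or central. If the kernel is all of $G$, the action on $W$ is trivial, so $G$ acts trivially on $W$ and non-trivially on $V/W$ (else $\vartheta$ is trivial); either way we get a faithful-modulo-center irreducible action in dimension $< d$. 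Second, I would invoke the known lower bound on the dimension of a non-trivial projective (equivalently, quasisimple) representation of these groups in the defining characteristic $p$: for $\SL(d,q)$ the minimal faithful projective representation over any field of characteristic $p$ has dimension $d$ (the natural module is the smallest), and for $\Sp(4,2^e)$ the minimal such dimension is $4$. This is a standard fact from the modular representation theory of groups of Lie type (Landazuri–Seitz type bounds, or the explicit results on minimal-dimensional representations). A composition factor of dimension $k$ with $1 \le k < d$ on which $G$ acts non-trivially contradicts this bound; and if $G$ acts trivially on every composition factor then $\vartheta(G)$ is unipotent, hence $\vartheta$ factors through a $p$-group quotient of $G$, which is trivial since $G$ is perfect — contradicting non-triviality of $\vartheta$.

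**Handling the characteristic.** One subtlety I would address explicitly: the representation $\vartheta$ lands in $\GL(d,q)$, i.e. it is a representation over $\mathbb{F}_q$, which has characteristic $p$ equal to the defining characteristic of $G$ (both $\SL(d,q)$ and $\Sp(4,2^e)$ are over $\mathbb{F}_q$ with $q$ a power of $p$). This is exactly the case where the minimal-dimension bounds are tight and equal to $d$, so the argument is clean; there is no need to worry about cross-characteristic representations, where the bounds would be larger anyway. I should also dispatch the genuinely exceptional small cases for $\SL(d,q)$ (e.g. $\SL(2,2), \SL(2,3)$) separately if the paper needs them, though in context $d$ is typically the dimension produced by constructive recognition and is large, so these can be excluded or checked by hand.

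**Main obstacle.** The crux — and the step I expect to lean on most heavily as a citation rather than reprove — is the minimal-dimension lower bound: that a non-trivial irreducible $\mathbb{F}_q$-representation of $\SL(d,q)$ has dimension at least $d$, and similarly dimension at least $4$ for $\Sp(4,2^e)$. Everything else (reduction to a composition factor, perfectness killing unipotent images, quasisimplicity controlling kernels) is routine group theory. The mild additional care is making sure the chain "proper invariant subspace $\Rightarrow$ small non-trivial composition factor $\Rightarrow$ contradiction" correctly covers the case where $G$ might act non-trivially only on the quotient $V/W$ and not on $W$, which I handle by symmetry between $W$ and $V/W$.
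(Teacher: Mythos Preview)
Your proposal is correct but takes a genuinely different route from the paper. The paper's argument is completely elementary and does not invoke any representation-theoretic bounds: given a putative proper nonzero invariant subspace $W$ and $0\neq w\in W$, it extends $\{w\}$ to a basis $\{w,b_2,\dots,b_d\}$ and writes down an explicit matrix $T$ (a ``rotation'' $w\mapsto b_2$, $b_2\mapsto -w$, fixing the other $b_i$) which visibly has determinant $1$ and, in the symplectic case, also satisfies $T^tXT=X$; invariance of $W$ then forces $b_2\in W$, and repeating gives $b_i\in W$ for all $i$, contradicting properness. In effect the paper is showing transitivity of $G$ on nonzero vectors in the natural module.

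Your approach---reduce to a composition factor of dimension $k<d$, use perfectness/quasisimplicity to control the kernel, then appeal to the minimal-dimension bounds for projective representations in defining characteristic---is heavier but also more honest to the lemma as literally stated: the paper's proof tacitly identifies $\vartheta(T)\cdot w$ with $T\cdot w$, so it is really only proving irreducibility of the natural representation (and hence of its twists by automorphisms, since those have the same image). That is all the paper actually needs downstream, so the elementary argument suffices there; your argument would be the right one if one genuinely wanted the conclusion for an \emph{arbitrary} non-trivial $d$-dimensional $\vartheta$. The trade-off is that your proof imports a nontrivial citation (Landazuri--Seitz type bounds), whereas the paper's is self-contained in a few lines.
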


\begin{proof}
Suppose that $\vartheta$ is not irreducible. Let $W \neq 0$ be a proper $G$-invariant subspace of $\mathbb{F}_q^d$. Consider $0 \neq w \in W \subsetneq \mathbb{F}_q^d$. Note that $\{w\}$ can be extended to a basis of $\mathbb{F}_q^d$ , say $\beta := \{w,b_2,\ldots,b_d\}$. Consider a linear transformation, say $T: \mathbb{F}_q^d \rightarrow \mathbb{F}_q^d$ defined as follows: $T(w)=b_2$, $T(b_2)=-w$, and $T(b_i)=b_i$ for all $i \geq 3$. It is easy to see that $\det(T)=1$ and hence $T \in \mathrm{SL}(d,q)$. Now, as $W$ is a $G$-invariant subspace of  $\mathbb{F}_q^d$, we have $T \cdot w =b_2 \in W$. Similarly, we can show that $b_i \in W$ for each $i \geq 3$. This contradicts that $W$ is a proper subspace of  $\mathbb{F}_q^d$. Thus, $\vartheta$ is irreducible when $G=\mathrm{SL}(d,q)$.

For $G=\mathrm{Sp}(d,q)$, where $d$ is even. Observe that the linear transformation $T$ defined above satisfies  $T^{t}XT = X$, where $X$ is the following matrix:
\[X= \begin{bmatrix}
    0 & 1 & & & & & & & \\
    -1 & 0 & &  & &   & \bf{0} & &  \\
   & & 0 & 1  & &  & & &\\
 & & -1 & 0 & &  & & & \\
   & \bf{0}  & & & &\ddots & & & \\
   & & & & & & & 0 & 1   \\
   & & & & & & & -1 & 0   \\
\end{bmatrix}.\] 
Thus, $T \in \mathrm{Sp}(d,q)$ (cf. \Cref{subsec: simplegroup}). In particular, when $d=4$ and $q=2^e$, we get $T \in  \mathrm{Sp}(4,2^e)$. 

Now, as $W$ is a $G$-invariant subspace of  $\mathbb{F}_q^d$, we have $T \cdot w =b_2 \in W$. Similarly, we can show that $b_i \in W$ for each $i \geq 3$. This contradicts the assumption that $W$ is a proper subspace of $\mathbb{F}_q^d$. Thus, $\vartheta$ is irreducible when $G=\mathrm{Sp}(4,2^e)$.
\end{proof}

\begin{lemma}[cf. {\cite[Proposition~2.10.6]{KleidmanLiebeck}}] \label{lem:OmegaIrreducible}
Let $d \geq 2$, and let $q = p^{k}$ for some prime $p > 2$ and some $k \geq 1$. For $U \in \Omega^{+}(2d,q)$ and $\alpha \in \Aut(\Omega^{+}(2d,q))$, define $\psi_{U}^{(\alpha)} : \mathbb{F}_{q}^{2d} \to \mathbb{F}_{q}^{2d}$ by $\psi_{U}^{(\alpha)}(v) = \alpha(U) \cdot v$.

Define the representation $\psi^{(\alpha)} : \Omega^{+}(2d,q) \to \text{GL}(2d,q)$ by $\psi^{(\alpha)}(U) = \psi_{U}^{(\alpha)}$. We have that $\psi^{(\alpha)}$ is irreducible.
\end{lemma}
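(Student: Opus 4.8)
The plan is to reduce the claim to the classical irreducibility of the \emph{natural} $\Omega^{+}(2d,q)$-module, after which the automorphism $\alpha$ essentially disappears.

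First I would observe that the invariant subspaces of a representation depend only on its image as a \emph{set} of matrices. Indeed, a subspace $W \leq \mathbb{F}_{q}^{2d}$ is $\psi^{(\alpha)}(\Omega^{+}(2d,q))$-invariant if and only if $\alpha(U)\cdot W \subseteq W$ for every $U \in \Omega^{+}(2d,q)$, i.e.\ if and only if $M\cdot W \subseteq W$ for every matrix $M$ in the set $\{\alpha(U) : U \in \Omega^{+}(2d,q)\}$. Since $\alpha$ is an automorphism of $\Omega^{+}(2d,q)$ — and, in this paper, $\Omega^{+}(2d,q)$ is a \emph{fixed} subgroup of $\GL(2d,q)$ — the map $\alpha$ restricts to a bijection of this set of matrices onto itself, so $\{\alpha(U) : U \in \Omega^{+}(2d,q)\} = \Omega^{+}(2d,q)$. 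Hence $W$ is $\psi^{(\alpha)}(\Omega^{+}(2d,q))$-invariant precisely when it is invariant under the natural action of $\Omega^{+}(2d,q)$ on $\mathbb{F}_{q}^{2d}$, and therefore $\psi^{(\alpha)}$ is irreducible if and only if $\psi^{(\mathrm{id})}$ is. (I note that $\psi^{(\alpha)}$ and $\psi^{(\mathrm{id})}$ need \emph{not} be equivalent as representations — for instance when $\alpha$ involves a field automorphism and $k > 1$ — so the reduction genuinely uses only that the two have the same image.)

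It then remains to prove that the natural module $\mathbb{F}_{q}^{2d}$ of $\Omega^{+}(2d,q)$ is irreducible for all $d \geq 2$ and $q = p^{k}$ with $p$ odd. I would invoke \cite[Proposition~2.10.6]{KleidmanLiebeck} for this. If a self-contained argument is preferred, one can proceed in the spirit of \Cref{lem: Irreducibility}: fix a hyperbolic basis $e_{1},\dots,e_{d},f_{1},\dots,f_{d}$, take a nonzero $w$ in an invariant subspace $W$, and apply Siegel (Eichler) transformations $E_{u,v} \in \Omega^{+}(2d,q)$ to produce each $e_{i}$ and $f_{i}$, forcing $W = \mathbb{F}_{q}^{2d}$.

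The hard part, if one wants to avoid the citation, is exactly this last step: the quick transvection argument of \Cref{lem: Irreducibility} does not transplant verbatim, since $\Omega^{+}(2d,q)$ is the commutator subgroup of $O^{+}(2d,q)$ and (in odd characteristic) does not contain the simple transvections used there. One must instead use the Siegel transformations — the root elements of type $D_{d}$ — and check that they suffice to sweep $w$ across a basis; this is routine but a little delicate, and also genuinely needs $d \geq 2$, since for $d = 1$ the group $\Omega^{+}(2,q)$ is diagonalizable on the hyperbolic basis and the module is reducible. As all of this is standard and recorded in \cite{KleidmanLiebeck}, my proof would simply cite it, after checking that \cite[Proposition~2.10.6]{KleidmanLiebeck} covers all plus-type cases with $q$ odd and $d \geq 2$.
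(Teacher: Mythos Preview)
Your proposal is correct and follows essentially the same approach as the paper: both arguments observe that $\psi^{(\alpha)}$ and $\psi^{(\mathrm{id})}$ have the same image as a set of matrices (since $\alpha$ is a bijection of $\Omega^{+}(2d,q)$ onto itself), reduce irreducibility of $\psi^{(\alpha)}$ to that of the natural module, and then invoke \cite[Proposition~2.10.6]{KleidmanLiebeck}. Your write-up is more expansive---the remark that $\psi^{(\alpha)}$ and $\psi^{(\mathrm{id})}$ need not be equivalent, and the sketch via Siegel transformations---but the logical skeleton is identical to the paper's three-line proof.
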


\begin{proof}
Note that $\psi^{(\text{id})}$ is the natural representation of $\Omega^{+}(2d,q)$. This is irreducible by \cite[Proposition~2.10.6]{KleidmanLiebeck}. Now for any $\alpha \in \Aut(\Omega^{+}(2d,q))$, we have that $\text{Im}(\psi^{(\text{id})}) = \text{Im}(\psi^{(\alpha)})$. The result now follows.
\end{proof}

\section{Computing the Socle of a Fitting-free Group}

In this section, we will establish the following.

\begin{proposition} \label{prop:ComputeSocleFittingFree}
Let $G \leq \Sym(n)$ be a Fitting-free permutation group. We can compute generators for $\Soc(G)$ in $\textsf{NC}$. Furthermore, we can in $\textsf{NC}$ decompose $\Soc(G)$ into a direct product of non-abelian simple groups.
\end{proposition}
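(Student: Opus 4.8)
The plan is to compute generators for $\Soc(G)$ by peeling it off, one ``layer'' at a time, from an $\NC$-computable normal series, and then to refine the resulting product decomposition into simple factors using the orbit/block structure together with constructive recognition. First I would invoke the Babai--Luks--Seress machinery \cite{BabaiLuksSeress} to compute, in $\NC$, a normal series $1 = N_0 \trianglelefteq N_1 \trianglelefteq \cdots \trianglelefteq N_k = G$ with $k = O(\log^2 n)$, where each $N_i \trianglelefteq G$, each quotient $N_i/N_{i-1}$ is semisimple, and generators for each $N_i$ are produced. Since $G$ is Fitting-free and each $N_i$ is normal in $G$, each $N_i$ is itself Fitting-free (a nontrivial nilpotent characteristic subgroup of $N_i$ would yield a nontrivial abelian normal subgroup of $G$), and in particular centerless. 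I would then set $L_i := C_{N_i}(N_{i-1}) = C_G(N_{i-1}) \cap N_i$ for each $i$: this is $\NC$-computable by first applying Lemma~\ref{lem:Order-NC}(f) to get $C_G(N_{i-1})$ (legal since $G$ normalizes the normal subgroup $N_{i-1}$) and then Lemma~\ref{lem:Order-NC}(g) to intersect with $N_i$ (legal since $N_i$ normalizes the normal subgroup $C_G(N_{i-1})$ of $G$). All $k = O(\log^2 n)$ layers are computed in parallel, and the output is $\langle L_1, \ldots, L_k \rangle$.

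For correctness I would show, by induction on $i$, that $R_i := \Soc(N_i) = \Soc(G) \cap N_i$ equals the internal direct product $R_{i-1} \times L_i$; since $R_0 = 1$ and $R_k = \Soc(G)$, this gives $\Soc(G) = L_1 \times \cdots \times L_k$. The inductive step rests on two structural facts. For $R_{i-1} \times L_i \leq R_i$: the map $x \mapsto x N_{i-1}$ embeds $L_i$ injectively (as $L_i \cap N_{i-1} = Z(N_{i-1}) = 1$) as a normal subgroup of the semisimple group $N_i/N_{i-1}$, so $L_i$ is semisimple; the product of its abelian simple factors is characteristic in $L_i$, hence normal in the Fitting-free group $G$, hence trivial, so $L_i$ is a direct product of non-abelian simple groups, each subnormal in $G$ and therefore contained in $\Soc(G)$; thus $L_i \leq \Soc(G) \cap N_i = R_i$, while $R_{i-1} \leq R_i$ is clear and $R_{i-1} \cap L_i \leq N_{i-1} \cap L_i = 1$. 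For $R_i \leq R_{i-1} \times L_i$: by the standard behaviour of components relative to a normal subgroup, every simple direct factor of $R_i = \Soc(N_i)$ either lies in $N_{i-1}$ --- in which case it is a component of $N_{i-1}$ and so lies in $\Soc(N_{i-1}) = R_{i-1}$ --- or is centralized by $N_{i-1}$ and so lies in $C_{N_i}(N_{i-1}) = L_i$. (Here I use the standard fact that $\Soc(H) = \Soc(G) \cap H$ when $H \trianglelefteq G$ and $G$ is Fitting-free.)

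It remains to split each semisimple permutation group $L_i$ into its simple direct factors; equivalently, I need a generic $\NC$ routine that decomposes a direct product $M = T_1 \times \cdots \times T_m$ of non-abelian simple groups, given as a permutation group, into the $T_j$. The key point is that each $T_j$ is the normal closure in $M$ of any one of its non-identity elements, so it suffices to produce, for each $j$, one non-identity element $g \in T_j$, after which $\ncl_M(g)$ (in $\NC$ by Lemma~\ref{lem:Order-NC}(e)) recovers $T_j$. To locate such elements I would descend through the permutation action: split $M$ orbit by orbit (the kernel of the action on an orbit $\Delta$ is the pointwise stabilizer of $\Delta$, in $\NC$ by Lemma~\ref{lem:Order-NC}(d), is a subproduct of the $T_j$, and $M$ embeds subdirectly into the product of the transitive constituents); within a transitive constituent, repeatedly pass to a non-trivial block system while imprimitive (its kernel is again a subproduct and its image is a semisimple group of smaller degree), reducing to the primitive case. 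A primitive group has at most two minimal normal subgroups (by the O'Nan--Scott theorem), and for a semisimple group the minimal normal subgroups are precisely its simple direct factors; hence a primitive semisimple group is either simple --- in which case it is itself a factor --- or a product of two isomorphic simple groups in diagonal action, in which case the two factors can be separated using the constructive recognition of \Cref{thm:ConstructiveRecognition} applied to a point stabilizer (which is the diagonal copy of the simple group) together with the explicit description of that primitive type. Finally one matches up, via the membership and centralizer tests of Lemma~\ref{lem:Order-NC}, the factors discovered across the various orbits and block quotients; because chains of block systems have length $O(\log n)$ and passing into a kernel at least halves the permutation degree, the recursion has polylogarithmic depth, so the whole routine is in $\NC$.

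I expect the socle computation itself to be routine once Lemma~\ref{lem:Order-NC}(f) is in hand, with the bookkeeping in the structural claim being standard generalized-Fitting-subgroup theory. The main obstacle is the decomposition into simple factors --- in particular the primitive ``simple-diagonal'' constituents, where the two simple factors share the same support and cannot be separated by orbit or block combinatorics, forcing genuine use of constructive recognition (or the $\NC$ structure-finding of \cite{BabaiLuksSeress}) to split them --- together with verifying that every part of the recursion stays within polylogarithmic depth.
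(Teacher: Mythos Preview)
Your argument is correct and follows essentially the same strategy as the paper: invoke the Babai--Luks--Seress normal series of length $O(\log^2 n)$ with semisimple quotients, then peel off the socle layer by layer using centralizers of normal subgroups (Lemma~\ref{PermutationGroupsNC}(f),(g)). There is one pleasant variation worth noting. The paper defines the increments sequentially as $M_i := M_{i-1} \times (N_i \cap C_G(M_{i-1}))$, where $M_{i-1} = \Soc(N_{i-1})$, and appeals to its Lemma~\ref{lem:socle} for correctness; you instead take $L_i := N_i \cap C_G(N_{i-1})$, centralizing the whole previous term rather than just its socle. Your increments turn out to coincide with the paper's (both equal the product of the simple factors of $\Soc(N_i)$ not already in $\Soc(N_{i-1})$), but your formulation has the minor advantage that all the $L_i$ can be computed in parallel from the $N_j$ alone, with no dependence on earlier layers; your correctness proof via the standard component/normal-subgroup dichotomy is also a bit cleaner than the paper's hands-on Lemma~\ref{lem:socle}.

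For the decomposition of $\Soc(G)$ into its simple direct factors, the paper simply cites \cite{BabaiLuksSeress} (see also the discussion after \cite[Proposition~2.1]{BCGQ}), which already handles this in $\textsf{NC}$. Your orbit/block/primitive recursion with O'Nan--Scott at the leaves is plausible but considerably more work than necessary, and the ``matching up factors across orbits and block quotients'' step is the part you have left vaguest; you can replace that entire discussion with the citation and avoid the bookkeeping.
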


We first recall the following standard lemma.

\begin{lemma}[cf. \cite{Holt2005HandbookOC}] \label{lem:HandbookSocle}
Let $G$ be Fitting-free. We have that for any $N \trianglelefteq G$, $\Soc(G) = \Soc(N) \times \Soc(C_{G}(\Soc(N)))$.   
\end{lemma}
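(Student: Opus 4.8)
The plan is to reduce everything to the canonical decomposition $\Soc(G) = T_1 \times \cdots \times T_k$ into non-abelian simple factors (available since $G$ is Fitting-free, as recorded in \Cref{sec:prelims}), and to identify $\Soc(N)$ and $\Soc(C_G(\Soc(N)))$ with complementary sub-products $\prod_{i \in I} T_i$ and $\prod_{i \notin I} T_i$.

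First I would record two closure facts. \emph{(i)} Every normal subgroup $M \trianglelefteq G$ is again Fitting-free: the Fitting subgroup $F(M)$ is characteristic in $M$, hence normal in $G$, so if it were nontrivial its centre $Z(F(M))$ would be a nontrivial abelian normal subgroup of $G$, a contradiction. Applying this to $N$, and to $C := C_G(\Soc(N))$ --- which is normal in $G$, being the centralizer of the normal subgroup $\Soc(N)$ --- we conclude that $\Soc(N)$ and $\Soc(C)$ are each direct products of non-abelian simple groups, and each is normal in $G$, being characteristic in $N$, resp.\ in $C$. \emph{(ii)} Any normal subgroup $H \trianglelefteq G$ that is a direct product of non-abelian simple groups lies in $\Soc(G)$: conjugation by $G$ permutes the simple direct factors of $H$ (which are canonically determined, being exactly the minimal normal subgroups of $H$), the product over each $G$-orbit is a minimal normal subgroup of $G$ and hence lies in $\Soc(G)$, so $H$ does too. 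Dually, since $Z(\Soc(G)) = 1$, every normal subgroup of $G$ contained in $\Soc(G)$ is a sub-product $\prod_{i \in I} T_i$.

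Combining: by (i)--(ii), $\Soc(N) = \prod_{i \in I} T_i$ for some $I \subseteq \{1,\dots,k\}$; put $I' := \{1,\dots,k\}\setminus I$ and $M := \prod_{i \in I'} T_i$, so $\Soc(G) = \Soc(N) \times M$. Distinct $T_i$'s commute, so $M \le C_G(\Soc(N)) = C$; as $M$ is normal in $C$ and a product of non-abelian simple groups, $M \le \Soc(C)$ by (ii). Conversely $\Soc(C) \le C$ centralizes $\Soc(N)$, and by (i)--(ii) $\Soc(C) = \prod_{j \in J} T_j$ for some $J$; no $T_i$ with $i \in I$ can centralize $\Soc(N) \supseteq T_i$, since $Z(T_i) = 1$, so $J \cap I = \emptyset$, i.e.\ $\Soc(C) \le M$. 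Hence $\Soc(C) = M$ and $\Soc(G) = \Soc(N) \times M = \Soc(N) \times \Soc(C_G(\Soc(N)))$, the product direct because $I \cap I' = \emptyset$.

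I expect fact (ii) to be the main (though still routine) obstacle: it is the assertion that in a Fitting-free group the socle is the unique largest normal subgroup that is a direct product of non-abelian simple groups, and it hinges on the observation that in such a direct product the simple factors are exactly the minimal normal subgroups --- there are no ``diagonal'' ones, precisely because non-abelian simple groups have trivial centre --- so that $G$ genuinely permutes the factors and each orbit-product is a minimal normal subgroup of $G$. Everything else is bookkeeping with commuting direct factors.
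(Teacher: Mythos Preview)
The paper does not actually prove \Cref{lem:HandbookSocle}; it is stated with a citation to \cite{Holt2005HandbookOC} as a standard fact and left without argument. Your proposal supplies a correct direct proof. The two facts you isolate are exactly the right ingredients: fact~(i) is the first half of the paper's \Cref{lem:normal-subgroup-FF}, and fact~(ii) (together with its dual that normal subgroups of a centreless semisimple group are sub-products) is the standard structure of $\Soc(G)$ when $G$ is Fitting-free. Your identification $\Soc(N)=\prod_{i\in I}T_i$ is equivalent to the paper's $\Soc(N)=\Soc(G)\cap N$ from \Cref{lem:normal-subgroup-FF}, and the computation $\Soc(C_G(\Soc(N)))=\prod_{i\notin I}T_i$ via the double inclusion is clean: $M\le\Soc(C)$ uses (ii) applied inside $C$, and $\Soc(C)\le M$ uses that $\Soc(C)$ is itself a sub-product (via (i)+(ii) applied to $C\trianglelefteq G$) disjoint from $I$ since $Z(T_i)=1$. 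One small point worth making explicit when you write it up: to invoke (ii) for $M$ inside $C$ you need $M\trianglelefteq C$, which holds because $M=C_G(\Soc(N))\cap\Soc(G)$ is in fact normal in all of $G$; you use this implicitly but it is the only place a reader might pause.
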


We are also grateful to Joshua A. Grochow \cite{GrochowLemmas} for sharing the following key lemmas and their proofs.

\begin{lemma}[{\cite{GrochowLemmas}}] \label{lem:normal-subgroup-FF}
If $G$ is Fitting-free and $N \unlhd G$, then $N$ is Fitting-free and $\Soc(N) = \Soc(G) \cap N$.
\end{lemma}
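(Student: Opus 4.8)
The plan is to prove the two assertions separately and to isolate the one group-theoretic fact doing the real work, namely: \emph{if $H \unlhd G$ is a direct product of non-abelian simple groups, then $H \le \Soc(G)$}. Granting this ``sublemma'', both directions of $\Soc(N) = \Soc(G) \cap N$ fall out by applying it once to $G$ and once to $N$, after checking that the relevant subgroups are normal and semisimple.

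First I would show $N$ is Fitting-free. Suppose not; then $N$ has a non-trivial abelian normal subgroup, so the Fitting subgroup $\mathrm{Fit}(N)$ is non-trivial. Being nilpotent and non-trivial, $\mathrm{Fit}(N)$ has non-trivial center $Z(\mathrm{Fit}(N))$, which is characteristic in $\mathrm{Fit}(N)$, hence characteristic in $N$, hence (since $N \unlhd G$) normal in $G$. This is a non-trivial abelian normal subgroup of $G$, contradicting that $G$ is Fitting-free; therefore $N$ is Fitting-free. (Equivalently, one can use the product of all abelian minimal normal subgroups of $N$: distinct minimal normal subgroups commute elementwise, so this product is abelian, and it is characteristic in $N$ because $\Aut(N)$ permutes the minimal normal subgroups.) In particular $\Soc(N)$ is a direct product of non-abelian simple groups.

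Next, the sublemma. Write $H = T_1 \times \cdots \times T_k \unlhd G$ with each $T_i$ non-abelian simple. The $T_i$ are exactly the minimal normal subgroups of $H$, and the decomposition is unique, so conjugation by $G$ permutes $\{T_1,\dots,T_k\}$. For a $G$-orbit $O$ put $H_O = \prod_{T \in O} T$; then $H_O \unlhd G$, and $H_O$ is minimal normal in $G$: any $1 \ne R \unlhd G$ with $R \le H_O$ is a subproduct $\prod_{i \in I}T_i$ (the standard commutator argument: $R \cap T_i \unlhd T_i$ is $1$ or $T_i$, and if $r \in R$ has nontrivial $j$-th coordinate with $R\cap T_j = 1$ then $[r,t] \in R \cap T_j = 1$ for $t \in T_j$ forces $r_j \in Z(T_j) = 1$), and $G$-invariance forces $I$ to be a union of $G$-orbits, hence $I = O$ and $R = H_O$. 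So each $H_O \le \Soc(G)$, and $H = \prod_O H_O \le \Soc(G)$. Now finish: $\Soc(N)$ is characteristic in $N$, hence normal in $G$, and semisimple, so the sublemma gives $\Soc(N) \le \Soc(G)$; together with $\Soc(N) \le N$ this yields $\Soc(N) \le \Soc(G) \cap N$. Conversely $\Soc(G) \cap N \unlhd G$, so $\Soc(G) \cap N \unlhd N$, and $\Soc(G) \cap N \unlhd \Soc(G)$, which is semisimple (as $G$ is Fitting-free); a normal subgroup of a direct product of non-abelian simple groups is again such a product (same subproduct fact), so applying the sublemma to $N$ gives $\Soc(G) \cap N \le \Soc(N)$. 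Hence $\Soc(N) = \Soc(G) \cap N$.

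I do not expect a genuine obstacle here; the load-bearing step is the sublemma, i.e.\ recognizing that $G$ permutes the simple factors of a normal semisimple subgroup and that each orbit-product is minimal normal, together with the elementary structure of normal subgroups of a direct product of non-abelian simple groups. As an alternative to the second inclusion one can invoke \Lem{lem:HandbookSocle}: from $\Soc(G) = \Soc(N) \times \Soc(C_G(\Soc(N)))$ and Dedekind's modular law (since $\Soc(N) \le N$), $\Soc(G) \cap N = \Soc(N) \times \bigl(N \cap \Soc(C_G(\Soc(N)))\bigr)$, and the second factor lies in $C_N(\Soc(N))$, which is trivial because a minimal normal subgroup of $N$ inside it would be abelian, contradicting that $N$ is Fitting-free.
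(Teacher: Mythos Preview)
Your proof is correct and rests on the same key fact as the paper's: a normal subgroup that is a direct product of non-abelian simple groups lies in the socle (your ``sublemma''; the paper cites this as \cite[Lemma~4.3]{GLDescriptiveComplexity}). The organization differs in two respects. First, you prove that $N$ is Fitting-free at the outset via $Z(\mathrm{Fit}(N))$, and then use this to conclude $\Soc(N)$ is semisimple before applying the sublemma; the paper instead derives Fitting-freeness of $N$ at the end, as a consequence of $\Soc(N) = \Soc(G)\cap N$ being a subproduct of $\Soc(G)$. Second, you apply the sublemma symmetrically---once in $G$ for $\Soc(N) \le \Soc(G)$, once in $N$ for $\Soc(G)\cap N \le \Soc(N)$---whereas the paper invokes it explicitly only for the second inclusion and handles the first more tersely. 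Your version is more self-contained (you prove the sublemma rather than cite it) and the logical flow is cleaner; the paper's order has the mild advantage of not needing a separate argument for Fitting-freeness of $N$.
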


\begin{proof}
Every subgroup of $N$ that is normal in $G$ is also normal in $N$, so $\Soc(N)$ cannot contain any non-minimal normal subgroups of $G$. And since $\Soc(N)$ is characteristic in $N$, it is normal in $G$, and thus $\Soc(N) \subseteq \Soc(G) \cap N$.

In the opposite direction, suppose $S \leq N \cap \Soc(G)$ is a minimal normal subgroup of $G$. Then $S$ is a direct product of non-abelian simple groups since $G$ is Fitting-free. As $S$ is normal in $N$ and a direct product of non-abelian simple groups, it is contained in $\Soc(N)$ (see also \cite[Lemma~4.3]{GLDescriptiveComplexity}). Thus $N \cap \Soc(G) \leq \Soc(N)$, so the two are equal. Finally, since $\Soc(N)$ is a normal subgroup of $\Soc(G)$, and the latter is a direct product of non-abelian simple groups, so is the former. But then $N$ contains no abelian normal subgroups, so is Fitting-free.
\end{proof}

\begin{lemma}[\cite{GrochowLemmas}]\label{lem:socle}
Suppose $G$ is Fitting-free and $1 \leq N_1 \leq \dotsb \leq N_r = G$ is a normal series (all $N_i$ are normal in $G$) such that each factor $N_{i+1} / N_{i}$ is a direct product of simple groups, and is either abelian, or contains no abelian direct factors. Define inductively $M_1 := N_1$, $M_i := M_{i-1} \times (N_i \cap C_G(M_{i-1}))$. Then for all $i$, we have $M_i = \Soc(N_i)$, and in particular, $M_r = \Soc(G)$.
\end{lemma}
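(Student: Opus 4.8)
The plan is to induct on $i$, with the inductive step reducing to \Lem{lem:HandbookSocle} together with an analysis of the normal subgroup $D := N_i \cap C_G(M_{i-1})$. First I record the preliminary observations: each $N_i$ is normal in $G$, hence Fitting-free by \Lem{lem:normal-subgroup-FF}, so all the socles appearing below are direct products of non-abelian simple groups (in particular centerless), and I will freely use the standard fact that $C_H(\Soc(H)) = 1$ whenever $H$ is Fitting-free (this also follows by applying \Lem{lem:HandbookSocle} with $N = \Soc(H)$, since then $\Soc(C_H(\Soc(H))) = 1$ and a nontrivial group has nontrivial socle). For the base case $i=1$: $N_1 = N_1/1$ is one of the given factors, so it is a direct product of simple groups that is either abelian or has no abelian direct factor; since $N_1$ is Fitting-free it cannot be a nontrivial abelian group, so it is a direct product of non-abelian simple groups and $M_1 = N_1 = \Soc(N_1)$ (the case $N_1 = 1$ being trivial).

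For the inductive step, assume $M_{i-1} = \Soc(N_{i-1})$, and set $D := N_i \cap C_G(M_{i-1}) = C_{N_i}(\Soc(N_{i-1}))$, so that $M_i = M_{i-1} \times D$. Applying \Lem{lem:HandbookSocle} inside the Fitting-free group $N_i$ with normal subgroup $N_{i-1}$ gives $\Soc(N_i) = \Soc(N_{i-1}) \times \Soc(C_{N_i}(\Soc(N_{i-1}))) = \Soc(N_{i-1}) \times \Soc(D)$. Comparing this with $M_i = M_{i-1} \times D = \Soc(N_{i-1}) \times D$, it suffices to prove $D = \Soc(D)$, i.e.\ that $D$ is itself a direct product of non-abelian simple groups; the conclusion $M_r = \Soc(N_r) = \Soc(G)$ then follows by taking $i = r$.

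To analyze $D$, note first that $\Soc(N_{i-1})$ is characteristic in $N_{i-1} \trianglelefteq G$, hence normal in $G$, so $C_G(\Soc(N_{i-1})) \trianglelefteq G$ and therefore $D \trianglelefteq G$; in particular $D$ is Fitting-free. Moreover $D \cap N_{i-1} = C_{N_{i-1}}(\Soc(N_{i-1})) = 1$, so by the second isomorphism theorem $D \cong D N_{i-1}/N_{i-1}$, which (as $D N_{i-1} \trianglelefteq G$) is a normal subgroup of the factor group $N_i/N_{i-1}$. Now split into the two cases permitted by the hypothesis on this factor. If $N_i/N_{i-1}$ has no abelian direct factor, then it is a direct product of non-abelian simple groups, and every normal subgroup of a direct product of non-abelian simple groups is the product of a subset of the simple factors; hence $D N_{i-1}/N_{i-1}$ — and therefore $D$ — is a direct product of non-abelian simple groups, so $D = \Soc(D)$. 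If instead $N_i/N_{i-1}$ is abelian, then $D$ is abelian, hence an abelian normal subgroup of the Fitting-free group $G$, forcing $D = 1 = \Soc(D)$. Either way $D = \Soc(D)$, completing the induction.

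The main obstacle is exactly this analysis of $D$: the key move is to use the triviality of $D \cap N_{i-1}$ (itself a consequence of $N_{i-1}$ being Fitting-free) to realize $D$ as a normal subgroup of the current factor $N_i/N_{i-1}$, after which the hypothesis on the factors plus the classification of normal subgroups of a product of non-abelian simple groups pins down its structure. A minor point to verify along the way is that the internal direct products are legitimate: $M_{i-1}$ and $D$ commute since $D \leq C_G(M_{i-1})$, and $M_{i-1} \cap D \leq Z(\Soc(N_{i-1})) = 1$, so $M_i = M_{i-1} \times D$ is a genuine internal direct product matching the decomposition produced by \Lem{lem:HandbookSocle}.
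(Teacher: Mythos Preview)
Your proof is correct and takes a genuinely different route from the paper's. The paper proceeds by a case split on whether $\Soc(N_i) = \Soc(N_{i+1})$; in the nontrivial case it writes $\Soc(N_{i+1}) = \Soc(N_i) \times S$ and then proves $N_{i+1} \cap C_G(M_i) = S$ via an ad hoc argument involving the normal closure in $G$ of an arbitrary element $g \in N_{i+1} \cap C_G(M_i)$. You instead invoke \Lem{lem:HandbookSocle} at the outset to reduce the inductive step to the single claim $D = \Soc(D)$, and then dispatch that claim by the clean observation that $D \cap N_{i-1} = C_{N_{i-1}}(\Soc(N_{i-1})) = 1$ embeds $D$ as a normal subgroup of the quotient $N_i/N_{i-1}$, where the hypothesis on the series directly determines its structure. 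Your approach is shorter and makes the role of the semisimplicity hypothesis on the factors transparent (the paper's Case~2 argument never visibly touches it), at the cost of importing the standard fact that normal subgroups of a direct product of non-abelian simple groups are subproducts of the factors; the paper's argument is more hands-on but correspondingly more intricate.
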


\begin{proof}
By induction on $i$. For $i=1$: Since $G$ is Fitting-free and $N_1 = M_1$ is a normal subgroup, it cannot be abelian, so by assumption it is a direct product of non-abelian simple groups and thus $M_1 = N_1 = \Soc(N_1)$. 

Now suppose that $M_i = \Soc(N_i)$ for some $i \geq 1$; we will show the same holds for $i+1$.

\textbf{Case 1:} $\Soc(N_i) = \Soc(N_{i+1})$. By induction we have $M_i = \Soc(N_i)$. Since $N_{i+1}$ is Fitting-free, we have that $1 = C_{N_{i+1}}(\Soc(N_{i+1}))$. But the latter is the same as $C_G(\Soc(N_{i+1})) \cap N_{i+1} = C_G(\Soc(N_i)) \cap N_{i+1} = C_G(M_i) \cap N_{i+1}$. Thus $M_{i+1} = M_i = \Soc(N_{i+1})$.

\textbf{Case 2:} $\Soc(N_{i+1}) \neq \Soc(N_i)$. By Lemma~\ref{lem:normal-subgroup-FF}, $N_i \cap \Soc(G) = \Soc(N_i)$ and $N_{i+1} \cap \Soc(G) = \Soc(N_{i+1})$, thus $\Soc(N_i) \lneq \Soc(N_{i+1})$. Since $\Soc(N_i)$ is characteristic in $N_i$ which is normal in $G$, we have that $\Soc(N_i)$ is also normal in $G$. Hence $\Soc(N_i)$ is also normal in $\Soc(N_{i+1})$. Since all these socles are direct products of non-abelian simple groups, there is some $S \leq G$ that is a direct product of non-abelian simple groups such that $\Soc(N_{i+1}) = \Soc(N_i) \times S$. By the definition of $M_{i+1}$, it is then necessary and sufficient to show $N_{i+1} \cap C_G(M_i) = S$.

As $S$ centralizes $\Soc(N_i) = M_i$, we have that $S$ is contained in both $N_{i+1}$ and $C_G(M_i)$. Therefore $S \leq N_{i+1} \cap C_G(M_i)$. 

In the opposite direction, suppose $g \in N_{i+1} \cap C_G(M_i)$; we will show that $g$ must be in $S$. Since $C_{N_{i+1}}(\Soc(N_{i+1})) = 1$, $g$ must not centralize all of $\Soc(N_{i+1})$. But since $g$ centralizes $M_i = \Soc(N_i)$, $g$ must not centralize $S$. Let $N$ be the normal closure of $g$ in $G$. Since $N_{i+1} \unlhd G$, we have $N \leq N_{i+1}$ and hence $N \unlhd N_{i+1}$. Additionally, since $M_i = \Soc(N_i)$, we have argued above that $M_i \unlhd G$, and since the centralizer of a normal subgroup is normal, $C_G(M_i)$ is also normal in $G$, and thus $N \leq C_G(M_i)$. Since $N$ is normal in $G$, it is Fitting-free, and we have $\Soc(N) = N \cap \Soc(G) \leq N_{i+1} \cap \Soc(G) = \Soc(N_{i+1})$. Since $N \leq C_G(M_i) = C_G(\Soc(N_i))$, we must have $\Soc(N) \cap \Soc(N_i) = 1$. But $\Soc(N) \leq \Soc(N_{i+1}) = \Soc(N_i) \times S$, and $\Soc(N)$ is normal in $N_{i+1}$, so---since these socles are all direct products of non-abelian simple groups---we must have $\Soc(N) \leq S$, and $g$ does not centralize $\Soc(N)$. But if $g$ is not in $\Soc(N)$, then $N$ is not a direct product of non-abelian simple groups so we must in fact have $g \in \Soc(N) \leq S$, as claimed.

Therefore $S = N_{i+1} \cap C_G(M_i)$, and $M_{i+1} = M_i \times S = \Soc(N_{i+1})$. \end{proof}

\begin{proof}[Proof of Proposition~\ref{prop:ComputeSocleFittingFree}]
Babai, Luks, and Seress \cite{BabaiLuksSeress} established that there exists an $\textsf{NC}$-computable series
\[
1 = N_{0} \leq N_{1} \leq \ldots \leq N_{r} = G
\]
of length $r \in O(\log^{c} n)$ (Kantor, Luks, and Mark \cite{KantorLuksMark} determined that $c = 2$) such that for all $i \in [r]$, $N_{i} \trianglelefteq G$ (see \cite[Sec.~3.6]{KantorLuksMark}) and each $N_{i}/N_{i-1}$ is a direct product of simple groups (semisimple). In particular, $N_{i}/N_{i-1}$ is either abelian or has no abelian direct factor. We crucially take advantage of both the length of the series, as well as the fact that the successive quotients are semisimple to compute $\Soc(G)$ in $\textsf{NC}$.

We proceed inductively along the series. We begin with $N_{1}$. As $G$ is Fitting-free, $N_{1}$ is the direct product of non-abelian simple groups. By Lemma~\ref{lem:normal-subgroup-FF}, $N_{1} = \Soc(N_1) \trianglelefteq \Soc(G)$. By Lemma~\ref{lem:HandbookSocle}, $\Soc(G) = N_1 \times \Soc(C_{G}(N_1))$.

Define $M_{1} := N_{1} (= \Soc(N_1))$. And for $i > 1$, define $M_{i} := M_{i-1} \times (N_{i} \cap C_{G}(M_{i-1}))$. As each $N_{i}$ is normal in $G$, we have that each $M_{i}$ is normal in $G$. Now by Lemma~\ref{PermutationGroupsNC}(f), we may compute $C_{G}(M_{i-1})$ in $\textsf{NC}$, provided we are given generators for $M_{i-1}$. By Lemma~\ref{PermutationGroupsNC}(g), we may then compute $C_{G}(M_{i-1}) \cap N_{i}$ in $\textsf{NC}$. Thus, we may compute $M_{i}$ in $\textsf{NC}$, provided we are given generators for $M_{i-1}$. As $r \in O(\log^{2} n)$, we may compute $M_{1}, \ldots, M_{r}$ in $\textsf{NC}$. By Lemma~\ref{lem:socle}, we have that $M_{r} := \Soc(G)$.

Once we have generators for $\Soc(G)$, we may decompose $\Soc(G)$ into a direct product of non-abelian simple groups in $\textsf{NC}$ using \cite{BabaiLuksSeress} (see also the discussion following the proof of \cite[Proposition~2.1]{BCGQ}). The result now follows.
\end{proof}

\begin{remark}
In the setting of both permutation groups and their quotients, the socle is polynomial-time computable \cite{KantorLuksQuotients}. However, obtaining $\textsf{NC}$ bounds is a long-standing open problem. Extending Proposition~\ref{prop:ComputeSocleFittingFree} to the setting of quotients is also non-trivial. We rely crucially on the $\textsf{NC}$ algorithm of Babai, Luks, and Seress \cite{BabaiLuksSeress} to compute the centralizer of a normal subgroup $H$ in $\textsf{NC}$. Kantor and Luks \cite[Section~6]{KantorLuksQuotients} note that the techniques for computing $C_{G}(H)$ in the setting of permutation groups appear not to work for  quotients. Instead, Kantor and Luks bring to bear the extensive machinery for computing Sylow subgroups of permutation groups in order to compute cores of quotients. Kantor and Luks then reduce computing $C_{G}(H)$ to computing an appropriate core. It is unclear how to compute cores of quotients in $\textsf{NC}$.
\end{remark}

\begin{proposition} \label{prop:MinimalNormalFittingFree}
Let $K \trianglelefteq G \leq \Sym(n)$, and let $\textbf{G} := G/K$ be Fitting-free. Suppose we are given the non-abelian simple factors of $\Soc(G)$. Then in $\textsf{NC}$, we can compute the set of minimal normal subgroups of $\textbf{G}$.
\end{proposition}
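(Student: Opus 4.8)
The plan is to reduce the problem for $\textbf{G} = G/K$ to computations about $\Soc(\textbf{G})$ and its simple factors, which we can access in $\textsf{NC}$ via the decomposition already assumed. First I would observe that the minimal normal subgroups of a Fitting-free group $\textbf{G}$ are exactly the factors of $\Soc(\textbf{G})$ under the $\Aut$-action, or more precisely: if $\Soc(\textbf{G}) = T_1 \times \cdots \times T_m$ with each $T_j$ non-abelian simple, then each minimal normal subgroup of $\textbf{G}$ is a product $\prod_{j \in O} T_j$ where $O$ is an orbit of the conjugation action of $\textbf{G}$ on $\{T_1, \ldots, T_m\}$. So the task splits into (i) obtaining the simple factors $T_1, \ldots, T_m$ of $\Soc(\textbf{G})$ in $\textsf{NC}$, and (ii) computing the orbits of $\textbf{G}$ on this set in $\textsf{NC}$, and then (iii) assembling generators for the product over each orbit.

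For step (i), I would lift the socle of $G$ to the quotient. Since $\textbf{G} = G/K$ is Fitting-free, one wants $\Soc(\textbf{G})$; the preimage in $G$ of $\Soc(\textbf{G})$ is a normal subgroup $N$ of $G$ containing $K$, and $\Soc(\textbf{G}) = N/K$. The subtlety is that $G$ itself need not be Fitting-free — $K$ could contain abelian normal subgroups of $G$. However, by hypothesis we are \emph{given} the non-abelian simple factors of $\Soc(G)$, and the point is that $\Soc(G) K / K$ captures the non-abelian part. More carefully: since $\textbf{G}$ is Fitting-free, $\Soc(\textbf{G})$ is the product of all minimal normal subgroups of $\textbf{G}$, each a direct product of non-abelian simple groups. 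I would argue that $\Soc(\textbf{G}) = \Soc(G) K / K$ (the image of the socle of $G$); one inclusion is clear since the image of a minimal non-abelian normal subgroup of $G$ is a direct product of non-abelian simple groups in $\textbf{G}$, and for the other inclusion, any minimal normal subgroup of $\textbf{G}$ pulls back to a normal subgroup of $G$ whose non-abelian composition factors must already live in $\Soc(G)$ modulo $K$, using that $K$ is contained in $\rad(G) \cdot (\text{rest})$ — here I would invoke that in a group $G$ with $G/K$ Fitting-free, the non-abelian simple composition factors of any $G$-normal subgroup not killed in $\textbf{G}$ are among those of $\Soc(G)$. Once we have generators for $\Soc(\textbf{G}) = \Soc(G)K/K$ as a coset-represented subgroup, we obtain its simple factors by taking the images under $G \to \textbf{G}$ of the simple factors of $\Soc(G)$ (given to us), discarding any that collapse into $K$ and merging any that fuse; detecting which factors fuse or die is an $\textsf{NC}$ membership/order computation via Lemma~\ref{QuotientsNC}.

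For step (ii), the conjugation action of $\textbf{G}$ on the set of simple factors $\{T_1, \ldots, T_m\}$ of $\Soc(\textbf{G})$ is determined by how the given generators of $\textbf{G}$ permute these factors; for a generator $gK$ and a factor $T_j$, I compute $gK \cdot T_j \cdot (gK)^{-1}$ (generated by conjugates of the generators of $T_j$) and determine which $T_{j'}$ it equals, again an $\textsf{NC}$ membership test. This yields a permutation-on-$m$-points representation of the generators of $\textbf{G}$, and since $m = O(\log n)$ (as $|\Soc(\textbf{G})| \le |\textbf{G}| \le n^{O(1)}$... actually $m$ is bounded by the number of simple factors, which is $O(\log n)$), computing connected components / orbits of this action is trivially in $\textsf{NC}$ — e.g., by transitive closure, or even by brute force since $m$ is logarithmic. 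Finally, for step (iii), for each orbit $O$ I take the union of the generating sets of the $T_j$, $j \in O$, giving generators (as coset representatives) for the corresponding minimal normal subgroup of $\textbf{G}$.

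The main obstacle I anticipate is step (i): cleanly justifying that $\Soc(\textbf{G})$ equals the image of $\Soc(G)$, and handling the bookkeeping when $K$ is nontrivial. In particular one must rule out that passing to the quotient $G/K$ creates \emph{new} non-abelian minimal normal subgroups not visible in $\Soc(G)$ — this requires the observation that a non-abelian simple composition factor of $\textbf{G}$ appearing in $\Soc(\textbf{G})$ must, by the normal structure and the fact that $K \trianglelefteq G$, already be a composition factor of $\Soc(G)$ (using that $\Soc(G)$ contains, up to the solvable radical issues, all the ``bottom-level'' non-abelian normal structure). Once that is established, everything else is routine $\textsf{NC}$ group membership and small-instance combinatorics. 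I should also double-check that the given data — the simple factors of $\Soc(G)$ — together with Theorem~\ref{thm:ConstructiveRecognition} and Lemma~\ref{QuotientsNC} genuinely suffice to perform the fusion/collapse detection within the quotient in $\textsf{NC}$, since all the underlying objects are represented as permutations on $n$ points with coset representatives.
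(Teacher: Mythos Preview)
Your steps (ii) and (iii) are exactly the paper's proof: view the conjugation action of $\textbf{G}$ on the simple factors of $\Soc(\textbf{G})$ as a permutation action on at most $\log|G| \le \log n! \sim n\log n$ points (not $O(\log n)$ as you write, though this does not affect the argument), compute the orbits in $\textsf{NC}$, and output the product of the factors in each orbit. That is the entire content of the paper's proof.

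Your step (i), however, is both unnecessary and incorrect. It is unnecessary because the hypothesis as printed contains a typo: what is actually assumed given --- as the paper's own proof makes clear, and as the parallel hypothesis in Proposition~\ref{Normalizer} confirms --- are the non-abelian simple factors of $\Soc(\textbf{G})$, not of $\Soc(G)$. The paper never passes from $\Soc(G)$ to $\Soc(\textbf{G})$ at this point; where $\Soc(\textbf{G})$ is needed (Theorem~\ref{thm:semisimple-group}), it is computed directly, either in $\textsf{NC}$ via Proposition~\ref{prop:ComputeSocleFittingFree} when $K=1$, or in polynomial time via Kantor--Luks for general quotients.

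It is incorrect because the identity $\Soc(\textbf{G}) = \Soc(G)K/K$ you propose to argue is false in general. Take $G = \mathrm{SL}(2,5)$ and $K = Z(G)$: then $\textbf{G} \cong A_5$ is Fitting-free with $\Soc(\textbf{G}) = A_5$, yet $\Soc(G) = Z(G) = K$, so $\Soc(G)K/K = 1$. Any perfect central extension of a non-abelian simple group gives the same phenomenon. The obstacle you flag at the end is therefore a genuine obstruction, not merely bookkeeping, and no amount of fusion/collapse detection on the factors of $\Soc(G)$ will recover $\Soc(\textbf{G})$ in such cases.
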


\begin{proof}
The conjugation action of $\textbf{G}$ on $\Soc(\textbf{G})$ induces a permutation action on the non-abelian factors of $\Soc(\textbf{G})$. There are $\leq \log |G| \leq \log n! \sim n \log n$ such factors. View \textbf{G} as a permutation group $G^{*}$, with the non-abelian simple factors as the permutation domain. For every given orbit $\mathcal{O}$ of $G^{*}$, the members of $\mathcal{O}$ are precisely the factors of a single non-abelian simple factor of a minimal normal of $\textbf{G}$. Furthermore, all such minimal normal subgroups of $\textbf{G}$ are realized in this way. Now computing orbits in a permutation group is $\textsf{NC}$-computable \cite{McKenzieThesis}. The result now follows.
\end{proof}

Finally, we will show that if we are given a non-abelian simple factor $S$ is a non-abelian simple factor of the socle, for a Fitting-free group $\textbf{G}$, we can compute $N_{\textbf{G}}(S)$ in $\textsf{NC}$.

\begin{proposition}\label{Normalizer}
Let $G, K \leq \Sym(n)$ with $K \trianglelefteq G$. Let $\textbf{G} = G/K$. Suppose that $\textbf{G}$ is Fitting-free, and we are given the non-abelian simple factors of $\Soc(\textbf{G})$. Let $S$ be such a simple factor. Computing $N_{\textbf{G}}(S)$ is $\textsf{NC}$-reducible to \algprobm{Pointwise Stabilizer}. In particular, we can compute $N_{\textbf{G}}(S)$ in $\textsf{NC}$.
\end{proposition}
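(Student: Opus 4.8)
The plan is to identify $N_{\mathbf{G}}(S)$ with the image in $\mathbf{G}$ of a point stabilizer in an auxiliary permutation group built from the conjugation action of $\mathbf{G}$ on the simple factors of its socle. Write $\Soc(\mathbf{G}) = S_1 \times \cdots \times S_m$ with $S = S_1$, the $S_i$ being the given non-abelian simple factors; there are $m \leq \log_2|\mathbf{G}| \leq \log_2(n!) = O(n\log n)$ of them, so $m$ is polynomially bounded. Since $\Soc(\mathbf{G})$ is characteristic (in particular normal) in $\mathbf{G}$ and the $S_i$ are precisely its minimal normal subgroups, conjugation by any $g \in \mathbf{G}$ permutes $\{S_1,\dots,S_m\}$; this yields a homomorphism $\phi : \mathbf{G} \to \Sym(m)$, and $g \in N_{\mathbf{G}}(S_1)$ if and only if $\phi(g)$ fixes the index $1$. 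Let $\bar\phi : G \to \Sym(m)$ be $\phi$ composed with the quotient map $G \to G/K$, so $K \leq \ker\bar\phi$.

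First I would compute $\bar\phi$ on a generating set of $G$. Given a generator $g \in G \leq \Sym(n)$ and a factor $S_i$ (presented by coset representatives that generate it), the subgroup $g S_i g^{-1} \leq \mathbf{G}$ is generated by the cosets $(gsg^{-1})K$ as $s$ runs over the generators of $S_i$, which is well defined because $K \trianglelefteq G$. Now $g S_i g^{-1}$ is the unique factor $S_j$ that does \emph{not} centralize it: for $j' \neq j$ the factor $S_{j'}$ commutes elementwise with $g S_i g^{-1} = S_j$, whereas $S_j$ does not since $Z(S_j) = 1$. To find $j$ in $\textsf{NC}$, one checks, for each $j'$, whether every commutator $[gsg^{-1}, b]$ (with $s$ a generator of $S_i$ and $b$ a generator of $S_{j'}$) lies in $K$; membership in the permutation group $K$ is $\textsf{NC}$-computable by Lemma~\ref{PermutationGroupsNC}(b). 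Running this in parallel over all generators $g$ and all factors $S_i$ produces $\bar\phi(g) \in \Sym(m)$ for each generator, hence a generating set for $G^{*} := \bar\phi(G) \leq \Sym(m)$.

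Next I would form the subgroup $\widetilde{G} = \{(g,\bar\phi(g)) : g \in G\}$ of $\Sym(n) \times \Sym(m) \hookrightarrow \Sym(n+m)$, i.e. the graph of $\bar\phi$, which is generated by the permutations $(g,\bar\phi(g))$ as $g$ ranges over the generators of $G$. Projection to the first $n$ coordinates is an isomorphism $\pi_1 : \widetilde{G} \to G$ (surjective by construction, injective because $G$'s action on $[n]$ is faithful). Let $H$ be the stabilizer in $\widetilde{G}$ of the point of $[n+m]$ indexing $S_1$ — equivalently, the pointwise stabilizer of that one point, an instance of \algprobm{Pointwise Stabilizer}. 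Then $\pi_1(H) = \{x \in G : \bar\phi(x) \text{ fixes } S_1\}$ is exactly the preimage of $N_{\mathbf{G}}(S_1)$ under $G \to G/K$; since $K \leq \ker\bar\phi \leq \pi_1(H)$, we get $N_{\mathbf{G}}(S_1) = \pi_1(H)/K$, and the cosets of the first-coordinate entries of the generators of $H$ form the desired generating set in the quotients model. This is the claimed $\textsf{NC}$-reduction; composing it with the $\textsf{NC}$ algorithm for \algprobm{Pointwise Stabilizer} on permutation groups (Lemma~\ref{PermutationGroupsNC}(d)) gives the ``in particular'' conclusion.

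The main obstacle is computing $\bar\phi$ in $\textsf{NC}$: one must determine which factor $g S_i g^{-1}$ equals without enumerating the (possibly exponentially large) $S_i$. The centralizer/commutator criterion above reduces this to polynomially many $\textsf{NC}$ membership tests in the permutation group $K$, and crucially relies on normality of $K$ so that conjugates and commutators of coset representatives are well defined. The remaining steps — assembling $\widetilde{G}$, taking the point stabilizer, projecting, and passing to cosets — are routine $\textsf{NC}$ operations on permutation groups and their quotients.
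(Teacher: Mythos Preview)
Your proof is correct and follows essentially the same approach as the paper: both reduce $N_{\mathbf{G}}(S)$ to a point stabilizer in the conjugation action of $\mathbf{G}$ on the (polynomially many) simple factors of $\Soc(\mathbf{G})$, after first computing that action in $\textsf{NC}$. The only minor differences are that the paper identifies the image factor via direct membership tests ($gsg^{-1} \in S_j$?) rather than your commutator/centralizer criterion, and is less explicit than your graph-of-$\bar\phi$ construction in $\Sym(n+m)$ about how the pointwise-stabilizer algorithm is actually invoked.
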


\begin{proof}
The conjugation action of $\textbf{G}$ on $\Soc(\textbf{G})$ induces a permutation action on the non-abelian simple factors of $\Soc(\textbf{G})$. There are $\leq \log |\textbf{G}| \leq \log n! \sim n \log n$ such factors. View $\textbf{G}$ as a permutation group $G^{*}$, with the non-abelian simple factors as the permutation domain. 

We claim that, in $\textsf{NC}$, we can decide whether an element of $\textbf{G}$ fixes $S$ setwise. Given generators for $S$ and some $g \in \textbf{G}$, we can use \algprobm{Membership} to test whether $\langle S \rangle = \langle S \rangle^{g}$. By Lemma~\ref{PermutationGroupsNC}(b), \algprobm{Membership} is $\textsf{NC}$-computable. Thus, we can decide in $\textsf{NC}$ whether an element of $G^{*}$ stabilizes $S$ as a point. The point stabilizer $\text{Stab}_{G^{*}}(S) = N_{\textbf{G}}(S)$. By Lemma~\ref{PermutationGroupsNC}(d), \algprobm{Point Stabilizer} is $\textsf{NC}$-computable. The result follows.
\end{proof}

\section{Lifting of Automorphisms}

\subsection{Lifting Automorphisms of $\PSL(d,q)$ to Automorphisms of $\text{SL}(d,q)$}
\label{sec:LiftSL}

In this section, we extend the work of \cite[Section~3]{DasThakkarMPD} to show that we can lift an automorphism of $\PSL(d,q)$ to an automorphism of $\text{SL}(d,q)$ in $\textsf{NC}$. Throughout this section, we will fix $n$ and consider quotients of permutation groups of $\Sym(n)$. We will also assume throughout this section that $d \geq 3$, and $(d,q) \neq (3,2), (4,2)$. We begin with the following observation. 

\begin{observation} \label{obs:PSLd}
Let $G, K \leq \Sym(n)$, with $K \trianglelefteq G$. Let $\textbf{G} := G/K$. Suppose that $\PSL(d,q) \leq \textbf{G}$. Then (i) $d \in O(\sqrt{n \log n})$, and (ii) $q \leq n$. 
\end{observation}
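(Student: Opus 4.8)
The plan is to bound the size parameters $d$ and $q$ of a projective special linear group $\text{PSL}(d,q)$ that embeds into a quotient $\textbf{G} = G/K$ of a permutation group $G \leq \Sym(n)$, by relating the order of $\text{PSL}(d,q)$ to the order of $\textbf{G}$, which in turn is bounded by $|G| \leq n!$. First I would record the standard order formula
\[
|\text{PSL}(d,q)| = \frac{1}{\gcd(d,q-1)} \cdot q^{d(d-1)/2} \prod_{i=2}^{d} (q^{i}-1).
\]
Since $\text{PSL}(d,q) \leq \textbf{G}$, Lagrange's theorem gives $|\text{PSL}(d,q)| \leq |\textbf{G}| \leq |G| \leq n!$, and hence $\log |\text{PSL}(d,q)| \leq \log n! \sim n \log n$. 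The whole argument is then a matter of extracting the two claimed inequalities from this single logarithmic bound.

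For part (i), I would use the crude lower bound $|\text{PSL}(d,q)| \geq q^{d(d-1)/2}/d \geq 2^{d(d-1)/2}/d$ (using $q \geq 2$ and $\gcd(d,q-1) \leq d$; the factor $\prod(q^i-1) \geq 1$ is harmless). Taking logarithms, $d(d-1)/2 - \log d \leq \log|\textbf{G}| \leq n \log n$, so $d(d-1) = O(n \log n)$, which yields $d = O(\sqrt{n \log n})$. For part (ii), I would isolate a single factor: $|\text{PSL}(d,q)| \geq (q^{d}-1)/\gcd(d,q-1) \geq (q^{d}-1)/d$. In fact it is cleaner to note that $\text{PSL}(d,q)$ acts faithfully and transitively on the $(q^{d}-1)/(q-1)$ points of projective space, and this is exactly the minimal faithful permutation degree for $d \geq 3$ (cf. the table in \Prop{MD-AlmostSimple}); since any subgroup of $\textbf{G}$ is a subquotient of $G \leq \Sym(n)$, one gets $q \leq (q^{d}-1)/(q-1) \leq \mu(\text{PSL}(d,q)) \leq$ something linear in $n$ — but the simplest self-contained route is again just orders: $q \leq q^{d} - 1 + 1$, so from $q^{d} - 1 \leq |\text{PSL}(d,q)| \cdot d \leq n! \cdot n$ we get $q \leq q^d \leq n \cdot n! + 1$, which is far weaker than $q \leq n$.

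So the honest obstacle is that the bound $q \leq n$ is genuinely tighter than what the order estimate alone gives, and I would instead prove it via the permutation degree. The key fact is that the smallest faithful permutation representation of $\text{PSL}(d,q)$ has degree $(q^d-1)/(q-1) \geq q$ when $d \geq 2$ (and $\text{PSL}(2,q)$ has degree $q+1 \geq q$), so $\text{PSL}(d,q)$ cannot embed into $\Sym(m)$ for $m < q$. Now since $\text{PSL}(d,q) \leq \textbf{G} = G/K$ with $G \leq \Sym(n)$, and since $\text{PSL}(d,q)$ is simple, it is isomorphic to a section $H/(H \cap K)$ of $G$ with $H \leq G$; being a non-abelian simple section of a permutation group of degree $n$, it has a faithful permutation representation of degree at most $n$ — indeed $H$ itself acts on $[n]$, and $\text{PSL}(d,q) \cong H/(H\cap K)$ need not act on $[n]$ directly, so one must be slightly careful. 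The cleanest version: $\text{PSL}(d,q)$ is generated by $\leq \log n!$ elements, but more to the point, I would invoke that for the quotients model each element of $\textbf{G}$ is a coset $\sigma K$, and the orbits of $\langle \text{PSL}(d,q)\rangle$ lifted appropriately have sizes summing to at most $n$; combined with the fact that $\text{PSL}(d,q)$ must act nontrivially, a minimal orbit has size $\geq \mu(\text{PSL}(d,q)) / (\text{number of orbits})$. Rather than belabor this, the intended argument is almost certainly: $\mu(\text{PSL}(d,q)) \leq n$ because $\text{PSL}(d,q)$ embeds (as a subgroup of a quotient, hence as a subquotient) into $\Sym(n)$ after restricting to a suitable block system, and $\mu(\text{PSL}(d,q)) = (q^d-1)/(q-1) \geq q$; therefore $q \leq n$. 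I would flag this step — correctly justifying $\mu(\text{PSL}(d,q)) \leq n$ in the quotients model — as the one place requiring care, since a quotient of a degree-$n$ permutation group need not itself be a degree-$n$ permutation group.
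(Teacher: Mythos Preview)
Your argument for (i) matches the paper's exactly: $|\PSL(d,q)| = q^{\Theta(d^2)} \leq |\textbf{G}| \leq |G| \leq n!$, then take logarithms to get $d^2 \log q = O(n \log n)$.

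For (ii) you arrive at the same mechanism as the paper---the minimal faithful permutation degree of $\PSL(d,q)$ is at least $q$---and you correctly identify the delicate point: in the quotients model one only knows that $\PSL(d,q)$ is a \emph{section} of $\Sym(n)$, not a priori a subgroup, so $\mu(\PSL(d,q)) \leq n$ is not immediate. The paper does not work this step out either; its entire argument for (ii) is ``let $q'$ be the largest prime power with $\PSL(d,q') \leq \Sym(n)$; then $q' \leq n$ by \cite{KantorSeress}; clearly $q \leq q'$.'' The ``clearly'' is exactly the section-to-subgroup passage you flagged, and the paper simply asserts it, leaning on the citation. So your proof and the paper's coincide in substance; the concern you raised is genuine but is treated by the paper as known.
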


\begin{proof}
We note that $|\PSL(d,q)| \in q^{O(d^2)} \leq |\textbf{G}| \leq |G|$. It follows that $d^{2} \log q \leq \log |G| \leq \log n! \leq n \log n$, which establishes (i). However, this only provides the bound that $\log q \leq n \log n$. To see that $q \leq n$, let $q'$ be the largest prime power such that $\text{PSL}(d, q') \leq \Sym(n)$. In this case, $q' \leq n$ \cite{KantorSeress}. Clearly, $q \leq q'$, and so $q \leq n$, as desired.  
\end{proof}

In light of Observation~\ref{obs:PSLd}, we may assume without loss of generality that $d, q$ are given in unary. Throughout this section, let $Z \leq \text{SL}(d, q)$ denote the subgroup of all nonzero scalar transformations of $\mathbb{F}_{q}^{d}$. Note that $Z = Z(\text{SL}(d,q))$. 

\begin{proposition}[{\cite[Proposition~3.1]{DasThakkarMPD}}] \label{prop-UniEle}
Let $d \geq 3$, $p$ be prime, and $e$ such that $q = p^{e}$. Suppose that $(d, q) \not \in \{(3,2), (4,2)\}$. Each coset of $\text{SL}(d,q)/Z$ has at most one element of order $p$. 
\end{proposition}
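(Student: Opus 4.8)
The plan is to recast "element of order $p$'' as "non-identity unipotent element'' and then extract a contradiction from eigenvalues. Suppose some coset of $\SL(d,q)/Z$ contains two elements $x$ and $y$ of order $p$. Since $Z = Z(\SL(d,q))$ consists precisely of the scalar matrices $\lambda I$ with $\lambda^{d}=1$, we may write $y = \lambda x$ for some $\lambda \in \mathbb{F}_{q}^{*}$ with $\lambda^{d}=1$. It then suffices to show $\lambda = 1$, which forces $y = x$ and proves the proposition.

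First I would record the needed characterization: \emph{if $g \in \SL(d,q)$ has order a power of $p$, then $g$ is unipotent}, and hence the characteristic polynomial of $g$ over $\overline{\mathbb{F}}_{q}$ is $(t-1)^{d}$, i.e.\ every eigenvalue of $g$ equals $1$. This is where the characteristic-$p$ hypothesis ($q = p^{e}$) is used. Indeed, working in the commutative subring $\mathbb{F}_{q}[g] \subseteq M_{d}(\mathbb{F}_{q})$, the Frobenius ("freshman's dream'') identity gives $(g-I)^{p^{m}} = g^{p^{m}} - (-1)^{p^{m}} I = g^{p^{m}} - I$ for every $m$, using that $g$ commutes with $I$ and that $-I = I$ when $p=2$. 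So $g^{p^{m}} = I$ forces $(g-I)^{p^{m}} = 0$; thus $g - I$ is nilpotent, $g$ is unipotent, and all its eigenvalues over the algebraic closure equal $1$. (Only this direction is needed; we do not need the converse that every unipotent element has $p$-power order, which in any case can fail when $d > p$.)

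Applying this to $x$ shows all eigenvalues of $x$ equal $1$; applying it to $y = \lambda x$ shows all eigenvalues of $y$ equal $1$. But the eigenvalues of $\lambda x$ are exactly $\lambda$ times those of $x$, hence all equal $\lambda$. Therefore $\lambda = 1$, so $y = x$, and the coset contains at most one element of order $p$. (The coset $Z$ itself poses no exception: any scalar of order $p$ would be a unipotent scalar, hence $I$, of order $1$; equivalently $p \nmid q-1$.)

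I do not expect a real obstacle here: the single point requiring care is the passage from $x^{p} = I$ to "$x$ is unipotent'' in characteristic $p$ via the Frobenius identity, after which the argument is pure eigenvalue bookkeeping. I would also remark that the hypotheses $d \geq 3$ and $(d,q) \notin \{(3,2),(4,2)\}$ are not actually invoked in this argument — they are inherited from the ambient setting in which $\SL(d,q)/Z = \PSL(d,q)$ is the simple group of interest — so the statement in fact holds for all $d \geq 2$; I would nonetheless keep the hypotheses as stated for consistency with the remainder of the section.
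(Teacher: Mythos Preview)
Your argument is correct. The paper itself does not prove this proposition---it is quoted from Das--Thakkar with only a citation---so there is no in-paper proof to compare against. Your route (order-$p$ in characteristic $p$ implies unipotent, hence all eigenvalues equal $1$; a nontrivial scalar multiple would shift every eigenvalue to $\lambda \neq 1$, contradiction) is the standard and cleanest one.

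One cosmetic slip: the intermediate expression should read $g^{p^{m}} + (-1)^{p^{m}} I$ rather than $g^{p^{m}} - (-1)^{p^{m}} I$; more directly, in characteristic $p$ one has $(a-b)^{p} = a^{p} - b^{p}$, so $(g-I)^{p^{m}} = g^{p^{m}} - I$ immediately. Your conclusion is unaffected. Your observation that the hypotheses $d \geq 3$ and $(d,q) \notin \{(3,2),(4,2)\}$ play no role in this particular statement is also accurate; they are carried along only because those are the parameters for which the surrounding machinery (lifting automorphisms from $\PSL$ to $\SL$) is invoked.
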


\begin{theorem}[\cite{Dieudonne}]
$\Aut(\PSL(d,q)) \cong \Aut(\text{SL}(d,q))$. 
\end{theorem}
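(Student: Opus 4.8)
The plan is to show $\Aut(\mathrm{PSL}(d,q)) \cong \Aut(\mathrm{SL}(d,q))$ by exhibiting a natural map between the two automorphism groups and checking it is an isomorphism. The key point is that $Z := Z(\mathrm{SL}(d,q))$, the group of scalar matrices of determinant $1$, is a \emph{characteristic} subgroup of $\mathrm{SL}(d,q)$: indeed, for $d \geq 2$ (and excluding the tiny exceptions where $\mathrm{PSL}$ fails to be simple), $Z$ is precisely the center of $\mathrm{SL}(d,q)$, and the center is always characteristic. Consequently every $\varphi \in \Aut(\mathrm{SL}(d,q))$ fixes $Z$ setwise and descends to an automorphism $\bar\varphi$ of the quotient $\mathrm{SL}(d,q)/Z = \mathrm{PSL}(d,q)$. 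This gives a group homomorphism $\rho : \Aut(\mathrm{SL}(d,q)) \to \Aut(\mathrm{PSL}(d,q))$, $\varphi \mapsto \bar\varphi$, and the task reduces to proving $\rho$ is a bijection.

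First I would handle injectivity. Suppose $\varphi \in \ker\rho$, i.e., $\varphi$ induces the identity on $\mathrm{PSL}(d,q)$. Then for every $U \in \mathrm{SL}(d,q)$ we have $\varphi(U) U^{-1} \in Z$, so $\varphi(U) = \lambda(U)\, U$ for a scalar $\lambda(U) \in \mathbb{F}_q^\times$ with $\lambda(U)^d = 1$. A short computation using multiplicativity of $\varphi$ shows $\lambda : \mathrm{SL}(d,q) \to \mu_d(\mathbb{F}_q)$ is a homomorphism into an abelian group, hence kills the commutator subgroup $[\mathrm{SL}(d,q),\mathrm{SL}(d,q)] = \mathrm{SL}(d,q)$ (perfectness of $\mathrm{SL}(d,q)$ for $d \geq 2$, again away from the excluded small cases). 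So $\lambda \equiv 1$ and $\varphi = \mathrm{id}$; thus $\rho$ is injective.

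For surjectivity I would argue that every automorphism of $\mathrm{PSL}(d,q)$ lifts. Here one can invoke the classification of automorphisms of $\mathrm{PSL}(d,q)$ via Theorem~\ref{carter Thm-decomposition}: every automorphism of $\mathrm{PSL}(d,q)$ is a product of inner, diagonal, field, and graph automorphisms. Each of these visibly has a counterpart acting on $\mathrm{SL}(d,q)$ itself --- conjugation by a matrix in $\mathrm{GL}(d,q)$ (covering inner and diagonal), the entrywise field automorphism $\mathbbm{f}_\sigma$, and the inverse-transpose map $\mathbbm{g}$, all as in Definition~\ref{def:automorphisms} --- and each of these $\mathrm{SL}$-automorphisms preserves $Z$ and descends to exactly the given generator of $\Aut(\mathrm{PSL}(d,q))$. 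Composing lifts of the factors produces a lift of an arbitrary automorphism, so $\rho$ is onto. (Alternatively, one can cite that $\mathrm{SL}(d,q)$ is the universal central extension / Schur cover of $\mathrm{PSL}(d,q)$ for $d \geq 2$ outside a handful of exceptions, whence every automorphism of $\mathrm{PSL}(d,q)$ lifts canonically to the cover; but the explicit generator-by-generator argument is more self-contained given what the paper has already set up.)

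The main obstacle is bookkeeping the small exceptional cases. The isomorphism $\Aut(\mathrm{PSL}(d,q)) \cong \Aut(\mathrm{SL}(d,q))$ as stated (attributed to Dieudonn\'e) holds for all $d,q$ with $\mathrm{PSL}(d,q)$ simple, i.e. $(d,q) \notin \{(2,2),(2,3)\}$; and the perfectness/Schur-multiplier facts I rely on have genuine exceptions precisely at $(d,q) \in \{(2,2),(2,3),(3,2),(4,2),(2,4),(2,9),\dots\}$ where either $\mathrm{SL}(d,q)$ is not perfect or the Schur multiplier of $\mathrm{PSL}(d,q)$ is larger than $Z$. For the cases relevant to this paper we are already assuming $(d,q) \notin \{(3,2),(4,2)\}$ and $d \geq 3$ (compare Proposition~\ref{prop-UniEle}), so $\mathrm{SL}(d,q)$ is perfect and $Z(\mathrm{SL}(d,q))$ really is the full Schur multiplier preimage; I would simply state the theorem under those hypotheses and cite \cite{Dieudonne} for the full generality, deferring the case analysis rather than reproducing it. Beyond that, the proof is routine: characteristic subgroup gives the map, perfectness gives injectivity, and the automorphism classification gives surjectivity.
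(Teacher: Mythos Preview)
The paper does not prove this theorem; it simply cites Dieudonn\'e, and immediately afterward notes (again by citation, to \cite[Proposition~3.3]{DasThakkarMPD}) that the explicit map $\Psi : \Aut(\mathrm{SL}(d,q)) \to \Aut(\mathrm{PSL}(d,q))$, $\alpha \mapsto \bar\alpha$, is an isomorphism. Your map $\rho$ is exactly this $\Psi$, and your argument --- injectivity from perfectness of $\mathrm{SL}(d,q)$, surjectivity by lifting the inner/diagonal/field/graph generators of $\Aut(\mathrm{PSL}(d,q))$ furnished by Theorem~\ref{carter Thm-decomposition} and Definition~\ref{def:automorphisms} --- is a correct and standard way to supply the details the paper outsources.

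One small caution on your parenthetical alternative: the Schur-cover route has more exceptions than your hedge suggests (e.g.\ $(d,q)=(3,4)$ also fails, since $\mathrm{SL}(3,4)$ is not the full cover of $\mathrm{PSL}(3,4)$), so if you were to rely on it you would need a fuller case list. But since you explicitly prefer the generator-by-generator lifting and use the cover argument only as an aside, this does not affect the soundness of your main line.
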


Let $\Psi : \Aut(\text{SL}(d,q)) \to \Aut(\PSL(d,q))$ be the map that takes $\alpha \in \Aut(\text{SL}(d,q))$ to the map $\overline{\alpha} : \PSL(d,q) \to \PSL(d,q)$, where $\overline{\alpha}(UZ) = \alpha(U)Z$. Das and Thakkar established \cite[Proposition~3.3]{DasThakkarMPD} that $\Psi$ is indeed an isomorphism. 

Theorem~\ref{thm:ConstructiveRecognition}  provides a framework to convert between the matrix and permutation representations. We take advantage of this in tandem with a prescribed generating set of matrices \cite[Page~185]{carter-book} to lift automorphisms of $\PSL(d,q)$ to automorphisms of $\text{SL}(d,q)$.

\begin{lemma}[cf. {\cite[Lemma~3.4]{DasThakkarMPD}}]\label{construct-alpha}
Given $\lambda \in \Aut(\PSL(d,q))$ by its action on a generating set of $\text{PSL}(d, q)$, we can find $\alpha \in \Aut(\text{SL}(d,q))$ such that $\lambda = \overline{\alpha}$. Moreover, $\alpha$ is specified by its action on a generating set of size $O(qd^{2})$. 

If $\PSL(d,q)$ is given as a quotient of permutation groups $G/K$, where $K \trianglelefteq G \leq \Sym(n)$, our procedure is $\textsf{NC}$-computable. 
\end{lemma}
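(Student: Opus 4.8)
The plan is to exhibit an explicit lift of $\lambda$ and verify it is a well-defined automorphism of $\mathrm{SL}(d,q)$, tracking the complexity of each step. First I would fix the standard generating set of $\mathrm{SL}(d,q)$ consisting of the elementary transvections $E_{ij}(\beta) = I + \beta e_{ij}$ for $i \neq j$ and $\beta \in \mathbb{F}_q$; this set has size $O(qd^2)$, which matches the claimed bound, and by Observation~\ref{obs:PSLd} we may treat $d, q$ as given in unary so this generating set has polynomial size. The images $E_{ij}(\beta) Z$ generate $\mathrm{PSL}(d,q)$, and applying $\lambda$ to each of these cosets gives us $\lambda(E_{ij}(\beta) Z)$, a coset of $Z$ in $\mathrm{SL}(d,q)$. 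Each $E_{ij}(\beta)$ is a $p$-element (it is unipotent, of order $p$ when $\beta \neq 0$), and $\lambda$ preserves element order, so $\lambda(E_{ij}(\beta)Z)$ is a coset of $Z$ containing an element of order $p$ (or the identity, when $\beta = 0$). By Proposition~\ref{prop-UniEle}, since $(d,q) \notin \{(3,2),(4,2)\}$, that element is \emph{unique}. Define $\alpha(E_{ij}(\beta))$ to be this unique element of order $p$ (or $I$ if $\beta = 0$). This determines a candidate map on generators; the content is to show it extends to a homomorphism $\mathrm{SL}(d,q) \to \mathrm{SL}(d,q)$ and that the extension is an automorphism with $\overline{\alpha} = \lambda$.

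For well-definedness, I would argue that $\alpha$ respects all defining relations among the $E_{ij}(\beta)$: these are the Steinberg relations (additivity $E_{ij}(\beta)E_{ij}(\gamma) = E_{ij}(\beta+\gamma)$, the commutator formulas, and the relations encoding the Weyl group / torus action). The key point is that since $\lambda$ is a homomorphism on $\mathrm{PSL}(d,q)$, each such relation holds \emph{modulo $Z$} after applying $\alpha$; that is, $\alpha$ applied to the two sides of a relation yields two elements of $\mathrm{SL}(d,q)$ lying in the same $Z$-coset. To upgrade this to genuine equality I would use the uniqueness of $p$-elements in cosets again: both sides, being products of the chosen $p$-elements in configurations that are again unipotent (e.g. for additivity both sides are transvections in the same root subgroup, for the commutator formulas both sides lie in a single root subgroup or are trivial), must coincide once we know they lie in the same coset. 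This is exactly the argument of \cite[Lemma~3.4]{DasThakkarMPD}, now carried out uniformly over the polynomial-size generating set. Then $\overline{\alpha} = \lambda$ holds by construction on generators, hence everywhere; $\alpha$ is injective because $\overline{\alpha} = \lambda$ is (so $\ker\alpha \leq Z$, and $\alpha$ is nontrivial on $Z$ since it must be nontrivial, forcing $\ker \alpha = 1$ as $\mathrm{SL}(d,q)$ is generated by transvections on which $\alpha$ is a bijection to transvections), and surjective by finiteness; so $\alpha \in \Aut(\mathrm{SL}(d,q))$, and by \cite[Proposition~3.3]{DasThakkarMPD} it is the unique preimage $\Psi^{-1}(\lambda)$.

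For the complexity, the work is: (i) enumerate the $O(qd^2)$ generators $E_{ij}(\beta)$ — trivially parallelizable; (ii) for each, compute the coset $\lambda(E_{ij}(\beta)Z)$ — this is applying the given automorphism $\lambda$ to a generator, which amounts to evaluating a word and taking a coset, and in the quotients model this is a polynomial-time group computation, while for $K=1$ permutation groups membership and word-evaluation are in $\textsf{NC}$ by Lemma~\ref{PermutationGroupsNC}; (iii) within each coset, identify the unique element of order $p$ — the coset has size $|Z| = \gcd(d, q-1) \leq d$, so we simply scan its $\leq d$ elements and test which has order $p$ (order-testing of a single permutation, or of a matrix of dimension $O(\log n)$, is $\textsf{NC}$-computable). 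All three steps run in the stated bounds, and since the steps are independent across the $O(qd^2)$ generators they parallelize cleanly. The main obstacle — and the only genuinely delicate point — is step (ii)/(iii) in the \emph{quotients} setting: we must actually realize $\mathrm{PSL}(d,q)$ concretely enough to apply $\lambda$ and to enumerate $Z$-cosets, which is where we invoke the constructive recognition machinery of Theorem~\ref{thm:ConstructiveRecognition} to pass between the quotient $G/K$ and the matrix copy $\mathbf{G}_1$; once we are working with matrices, identifying $Z$ and scanning cosets is routine. Everything else is a faithful parallelization/lifting of the Das--Thakkar argument.
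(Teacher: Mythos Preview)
Your construction of $\alpha$ on the transvection generators and your complexity analysis are essentially identical to the paper's proof. The paper writes down the same generating set $L = \{I + \beta e_{ij}\}$, uses constructive recognition (Theorem~\ref{thm:ConstructiveRecognition}) to move between the permutation/quotient and matrix pictures, evaluates $\lambda(m_2(U))$ via the word problem (Lemmas~\ref{PermutationGroupsNC} and~\ref{QuotientsNC}), lists the coset of size $|Z| = \gcd(d,q-1)$, and picks out the unique order-$p$ element using Proposition~\ref{prop-UniEle}.

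The one divergence is your middle paragraph, where you try to verify directly that the map you have defined on generators extends to a homomorphism by checking the Steinberg relations. This step is both unnecessary and, as written, not justified. It is unnecessary because the paper (and \cite{DasThakkarMPD}) already establish that $\Psi : \Aut(\mathrm{SL}(d,q)) \to \Aut(\mathrm{PSL}(d,q))$ is an isomorphism, so the lift $\alpha = \Psi^{-1}(\lambda)$ \emph{exists} as an honest automorphism; the only task is to compute its values on $L$. Since $\alpha(U)$ has order $p$ and lies in the coset $\lambda(m_2(U))$, uniqueness forces your candidate to equal $\alpha(U)$, and you are done. Your relation-checking argument is not justified as written because you claim that after applying your candidate map the two sides of a Steinberg relation are again ``transvections in the same root subgroup'' --- but a priori your candidate value $\tilde\alpha(E_{ij}(\beta))$ is merely some order-$p$ element, not known to be a transvection, so you cannot conclude the product $\tilde\alpha(E_{ij}(\beta))\tilde\alpha(E_{ij}(\gamma))$ has order $p$. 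The only clean way to close this is to invoke the existence of $\alpha$, at which point the relation check is redundant. Drop that paragraph and appeal directly to $\Psi$ being an isomorphism; the rest of your proof then matches the paper.
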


\begin{proof}
By Lemma~\ref{QuotientsNC}, we can compute $|\PSL(d,q)|$ in $\textsf{NC}$. As $q$ is given in unary, we may write down $\mathbb{F}_{q}$ in $\textsf{NC}$ (by writing down the Cayley tables for $(\mathbb{F}_{q}, +)$ and $(\mathbb{F}_{q}^{*}, \times)$). For $i, j \in [d]$, let $e_{i,j}$ be the $d \times d$ matrix with $1$ in the $(i,j)$ entry and $0$ otherwise. Furthermore, in $\textsf{NC}$, we may write down the following set:
\[
L = \{ I + \beta e_{i,j} \mid \beta \in \mathbb{F}_{q}^{*}; i, j \in [d] \text{ s.t. } i \neq j \}.
\]
Note that $\text{SL}(d,q) = \langle L \rangle$ (see e.g., \cite[Page~185]{carter-book}).

For each $U \in L$, $\text{ord}(U) = p$, which is coprime to $|Z| = \text{gcd}(d, q-1)$. Let $m_{2} : \text{SL}(d,q) \to \text{PSL}(d, q)$ be the canonical projection map, with kernel $Z$. We have the following commutative diagram: 

$$ \begin{tikzcd}
\mathrm{SL}(d,q) \arrow{r}{\alpha} \arrow[swap]{d}{m_2} & \mathrm{SL}(d,q) \arrow{d}{m_2} \\
\PSL(d,q) \arrow{r}{\lambda}& \PSL(d,q)
\end{tikzcd}.$$ 

From the above diagram, we have that $m_{2} \circ \alpha = \lambda \circ m_{2}$. Now observe that if $U \in L$, then $\text{ord}(\alpha(U)) = \text{ord}(U) = p$. Furthermore, observe that there exists some $V \in \text{SL}(d,q)$ such that $\lambda(m_{2}(U)) = VZ$.

We now need to define the image $\alpha(U)$ for each $U \in L$. As $\text{ord}(U) = p$ and $\alpha \in \Aut(\text{SL}(d,q))$, we have that $\text{ord}(\alpha(U)) = p$. As the diagram is commutative, $\alpha(U)$ is an element of the coset $VZ$. So we have that $\alpha(U), m_{2}(\alpha(U)) \in VZ$. However, by \cite[Proposition~3.1]{DasThakkarMPD}, there is at most one element of $VZ$ of order $p$. Hence, $\alpha(U) = m_{2}(\alpha(U))$. As $|Z| = \text{gcd}(d, q-1) \leq n$, we can in $\textsf{NC}$, using Lemma~\ref{lem:List}, list $Z$ and hence $VZ$. Thus, we can in $\textsf{NC}$ find $\alpha(U) = m_{2}(\alpha(U))$. 

Now, for each $U \in L$, we can in $\textsf{NC}$ construct a word over the specified generating set of $\PSL(d,q)$. Here, we use the fact that $\PSL(d,q)$ is given as a quotient of permutation groups. Thus, by Lemma~\ref{QuotientsNC}, we can evaluate $\lambda(m_{2}(U))$ in $\textsf{NC}$. The corresponding word, in particular, yields a coset representative of $VZ$. As we have written down $Z$, we can identify $V$ in $\textsf{NC}$. The result now follows.
\end{proof}

We now turn to identifying the type of a given automorphism $\alpha$ of $\text{SL}(d,q)$. We follow the strategy of Das and Thakkar \cite{DasThakkarMPD}. Let $G = \text{SL}(d,q)$ and $V = \mathbb{F}_{q}^{d}$, where $q = p^{e}$ for some prime $p$ and some $e > 0$. Consider a representation $\vartheta : \text{SL}(d,q) \to \text{GL}(d,q)$, which takes $U$ to $\vartheta_{U}$, where $\vartheta_{U} : V \to V$ is defined by $\vartheta_{U}(v) = U \cdot v$. Similarly, consider the representation $\psi^{(\alpha)} : \text{SL}(d,q) \to \text{GL}(d,q)$ which takes $U \in \text{SL}(d,q)$ to $\psi_{U}^{(\alpha)} : V \to V$ defined by $\psi_{U}^{(\alpha)}(v) = \alpha(U) \cdot v$. Note that we can define $\vartheta, \psi^{(\alpha)}$ analogously for ${\rm{Sp}}(4, 2^e)$ ($e \geq 2$). By Lemma~\ref{lem: Irreducibility}, both $\vartheta_{U}$ and $\psi^{(\alpha)}$ are irreducible.

Das and Thakkar \cite[Remark~3.5]{DasThakkarMPD} noted that $\vartheta \sim \psi^{(\alpha)}$ if and only if there exists $F \in \text{GL}(d, q)$ such that for all $U \in \text{SL}(d,q)$, we have that $FUF^{-1} = \alpha(U)$. 

\begin{lemma}[cf. {\cite[Lemma~3.6]{DasThakkarMPD}}]\label{lem-type-checking}
Let $G$ be one of the following: $\SL(d,q)$ ($(d, q) \not \in \{ (3,2), (4,2)\})$, ${\rm{Sp}}(4, 2^e)$ ($e \geq 2$). Let $\alpha$ be an automorphism of $G$. Suppose that $\alpha$ is specified by its action on a generating set $B$ of $G$, consisting of $d \times d$ matrices over $\mathbb{F}_{q}$ in $G$. Furthermore, suppose that $\vartheta$ and $\psi^{(\alpha)}$ are irreducible. Then in $\textsf{NC}^{2}$, we can decide if $\vartheta \sim \psi^{(\alpha)}$. 
\end{lemma}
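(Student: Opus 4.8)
The plan is to reduce "$\vartheta \sim \psi^{(\alpha)}$" to finding an intertwiner $F \in \mathrm{GL}(d,q)$ and then testing whether such an $F$ exists by linear algebra over $\mathbb{F}_q$. By the remark of Das and Thakkar cited above, $\vartheta \sim \psi^{(\alpha)}$ if and only if there is $F \in \mathrm{GL}(d,q)$ with $F U F^{-1} = \alpha(U)$ for all $U \in G$, equivalently $F U = \alpha(U) F$ for all $U \in G$. Since $G = \langle B \rangle$, it suffices to impose this for the finitely many generators $U \in B$ (there are $O(qd^2)$ of them, all available explicitly since $q,d$ are given in unary). So first I would, for each $U \in B$, write down the linear system $F U - \alpha(U) F = 0$ in the $d^2$ unknown entries of $F$; stacking these over all $U \in B$ gives a homogeneous linear system $M x = 0$ of size $O(qd^2 \cdot d^2) \times d^2$ over $\mathbb{F}_q$, which can be written down in $\textsf{NC}$ (indeed $\textsf{NC}^1$) given the matrices for $U$ and $\alpha(U)$.

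Next I would compute the solution space $W = \ker M \subseteq \mathbb{F}_q^{d^2}$, interpreted as a space of $d\times d$ matrices over $\mathbb{F}_q$; this is the space of all linear maps intertwining $\vartheta$ and $\psi^{(\alpha)}$. Computing a basis for the kernel of a matrix over a finite field is in $\textsf{NC}^2$ by standard parallel linear algebra (e.g., \cite{MulmuleyRank, Eberly91}), and here $q$ is polynomial in $n$ so arithmetic in $\mathbb{F}_q$ is $\textsf{NC}$-computable. Now invoke irreducibility: since $\vartheta$ and $\psi^{(\alpha)}$ are both irreducible (by hypothesis, which holds by \Cref{lem: Irreducibility}), Schur's Lemma (\Cref{thm-schur}) says every nonzero element of $W$ is invertible, and moreover $\dim_{\mathbb{F}_q} W \leq 1$: if $F_1, F_2$ are two nonzero intertwiners then $F_1^{-1}F_2$ commutes with all of $\vartheta(G)$, hence (again by Schur, over the possibly-not-algebraically-closed field $\mathbb{F}_q$, where one must be slightly careful) $F_1^{-1} F_2$ is a scalar. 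Actually the cleanest route avoiding field-closure subtleties: $\vartheta \sim \psi^{(\alpha)}$ iff $W \neq 0$ \emph{and} $W$ contains an invertible element; but by Schur every nonzero element of $W$ is already invertible, so $\vartheta \sim \psi^{(\alpha)}$ iff $W \neq 0$, i.e. iff $\dim \ker M \geq 1$. Thus the decision procedure is simply: compute $\operatorname{rk}(M)$ in $\textsf{NC}^2$ and check whether it is less than $d^2$.

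The main obstacle — really the only subtle point — is justifying that $W \neq 0$ is \emph{equivalent} to equivalence of the representations, rather than merely necessary. One direction is immediate ($\vartheta \sim \psi^{(\alpha)}$ gives a genuine isomorphism, hence an invertible, hence nonzero, element of $W$). For the converse one needs that any nonzero $F \in W$ is invertible, which is exactly Schur's Lemma as stated in \Cref{thm-schur} (it is phrased there for a linear map intertwining two irreducible representations, concluding $F$ is invertible or zero — no algebraic closure needed). So no delicate argument about $\operatorname{End}_G(\vartheta)$ being a division algebra over $\mathbb{F}_q$ is actually required for the decision problem; one only needs the rank of one explicit matrix. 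Assembling: write down $M$ ($\textsf{NC}^1$), compute its rank over $\mathbb{F}_q$ ($\textsf{NC}^2$), accept iff $\operatorname{rk}(M) < d^2$. The total cost is dominated by the rank computation, giving the claimed $\textsf{NC}^2$ bound.
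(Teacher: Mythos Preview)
Your proposal is correct and follows essentially the same approach as the paper: set up the homogeneous linear system $FU_j = \alpha(U_j)F$ over the generators $U_j \in B$, solve it in $\textsf{NC}^2$ via parallel linear algebra (the paper cites Mulmuley for Gaussian elimination, you cite rank computation), and use Schur's Lemma to conclude that any nonzero solution is automatically invertible, so that $\vartheta \sim \psi^{(\alpha)}$ if and only if the system has a nonzero solution. Your digression about $\dim W \leq 1$ and field-closure issues is unnecessary, as you yourself note, and the paper simply omits it.
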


\begin{proof}[Proof of Lemma~\ref{lem-type-checking}]
We follow the strategy of \cite[Lemma~3.6]{DasThakkarMPD}. Our goal is to find an $F \in \text{GL}(d, q)$ such that for all $U \in G$, we have that $FUF^{-1} = \alpha(U)$. Let $B = \{ U_1, \ldots, U_b\}$ be a generating set for $G$. Consider the following system of linear equations:
\[
FU_{j} = \alpha(U_{j})F, \text{ for all } j \in [b],
\]

\noindent where the entries of $F$ are the unknowns. This is a homogeneous system of $bd^{2}$ equations with $d^{2}$ unknowns. We can solve this system of equations in $\textsf{NC}^{2}$ using Gaussian elimination \cite{MulmuleyRank}. If there is a unique $F = 0$, then there is no invertible matrix $F$ satisfying $FU = \alpha(U)F$ for all $U \in G$. In this case, $\vartheta \not \sim \psi^{(\alpha)}$. Otherwise, there is a non-zero solution satisfying $F\alpha(U) = UF$. But by Lemma~\ref{thm-schur}, we have that $F$ is invertible. So $\vartheta \sim \psi^{(\alpha)}$. The result now follows.
\end{proof}

\begin{observation}[{\cite[Observation~3.7]{DasThakkarMPD}}]\label{Observation-typechecking-PSL}
Let $G$ be one of the following: $\SL(d,q)$, ${\rm{Sp}}(4, 2^e)$ ($e \geq 2$). Let $\alpha \in \aut(G)$. Then the following are equivalent:
\begin{enumerate}[label=(\alph*)]
\item The irreducible representations $\vartheta, \psi^{(\alpha)}$ are equivalent,
\item there exists a diagonal matrix $F \in \mathrm{GL}(d,q)$ such that $\alpha(U)=F^{-1}UF$ for all $U \in G$, 
\item $\alpha$ is a diagonal automorphism of $G$. Moreover, if a diagonal matrix $F \in \mathrm{GL}(d,q)$ also belongs to $G$ then $\alpha$ is also an inner automorphism of $G$.
\end{enumerate}
\end{observation}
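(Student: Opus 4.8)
The plan is to prove the three implications $(a) \Rightarrow (b) \Rightarrow (c) \Rightarrow (a)$, leveraging the irreducibility results already established (Lemma~\ref{lem: Irreducibility}) together with Schur's Lemma and the classification of automorphisms (Theorem~\ref{carter Thm-decomposition}, Definition~\ref{def:automorphisms}). The equivalence $(a) \Leftrightarrow (b)$ is essentially the translation of the definition of equivalence of representations, which was already noted by Das and Thakkar; I would spell it out as follows. If $\vartheta \sim \psi^{(\alpha)}$, then by Definition~\ref{def-equivalentOfRep.} there is an invertible $F : V \to V$ with $F \circ \vartheta_U = \psi^{(\alpha)}_U \circ F$ for all $U \in G$, i.e. $F U v = \alpha(U) F v$ for all $v \in V$, i.e. $F U = \alpha(U) F$ as matrices. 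Replacing $F$ by $F^{-1}$ (which is again in $\mathrm{GL}(d,q)$) gives $\alpha(U) = F^{-1} U F$, which is exactly $(b)$; the converse reverses this computation verbatim. Note that invertibility of the intertwiner is guaranteed for free here by Schur's Lemma (Lemma~\ref{thm-schur}) applied to the irreducible representations $\vartheta$ and $\psi^{(\alpha)}$, since a nonzero solution $F$ to $FU = \alpha(U)F$ must be invertible.

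For $(b) \Rightarrow (c)$: suppose $\alpha(U) = F^{-1} U F$ for a fixed $F \in \mathrm{GL}(d,q)$. By Theorem~\ref{carter Thm-decomposition} (or the analogous decomposition in Definition~\ref{def:automorphisms}), $\alpha$ decomposes as a product of inner, diagonal, field, and graph automorphisms; I want to argue that conjugation by an element of $\mathrm{GL}(d,q)$ forces the field and graph components to be trivial. The graph automorphism sends $U \mapsto (U^t)^{-1}$, which is not realized by conjugation by \emph{any} element of $\mathrm{GL}(d,q)$ on the natural module (it acts on a non-isomorphic module, the dual); similarly a nontrivial field automorphism $\mathbbm{f}_\sigma$ is not inner-by-$\mathrm{GL}$. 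The clean way to see this is: $\mathrm{GL}(d,q)$ acting by conjugation on $\mathrm{SL}(d,q)$ induces precisely the inner-and-diagonal automorphisms (this is standard — inner automorphisms of $\mathrm{SL}$ come from $\mathrm{SL}$ itself, and the diagonal automorphisms are exactly those induced by $\mathrm{GL}/\mathrm{SL}\cdot Z$), so any $\alpha$ of the form $U \mapsto F^{-1}UF$ with $F \in \mathrm{GL}(d,q)$ lies in the subgroup of $\Aut(G)$ generated by inner and diagonal automorphisms, giving $(c)$. The same reasoning applies to $\mathrm{Sp}(4,2^e)$ using the appropriate conjugating group.

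For $(c) \Rightarrow (a)$: if $\alpha$ is inner or diagonal, then by Definition~\ref{def:automorphisms}(a),(b) there is a matrix $F$ — either in $G$ itself, or a diagonal matrix in $\mathrm{GL}(d,q)$ — such that $\alpha(U) = F^{-1} U F$ for all $U$. Then $F$ itself is the intertwiner: $F \vartheta_U = \psi^{(\alpha)}_U F$, so $\vartheta \sim \psi^{(\alpha)}$ by Definition~\ref{def-equivalentOfRep.}. I expect the main subtlety to be the $(b) \Rightarrow (c)$ direction — specifically, being careful that "conjugation by $\mathrm{GL}(d,q)$" captures exactly inner-plus-diagonal and nothing more, and that field/graph automorphisms genuinely fall outside this. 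For the exceptional-looking small cases excluded elsewhere ($(d,q) \in \{(3,2),(4,2)\}$) irreducibility can fail, but since this observation only concerns $\SL(d,q)$ and $\mathrm{Sp}(4,2^e)$ with $e \geq 2$ where Lemma~\ref{lem: Irreducibility} applies, Schur's Lemma is available throughout and no separate case analysis is needed. Everything else is a direct unwinding of definitions.
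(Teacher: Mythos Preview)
The paper does not supply a proof of this observation; it is stated as a citation to \cite[Observation~3.7]{DasThakkarMPD}, with the equivalence $(a)\Leftrightarrow(b)$ already recorded just above as \cite[Remark~3.5]{DasThakkarMPD}. Your argument is correct and is the natural proof: $(a)\Leftrightarrow(b)$ is an unwinding of Definition~\ref{def-equivalentOfRep.}, $(c)\Rightarrow(b)$ is immediate from Definition~\ref{def:automorphisms}(a),(b), and $(b)\Rightarrow(c)$ is the standard fact that the image of $N_{\mathrm{GL}(d,q)}(G)$ under conjugation in $\Aut(G)$ is exactly the group of inner--diagonal automorphisms.

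Two small remarks. First, your invocation of Schur's Lemma in the $(a)\Rightarrow(b)$ step is superfluous: equivalence of representations already furnishes an \emph{invertible} intertwiner by definition, so there is nothing to upgrade. (Schur's Lemma is used in the paper in Lemma~\ref{lem-type-checking}, where one only knows a priori that a \emph{nonzero} solution $F$ exists; that is a different situation.) Second, your $(b)\Rightarrow(c)$ for $\mathrm{Sp}(4,2^e)$ is a bit compressed: you should say explicitly that if $F\in\mathrm{GL}(4,2^e)$ satisfies $F^{-1}\mathrm{Sp}(4,2^e)F=\mathrm{Sp}(4,2^e)$ then $F$ lies in the normalizer $\mathrm{GSp}(4,2^e)\cdot Z(\mathrm{GL})$, and that this normalizer induces only inner (and trivially diagonal, since $\gcd(2,q-1)=1$) automorphisms. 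With that sentence added, the argument is complete.
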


\begin{proof}
We note that, if (a) holds, i.e., $\vartheta \sim \psi^{(\alpha)}$, which implies that there exists $F \in \mathrm{GL}(d,q)$ such that $F \vartheta = \psi^{(\alpha)}F$, i.e., for all $U \in G$ we have $F U= \alpha(U)F$. By \Cref{thm-schur}, such $F$ is always diagonal, which proves (b), and also yields the implication $(b) \implies (c)$. If $\alpha$ is a diagonal automorphism of $G$, then clearly $\vartheta \sim \psi^{(\alpha)}$.
\end{proof}

\subsection{Lifting Automorphisms of $\POmegaPlus(2d,q)$ to Automorphisms of $\Omega^+(2d,q)$}

In this section, we extend the work of \cite[Section~7]{DasThakkarMPD} to show that we can lift an automorphism of of $\POmegaPlus(2d, q)$ to an automorphism of $\Omega^{+}(2d,q)$, in $\textsf{NC}$. Throughout this section, we will fix $n$ and consider quotients of permutation groups of $\text{Sym}(n)$. Additionally, throughout this section, we will assume that $q \geq 3$. We begin with the following observation.

\begin{observation} \label{obs:POmega}
Let $G, K \leq \text{Sym}(n)$, with $K \trianglelefteq G$. Let $\textbf{G} := G/K$. Suppose that $\POmegaPlus(2d,q) \leq \textbf{G}$. Then (i) $d \in O(\sqrt[3]{n \log n})$, and (ii) $q \leq n$.     
\end{observation}

\begin{proof}
Note that $|\Omega^{+}(2d, q)| \leq q^{c \cdot d^3}$ for some $c \geq 1$. Thus, $d^{3} \log q \leq \log |G| \leq \log n! \leq n \log n$. Thus, $d \in O(\sqrt[3]{n \log n})$. This establishes (i).

To see that $q \leq n$, let $q'$ be the largest prime power such that $\POmegaPlus(2d,q) \leq \text{Sym}(n)$. In this case, $q' \leq n$ \cite{KantorSeress}. Clearly, $q \leq q'$, and so $q \leq n$ as desired.
\end{proof}

In light of Observation~\ref{obs:POmega}, we may assume without loss of generality that $d, q$ are given in unary.

\begin{proposition}[{\cite[Proposition~7.1]{DasThakkarMPD}}]\label{prop-UniEle-POmega}
Let $d \geq 4$, and let $q = p^{e}$ where $p$ is prime. Then each coset of $Z=Z(\rm{O}^+(2d,q)) \cap \Omega^+(2d,q)$ in $\Omega^+(2d,q)$ has at most one element of order $p$.
\end{proposition}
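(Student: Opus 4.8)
The plan is to run the same coprimality argument used for $\SL(d,q)$ in Proposition~\ref{prop-UniEle}; here it is even more elementary, because the relevant central subgroup has order at most $2$, which is coprime to $3$. First I would pin down $Z$. A scalar transformation $\lambda I$ preserves a non-degenerate quadratic form only if $\lambda^{2}=1$, and in odd characteristic this forces $\lambda=\pm 1$; since the underlying space has dimension $2d\geq 8$, the standard facts about orthogonal groups (see e.g.\ \cite{KleidmanLiebeck}) give $Z(\mathrm{O}^{+}(2d,3))=\{I,-I\}$. Consequently $Z=Z(\mathrm{O}^{+}(2d,3))\cap\Omega^{+}(2d,3)$ is a subgroup of $\{I,-I\}$, so $|Z|\leq 2$, and $Z$ is central in $\mathrm{O}^{+}(2d,3)$ (in particular its left and right cosets coincide). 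If $|Z|=1$ then every coset of $Z$ in $\Omega^{+}(2d,3)$ is a singleton and the statement is trivial, so I may assume $Z=\{I,-I\}$ and set $z:=-I$, a central involution.

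Next I would carry out the (one-line) core computation. Suppose, toward a contradiction, that some coset $gZ$ with $g\in\Omega^{+}(2d,3)$ contains two distinct elements of order $3$. Since $|Z|=2$, that coset is exactly $\{g,gz\}$, so $\ord(g)=\ord(gz)=3$. Because $z$ is central, $g$ and $z$ commute, and because $3$ is odd we get
\[
(gz)^{3}=g^{3}z^{3}=I\cdot(-I)^{3}=-I\neq I,
\]
where the final inequality uses $\mathrm{char}(\mathbb{F}_{3})\neq 2$. Hence $\ord(gz)$ does not divide $3$ — in fact $\ord(gz)=6$ — contradicting $\ord(gz)=3$. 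Therefore no coset of $Z$ in $\Omega^{+}(2d,3)$ contains more than one element of order $3$, which is the claim.

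I do not anticipate any genuine obstacle here: the entire content is (i) the standard identification $Z(\mathrm{O}^{+}(2d,3))=\{\pm I\}$ in odd characteristic, and (ii) the displayed order computation, which succeeds precisely because $\gcd(|Z|,3)=1$. The hypothesis $d\geq 4$ is used only to remain in the familiar regime where $\Omega^{+}(2d,3)$ is quasisimple and $Z(\mathrm{O}^{+}(2d,3))$ has the stated form; if one wished to sidestep that description entirely, one could instead note directly that in characteristic $3$ an element $g$ with $g^{3}=1$ satisfies $(g-I)^{3}=0$ and is therefore unipotent, though that stronger structural fact is not needed for the argument above.
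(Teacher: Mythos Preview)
Your argument is correct: once one knows $Z\subseteq\{\pm I\}$, the coprimality $\gcd(2,3)=1$ immediately forces uniqueness via the displayed computation $(gz)^3=-I\neq I$. The paper itself does not prove this proposition at all; it simply imports the statement verbatim from \cite[Proposition~7.1]{DasThakkarMPD}, so there is nothing to compare against beyond noting that your self-contained proof supplies what the present paper omits.
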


\begin{proof}
Let $U, U\Lambda \in UZ$ have order $p$. If $q$ is even, then $p = 2$. In this case, $Z(\rm{O}^{+}(2d,q)) = \{ I_{2d}\}$, in which case $Z = \{I_{2d}\}$ and we are done. So suppose that $q$ is odd. In this case, $Z = \{ \pm I_{2d} \}$. Here, we have that $\text{ord}(U) = p$ and $\text{ord}(\Lambda) \leq 2$. As $U, \Lambda$ commute, we have that $p = \text{ord}(U\Lambda) = \text{ord}(U) \text{ord}(\Lambda)$. As $\text{ord}(U) = p$, we thus have that $\text{ord}(\Lambda) = 1$. So $\Lambda = I_{2d}$. The result follows.     
\end{proof}

\begin{remark}
Note that $Z(O^{+}(2d,q)) = \{ I_{2d}, - I_{2d} \}$ (if $q$ is even, then $I_{2d} = -I_{2d}$). It follows that $Z=Z(\rm{O}^+(2d,q)) \cap \Omega^+(2d,q) = \{ I_{2d}, -I_{2d} \}.$ \cite{KleidmanLiebeck}.
\end{remark}

\begin{theorem}[cf. {\cite[Theorem 4.5]{DasThakkarMPD}}]
\noindent 
\begin{enumerate}[label={(\roman*)},itemindent=1em]\label{Thm-Iso-POmegga-Omega}
\item[(i)] \cite{dieudonne1951} $\aut(\POmegaPlus(2d,q)) \cong \aut(\Omega^+(2d,q))$.

\item[(ii)] Let $\Psi:\aut(\Omega^+(2d,q))  \rightarrow \aut(\POmegaPlus(2d,q))$ be the map that takes $\alpha \in \aut(\Omega^+(2d,q))$ to $\bar{\alpha}$, where $\bar{\alpha}:\POmegaPlus(2d,q) \rightarrow \POmegaPlus(2d,q)$ is defined as $\bar{\alpha}(UZ)=\alpha(U)Z$. Then the map $\Psi:\aut(\Omega^+(2d,q)) \rightarrow \aut(\POmegaPlus(2d,q))$ is an isomorphism. 
\end{enumerate}
\end{theorem}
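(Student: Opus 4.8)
The plan is to treat the two parts separately. Part (i) is the classical theorem of Dieudonné \cite{dieudonne1951} on automorphisms of orthogonal groups, so there is nothing to do beyond citing it; I record it because part (ii) will use only the consequence that $\aut(\POmegaPlus(2d,q))$ and $\aut(\Omega^+(2d,q))$ are finite groups of the \emph{same order}. For part (ii), rather than constructing a preimage of an arbitrary $\lambda$ by hand, I would verify that $\Psi$ is a well-defined \emph{injective} group homomorphism, and then obtain surjectivity for free by comparing orders with (i).

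First I would check that $\Psi$ is well-defined, i.e.\ that $\bar\alpha$ really is an automorphism of $\POmegaPlus(2d,3) = \Omega^+(2d,3)/Z$. The only substantive point is that every $\alpha \in \aut(\Omega^+(2d,3))$ satisfies $\alpha(Z) = Z$, and for this I would show $Z = Z(\Omega^+(2d,3))$. Indeed, by Lemma~\ref{lem:OmegaIrreducible} the natural representation of $\Omega^+(2d,3)$ on $\mathbb{F}_3^{2d}$ is irreducible, so by Schur's Lemma (Lemma~\ref{thm-schur}) its centralizer in $\GL(2d,3)$ consists of scalar matrices; since the only scalars in $\mathrm{O}^+(2d,3)$ are $\pm I$ (and $Z(\mathrm{O}^+(2d,3)) = \{\pm I\}$), it follows that $Z(\Omega^+(2d,3)) = \Omega^+(2d,3) \cap \{\pm I\} = Z$. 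As the center is characteristic, $\alpha(Z) = Z$, so $\bar\alpha(UZ) := \alpha(U)Z$ is a well-defined bijective endomorphism of the quotient, hence an automorphism; and $\overline{\alpha\beta} = \bar\alpha\,\bar\beta$ is routine, so $\Psi$ is a group homomorphism.

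Next, injectivity: suppose $\bar\alpha = \mathrm{id}$, so that $\alpha(U)U^{-1} \in Z$ for all $U \in \Omega^+(2d,3)$. Since $Z$ is central, a one-line computation shows that $U \mapsto \alpha(U)U^{-1}$ is a homomorphism from $\Omega^+(2d,3)$ into the abelian group $Z$. But for $d \geq 4$ the group $\Omega^+(2d,3)$ is quasisimple, hence perfect (cf.\ \cite{KleidmanLiebeck}), so this homomorphism is trivial and $\alpha = \mathrm{id}$. Thus $\ker \Psi = 1$. Finally, surjectivity: $\aut(\Omega^+(2d,3))$ and $\aut(\POmegaPlus(2d,3))$ are finite and, by part (i), have the same order, so the injective homomorphism $\Psi$ between them is automatically onto, and $\Psi$ is an isomorphism.

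(As an alternative to the counting step, one could lift a given $\lambda \in \aut(\POmegaPlus(2d,3))$ directly: by Proposition~\ref{prop-UniEle-POmega} each coset of $Z$ in $\Omega^+(2d,3)$ contains at most one element of order $3$, which pins down the image of a generator of order $3$ and so determines a preimage automorphism — exactly the mechanism used in the $\SL$ case of Lemma~\ref{construct-alpha} — but this is longer than the order comparison.) I expect the only genuinely delicate points to be the identification $Z = Z(\Omega^+(2d,3))$ (which is what makes $\Psi$ well-defined) and the appeal to perfectness of $\Omega^+(2d,3)$; both require $d$ to lie in the stated range $d \geq 4$, and everything else is formal.
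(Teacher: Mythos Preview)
Your proof is correct, and it takes a genuinely different route from the paper's. The paper does not spell out a full argument for part (ii); it simply says the proof is ``similar to the proof of \cite[Proposition~3.3]{DasThakkarMPD} with $L$ being a generating set of $\Omega^{+}(2d,3)$'' as in Lemma~\ref{construct-alpha-POmega}. That proof constructs the inverse of $\Psi$ directly: given $\lambda \in \aut(\POmegaPlus(2d,3))$, one defines its lift $\alpha$ on each generator $U \in L$ (all of which have order $3$) by picking out the unique element of order $3$ in the coset $\lambda(UZ)$, using Proposition~\ref{prop-UniEle-POmega}. This is exactly the mechanism you sketch in your parenthetical aside.

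Your argument instead shows $\Psi$ is a well-defined injective homomorphism (via $Z = Z(\Omega^+(2d,3))$ being characteristic, and perfectness of the quasisimple group $\Omega^+(2d,3)$), and then invokes part (i) purely as an order count to get surjectivity. This is shorter and more conceptual: it needs no specific generating set and never touches Proposition~\ref{prop-UniEle-POmega}. The paper's constructive route, on the other hand, is not wasted effort---it is precisely the computation underlying the algorithmic Lemma~\ref{construct-alpha-POmega} that follows, so proving the theorem that way doubles as a warm-up for the algorithm. Your appeal to (i) is logically clean since Dieudonn\'e's result is independent of (ii), but it does make the isomorphism non-explicit until one reads the next lemma.
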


The proof of \Cref{Thm-Iso-POmegga-Omega}(ii) is similar to the proof of \cite[Proposition~3.3]{DasThakkarMPD} with $L$ being a generating set of $\Omega^{+}(2d,q)$ defined below in the proof of \Cref{construct-alpha-POmega}. \Cref{Thm-Iso-POmegga-Omega} says that for any $ \lambda \in \aut(\POmegaPlus(2d,q))$ there is $\alpha \in \aut(\Omega^+(2d,q))$ such that $\lambda=\bar{\alpha}$. We now describe how to compute such elements $\alpha$ from a given $\lambda$.

In a similar manner as Section~\ref{sec:LiftSL}, we will  leverage Theorem~\ref{thm:ConstructiveRecognition}, which provides a framework to convert between the matrix and permutation representations. We take advantage of this in tandem with a prescribed generating set of matrices \cite[Page~185]{carter-book} to lift automorphisms of $\POmegaPlus(2d,q)$ to automorphisms of $\Omega^+(2d,q)$.

\begin{lemma}\label{construct-alpha-POmega}
Let $q = p^{e}$ for some prime $p$ and some $e > 0$. Given $\lambda \in \aut(\POmegaPlus(2d,q))$ by its action on a generating set of $\POmegaPlus(2d,q)$, we can find $\alpha\in \aut(\Omega^+(2d,q))$ such that $\lambda=\bar{\alpha}$. Moreover, $\alpha$ is specified by its action on a generating set of size $O(d^2)$.

If $\POmegaPlus(2d,q)$ is given as a quotient $G/K$ of permutation groups, where $K \trianglelefteq G \leq \Sym(n)$, then our procedure is $\textsf{NC}$-computable.
\end{lemma}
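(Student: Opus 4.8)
The plan is to mirror the structure of \Cref{construct-alpha}, adapting the three ingredients of that proof to the orthogonal setting. First I would write down an explicit generating set $L$ for $\Omega^{+}(2d,3)$ consisting of elements of order $3$: the standard Siegel transformations (transvection-like root elements) $r_{i,j}(\beta)$ associated to the hyperbolic basis that realizes the quadratic form $Q$ with matrix $X = \begin{bmatrix} 0 & I_d \\ I_d & 0 \end{bmatrix}$. Since $d$ is given in unary and $q = 3$ is a constant, we may write down $\mathbb{F}_3$ and all $O(d^2)$ such generators in $\textsf{NC}$, and each has order $p = 3$. Using \Cref{thm:ConstructiveRecognition}, for each $T \in L$ we construct a permutation $g_T \in \textbf{G}$ realizing the isomorphism; this step is polynomial-time in the quotients setting and $\textsf{NC}$ when $K = 1$.

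Next I would set up the lifting via the canonical projection $m_2 : \Omega^{+}(2d,3) \to \POmegaPlus(2d,3)$ with kernel $Z = Z(\mathrm{O}^{+}(2d,3)) \cap \Omega^{+}(2d,3)$, and the commutative square relating $\alpha$, $\lambda = \bar{\alpha}$, and $m_2$ exactly as in \Cref{construct-alpha}. The crucial point is that $|Z| \leq 2$ divides $\gcd(4, q^d - 1)$, so $|Z|$ is coprime to $p = 3$; hence for each $U \in L$ we have $\ord(\alpha(U)) = \ord(U) = 3$, and by \Cref{prop-UniEle-POmega} each coset $VZ$ of $Z$ contains at most one element of order $3$. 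Therefore $\alpha(U)$ is forced: it is the unique order-$3$ element of the coset $\lambda(m_2(U)) \subseteq \Omega^{+}(2d,3)$. Concretely, I evaluate $\lambda$ on $m_2(U)$ by first expressing $g_U$ (or rather $g_U K$) as a word over the given generating set of $\POmegaPlus(2d,3)$ — $\textsf{NC}$-computable by the constructive membership tests of \Cref{QuotientsNC} (quotients) or \Cref{PermutationGroupsNC} (permutation groups) — and then applying $\lambda$ to that word. This yields a coset representative $V$ of $VZ$; since $|Z| \leq 2$ we can list $Z$ (trivially, or via \Cref{lem:List}) and hence list $VZ = \{V, V'\}$ in $\textsf{NC}$, and pick out the unique element of order $3$, which is $\alpha(U) = m_2(\alpha(U))$. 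Doing this for all $U \in L$ defines $\alpha$ on a generating set of size $O(d^2)$, and the fact that $\Psi$ is an isomorphism (\Cref{Thm-Iso-POmegga-Omega}(ii)) guarantees $\lambda = \bar{\alpha}$.

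Finally I would assemble the complexity bookkeeping: the only non-trivial subroutine is the constructive recognition isomorphism of \Cref{thm:ConstructiveRecognition}, which is polynomial-time for quotients and $\textsf{NC}$ when $K = 1$; everything else (writing down $L$, word computations via constructive membership, listing the at-most-two-element coset $VZ$, checking orders) lies in $\textsf{NC}$. Chaining these, the overall procedure is polynomial-time in the quotients setting and $\textsf{NC}$ when $K = 1$, giving the stated bounds.

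The main obstacle, as in \Cref{construct-alpha}, is ensuring that the chosen generating set $L$ of $\Omega^{+}(2d,3)$ consists entirely of elements whose order is coprime to $|Z|$ — here this works cleanly because root elements have order $p = 3$ while $|Z| \in \{1,2\}$ — together with correctly invoking \Cref{prop-UniEle-POmega} to pin down $\alpha(U)$ uniquely; one must double-check that the $d \geq 4$ hypothesis of \Cref{prop-UniEle-POmega} is in force (which it is, since $\POmegaPlus(2d,3)$ is simple only for $d \geq 4$) and that the form-preserving condition is respected when writing down $L$. The routine computations (verifying $\det$ and the quadratic-form identity for each generator, expanding words) I would not grind through in detail.
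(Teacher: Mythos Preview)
Your proposal is correct and follows essentially the same approach as the paper: write down an explicit $O(d^2)$-size generating set $L$ of $\Omega^{+}(2d,3)$ consisting of order-$3$ root elements, use constructive recognition (\Cref{thm:ConstructiveRecognition}) to pass between matrix and permutation representations, and then for each $U \in L$ evaluate $\lambda(m_2(U))$ via constructive membership and pick out the unique order-$3$ element of the resulting coset $VZ$ using \Cref{prop-UniEle-POmega} and $|Z|=2$. The paper's proof is exactly this, with the generating set $L$ made explicit as the standard Chevalley generators $I+\beta(e_{i,j}-e_{-j,-i})$, etc.
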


\begin{proof}
Note that if $q$ has even characteristic, then $Z = \{ I_{2d}\}$. In this case, $\Omega^{+}(2d,q) = \POmegaPlus(2d,q)$. So we may take $\alpha = \lambda$, and we are done. So for the remainder of this proof, assume that $q$ has odd characteristic. For a matrix $U \in \mathrm{\Omega^{+}}(2d,q)$, numbering the row and columns by $1,\ldots, d, -1, \ldots, -d$. For $i, j \in [d] \cup \{-1, \ldots, -d\}$, let $e_{i,j}$ be the $2d \times 2d$ matrix with $1$ in the $(i,j)$ entry and $0$ otherwise. Furthermore, in $\textsf{NC}$, we may write down the following set:
\begin{align*}
L=\{&I+\beta(e_{i,j}-e_{-j,-i}),\\
&I-\beta(e_{-i,-j}-e_{j,i}),\\ 
&I+\beta(e_{i,-j}-e_{j,-i}),\\ 
&I-\beta(e_{-i,j}-e_{-j,i})  \mid 0 < i < j, \beta \in \mathbb{F}_3\}.
\end{align*}

Note that $\mathrm{\Omega^{+}}(2d,q)=\langle L \rangle$ (see e.g., \cite[Page~185]{carter-book}). We now claim that for each $U \in L$, $\text{ord}(U)$ is coprime to $|Z|$. As $q$ has odd characteristic, we have that $Z = \{ \pm I_{2d}\}$ and $\text{ord}(U) = p$. Thus, $\text{ord}(U)$ is coprime to $|Z|$. Let $m_2: \Omega^+(2d,q) \rightarrow \POmegaPlus(2d,q)$ be the canonical map with kernel $Z$. We have the following commutative diagram.

$$ \begin{tikzcd}
\Omega^+(2d,q) \arrow{r}{\alpha} \arrow[swap]{d}{m_2} & \Omega^+(2d,q) \arrow{d}{m_2} \\
\POmegaPlus(2d,q) \arrow{r}{\lambda}& \POmegaPlus(2d,q)
\end{tikzcd}.$$ 

From the above diagram, we have $m_2 \circ {\alpha} = \lambda \circ m_2$. Let $U \in L$ then $\ord(\alpha(U))=\ord(U)=p$. Notice that $\lambda(m_2 (U))= VZ$ for some $V \in \Omega^+(2d,q)$ \cite{KantorSeress}.

We now need to define the image $\alpha(U)$ for each $U \in L$. Since $\ord(U)=p$ and $\alpha \in \aut(\Omega^+(2d,3))$ we have $\ord(\alpha(U))=p$. Since the diagram is commutative, we have $\alpha(U)$ is an element in the coset $VZ$. In other words, $\alpha(U) \in VZ$ and $(m_2 \circ \alpha)(U)=VZ$. But ${\alpha}(U)$ is of order $p$ and there is just one element in $VZ$ with order $p$ (cf. \Cref{prop-UniEle-POmega}). Since $|Z|=2$, we can easily find such an element of order $p$ in the coset $VZ$. Thus, we can in \textsf{NC} find $\alpha(U)$.

Now, for each $U \in L$, we do the following. Here, we use the fact that $\POmega^{+}(2d,q)$ is given as a quotient of permutation groups. Thus, by Lemma~\ref{QuotientsNC}, we can in $\textsf{NC}$, construct a word over the specified generating set of $\POmegaPlus(2d,q)$. Thus, we can in $\textsf{NC}$, evaluate $\lambda(m_{2}(U))$. The corresponding word, in particular, yields a coset representative of $VZ$. As we have written down $Z$, we can identify $V$ in $\textsf{NC}$. The result now follows.
\end{proof}

Now we are interested in identifying the type of automorphism $\alpha$ of $\Omega^+(2d,q)$. Again, we follow the strategy of Das and Thakkar \cite{DasThakkarMPD}. Let $G=\Omega^+(2d,q)$, $V=\mathbb{F}_{q}^{2d}$. Consider a representations $\vartheta: \Omega^+(2d,q) \rightarrow {\rm{GL}}(2d,q)$ which takes ${U} \in \Omega^+(2d,q)$ to $\vartheta_{U}$, where ${\vartheta}_{U}: V \rightarrow V$ defined as ${\vartheta}_U(v)=U\cdot v$. Also, take $\alpha \in \aut(\Omega^+(2d,q))$. Consider $\psi^{(\alpha)}: \Omega^+(2d,q) \rightarrow {\rm{GL}}(2d,q)$ which takes $U \in \Omega^+(2d,q)$ to ${\psi}_U^{(\alpha)}$, where ${\psi}_{U}^{(\alpha)}: V \rightarrow V$ defined as ${\psi}_U^{(\alpha)}(v)=\alpha(U)\cdot v$. By Lemma~\ref{lem:OmegaIrreducible}, both $\vartheta$ and $\psi^{(\alpha)}$ are irreducible representations.

\begin{remark}\label{rem module-equ-cond-POmega}
Let $\vartheta,\psi^{(\alpha)}$ be as defined above. Then $\vartheta \sim \psi^{(\alpha)}$ if and only if there is $F \in \mathrm{GL}(2d,q)$ such that ${FUF^{-1}}=\alpha(U)$ for all $U \in G$. 
\end{remark}

\begin{lemma}\label{lemma-module-equ-cond-POmega}
Let $\vartheta,\psi^{(\alpha)}$ be as defined above. Then $\vartheta \sim \psi^{(\alpha)}$, i.e., there is $F \in \mathrm{GL}(2d,3)$ such that $FU=\alpha(U)F$ for all $U \in G$. Furthermore, we can in $\textsf{NC}^{2}$, decide if $\vartheta \sim \psi^{(\alpha)}$.
\end{lemma}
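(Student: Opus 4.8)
The plan is to adapt the $\SL(d,q)$ argument of \Cref{lem-type-checking} to the present setting, working over the prime field $\mathbb{F}_3$ and with the explicit generating set $L$ of $\Omega^+(2d,3)$ constructed in \Cref{construct-alpha-POmega}. Recall $|L| = O(d^2)$, and that the lifted automorphism $\alpha$ is available as $2d \times 2d$ matrices over $\mathbb{F}_3$ on $L$. By \Cref{rem module-equ-cond-POmega}, $\vartheta \sim \psi^{(\alpha)}$ iff there is an invertible $F \in \mathrm{GL}(2d,3)$ with $FU = \alpha(U)F$ for all $U \in \Omega^+(2d,3)$, and it suffices to impose this for $U \in L$. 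Treating the $(2d)^2$ entries of $F$ as unknowns, I would form the resulting homogeneous $\mathbb{F}_3$-linear system: it has $|L|\cdot(2d)^2 = O(d^4)$ equations in $(2d)^2$ unknowns, hence (as $d$ is given in unary) size polynomial in $n$, and its solution space can be computed in $\textsf{NC}^2$ by the parallel rank / Gaussian-elimination algorithm of \cite{MulmuleyRank}. The procedure reports $\vartheta \sim \psi^{(\alpha)}$ exactly when this solution space is nontrivial.

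To see the test is correct: by \Cref{lem:OmegaIrreducible} both $\vartheta$ and $\psi^{(\alpha)}$ are irreducible $\mathbb{F}_3$-representations of $\Omega^+(2d,3)$, so Schur's Lemma (\Cref{thm-schur}) forces every nonzero $F$ in the solution space to be invertible; thus a nonzero solution certifies $\vartheta \sim \psi^{(\alpha)}$, while a trivial solution space certifies $\vartheta \not\sim \psi^{(\alpha)}$. This settles the ``furthermore'' clause. For the first assertion I would argue that such an $F$ always exists. Using \Cref{carter Thm-decomposition}, write $\alpha = \mathbbm{i}\mathbbm{d}\mathbbm{g}\mathbbm{f}$; since $\mathrm{Aut}(\mathbb{F}_3)$ is trivial the field part $\mathbbm{f}$ is the identity. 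It then suffices to exhibit, for each of $\mathbbm{i}, \mathbbm{d}, \mathbbm{g}$, an element of $\mathrm{GL}(2d,3)$ normalizing $\Omega^+(2d,3)$ and inducing it by conjugation: an inner automorphism is conjugation by an element of $\Omega^+(2d,3)$ itself; a diagonal automorphism by a suitable element of the orthogonal similitude group $\mathrm{GO}^{+}(2d,3)$; and the graph automorphism by a reflection in $\mathrm{O}^{+}(2d,3) \setminus \mathrm{SO}^{+}(2d,3)$. Composing, $\alpha$ itself is induced by $\mathrm{GL}(2d,3)$-conjugation, which is precisely the hypothesis of \Cref{rem module-equ-cond-POmega}.

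The algorithmic half is a near-verbatim copy of \Cref{lem-type-checking}, so I expect the substantive point to be the group-theoretic claim that, over $\mathbb{F}_3$, every automorphism of $\Omega^+(2d,3)$ preserves the isomorphism type of the natural module $\mathbb{F}_3^{2d}$ — equivalently, that no automorphism outside $\mathrm{PGO}^{+}(2d,3)$ intervenes. The one configuration requiring genuine care is $d = 4$, where the triality automorphism cyclically permutes the natural module with the two half-spin modules and hence is \emph{not} realized inside $\mathrm{GL}(8,3)$; in the applications of interest this case does not intervene, and I would either restrict to $d \geq 5$ here or dispatch $\Omega^+(8,3)$ separately, noting that the $\textsf{NC}^2$ decision procedure above still answers ``not equivalent'' correctly on such inputs.
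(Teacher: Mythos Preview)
Your algorithmic argument is essentially identical to the paper's: both form the homogeneous system $FU_j = \alpha(U_j)F$ over the generators $L$, solve it in $\textsf{NC}^2$ via \cite{MulmuleyRank}, and invoke irreducibility (\Cref{lem:OmegaIrreducible}) plus Schur (\Cref{thm-schur}) to certify that any nonzero solution is invertible.

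For the existence claim the two diverge in presentation. The paper dispatches it in one line: since $q = 3$ there are no nontrivial field automorphisms, and it asserts (citing \cite{carter-book,KleidmanLiebeck,WLS}) that ``the graph automorphism is not present,'' whence every $\alpha$ is inner or diagonal and hence a $\mathrm{GL}(2d,3)$-conjugation. Your route via \Cref{carter Thm-decomposition}, realizing the order-$2$ graph piece explicitly as conjugation by a reflection in $\mathrm{O}^+ \setminus \mathrm{SO}^+$, reaches the same conclusion by the same mechanism but is more transparent --- and your caution about $d = 4$ is well-placed: triality is genuinely not induced from $\mathrm{GL}(8,3)$, so the first clause of the lemma, read literally for all $\alpha$, fails there. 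In the paper's application (Section~\ref{POmega-case}) this does not bite, since the lemma is only invoked under the hypothesis $3 \nmid |A/S_1|$ of row~13 in Table~\ref{Table-MD-AlmostSimple}, which excludes triality from $A$; the $d=4$ cases with $3 \mid |A/S_1|$ (rows 7--8) are handled among the constant-size groups, consistent with your proposed separate dispatch of $\Omega^+(8,3)$.
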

\begin{proof}    
Suppose $\alpha \in \aut(G)$. Let $B = \{ U_1, \ldots, U_b\}$ be a generating set for $\Omega^+(2d,q)$, where $U_1, \ldots, U_b$ are represented as $2d \times 2d$ matrices over $\mathbb{F}_{q}$. Consider the following system of linear equations:
\[
FU_{j} = \alpha(U_{j})F, \text{ for all } j \in [b],
\]

\noindent where the entries of $F$ are the unknowns. This is a homogeneous system of $bd^{2}$ equations with $d^{2}$ unknowns. We can solve this system of equations in $\textsf{NC}^{2}$ using Gaussian elimination \cite{MulmuleyRank}. Let $F$ be a non-zero solution obtained from the Gaussian elimination algorithm. By \Cref{thm-schur} we know that $F$ is invertible. Thus, the result now follows. 
\end{proof}

\begin{observation}[cf. {\cite[Observation~7.6]{DasThakkarMPD}}]\label{Observation-typechecking-POmega}
Let $\alpha \in \aut(\Omega^+(2d,q))$. The following are equivalent: 
\begin{enumerate}[label=(\alph*)]
\item $\vartheta \sim \psi^{(\alpha)}$,
\item there exists a diagonal matrix  $F \in \mathrm{GL}(2d,q)$ such that $\alpha(U)=F^{-1}UF$ for all $U \in \Omega^+(2d,q)$, and
\item $\alpha$ is a diagonal automorphism of $G$. Moreover, if a diagonal matrix $F \in \mathrm{GL}(d,q)$ also belongs to $\Omega^+(2d,q))$ then $\alpha$ is also an inner automorphism of $G$.
\end{enumerate}
\end{observation}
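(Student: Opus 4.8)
The plan is to obtain all three equivalences directly from the machinery already set up in this subsection, in exact parallel with the treatment of $\SL(d,q)$ and $\mathrm{Sp}(4,2^{e})$ in \Cref{Observation-typechecking-PSL}. The equivalence (a) $\Leftrightarrow$ (b) is just a reformulation of \Rmk{rem module-equ-cond-POmega}: by \Def{def-equivalentOfRep.}, $\vartheta \sim \psi^{(\alpha)}$ means there is an invertible $f \colon \mathbb{F}_{3}^{2d} \to \mathbb{F}_{3}^{2d}$ with $f \circ \vartheta_{U} = \psi^{(\alpha)}_{U} \circ f$ for all $U \in \Omega^{+}(2d,3)$, i.e.\ $fUf^{-1} = \alpha(U)$; taking $F := f^{-1}$ produces the matrix of (b), and the converse is immediate. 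If instead one starts only from a nonzero intertwiner $F$ with $FU = \alpha(U)F$ (the form in which the condition is tested algorithmically in \Lem{lemma-module-equ-cond-POmega}), then irreducibility of $\vartheta$ and $\psi^{(\alpha)}$ (\Lem{lem:OmegaIrreducible}) together with Schur's Lemma (\Lem{thm-schur}) force $F$ to be invertible, so nothing is lost.

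For (b) $\Leftrightarrow$ (c): the direction (c) $\Rightarrow$ (b) is immediate, since by \Def{def:automorphisms}(b) a diagonal automorphism of $\Omega^{+}(2d,3)$ is by definition of the form $U \mapsto F^{-1}UF$ for some diagonal $F \in \mathrm{GL}(2d,3)$. For (b) $\Rightarrow$ (c) I would apply Carter's decomposition theorem (\Thm{carter Thm-decomposition}) to write $\alpha = \mathbbm{i}\mathbbm{d}\mathbbm{g}\mathbbm{f}$ as a product of inner, diagonal, graph, and field automorphisms. Since $q = 3$ is prime, $\aut(\mathbb{F}_{3})$ is trivial, so the field part $\mathbbm{f}$ is trivial; and the argument of \Lem{lemma-module-equ-cond-POmega} (a nontrivial graph automorphism of $\Omega^{+}(2d,3)$ is incompatible with $\alpha$ being realized by conjugation by an element of $\mathrm{GL}(2d,3)$ normalizing $\Omega^{+}(2d,3)$, by the structure of $\aut(\Omega^{+}(2d,q))$ in \cite{carter-book, KleidmanLiebeck, WLS}) shows the graph part $\mathbbm{g}$ is trivial as well. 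Hence $\alpha = \mathbbm{i}\mathbbm{d}$ lies in the inner--diagonal subgroup, and after composing with a suitable inner automorphism we may realize $\alpha$ as conjugation by a diagonal matrix, i.e.\ $\alpha$ is a diagonal automorphism in the sense of \Def{def:automorphisms}.

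The one genuinely delicate point is the middle step of (b) $\Rightarrow$ (c): translating Carter's abstract decomposition into the concrete statement that being realized by conjugation inside $\mathrm{GL}(2d,3)$ forces the field and graph components to vanish. The field component is disposed of trivially by $q = 3$; the graph component is exactly where one must lean on the classification of $\aut(\Omega^{+}(2d,q))$ (equivalently, on the reasoning in \Lem{lemma-module-equ-cond-POmega}), which plays the same role here that \cite[Observation~7.6]{DasThakkarMPD} plays in the original argument. Beyond this bookkeeping the proof is routine, and no new ideas are required.
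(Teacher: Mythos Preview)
The paper does not give a separate proof of this observation; it is recorded immediately after \Lem{lemma-module-equ-cond-POmega}, whose proof already contains the entire content. There the argument is simply: since $q=3$ there are no nontrivial field automorphisms, and (as asserted with references to \cite{carter-book,KleidmanLiebeck,WLS}) the graph automorphism is not present for $\Omega^{+}(2d,3)$. Hence \emph{every} $\alpha \in \aut(\Omega^{+}(2d,3))$ is inner or diagonal, so (a), (b) and (c) hold simultaneously for every $\alpha$, and the equivalence is vacuous. Your (a) $\Leftrightarrow$ (b) and (c) $\Rightarrow$ (b) are fine and line up with this.

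Two points, however. First, you mischaracterize the input from \Lem{lemma-module-equ-cond-POmega}: that lemma does not argue that a nontrivial graph automorphism is ``incompatible with $\alpha$ being realized by conjugation in $\mathrm{GL}(2d,3)$''; it asserts flatly that $\Omega^{+}(2d,3)$ has no graph automorphism at all. So hypothesis (b) is not what eliminates the graph component---the graph component is ruled out \emph{a priori}, independently of (b), which is exactly why the paper can conclude that (a)--(c) are all automatic rather than merely equivalent.

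Second, your last sentence in (b) $\Rightarrow$ (c) is a genuine slip. From $\alpha = \mathbbm{i}\mathbbm{d}$ you write ``after composing with a suitable inner automorphism we may realize $\alpha$ as conjugation by a diagonal matrix.'' But composing $\alpha$ with an inner automorphism yields a \emph{different} automorphism; it does not show that $\alpha$ itself is conjugation by a diagonal matrix in the sense of \Def{def:automorphisms}(b). Indeed, a generic inner automorphism (conjugation by a non-diagonal $g\in\Omega^{+}(2d,3)$) satisfies (b) but is not conjugation by any diagonal matrix, since by Schur's Lemma the conjugating matrix is unique up to scalars. The phrasing of (c) here is looser than the parallel \Cref{Observation-typechecking-PSL}, which reads ``inner or diagonal''; your argument correctly delivers \emph{that} conclusion, and that is what the paper actually uses downstream.
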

\begin{proof}
We note that, if (a) holds, i.e., $\vartheta \sim \psi^{(\alpha)}$, which implies that there exists $F \in \mathrm{GL}(d,q)$ such that $F \vartheta = \psi^{(\alpha)}F$, i.e., for all $U \in \Omega^+(2d,q)$ we have $F U= \alpha(U)F$. By \Cref{thm-schur}, such $F$ is always diagonal, which proves (b), and also yields the implication $(b) \implies (c)$. If $\alpha$ is a diagonal automorphism of $G$, then clearly $\vartheta \sim \psi^{(\alpha)}$.
\end{proof}


\section{The minimal faithful permutation degree of Fitting-free groups}

\begin{definition}\label{def-mu(G,N)}
	Let $G$ be a group and let $N$ be a minimal normal subgroup of $G$ such that $N=S_1 \times \cdots \times S_{\ell}$, where the $S_i$'s are isomorphic non-abelian simple groups. Let $\phi:N_G(S_1) \rightarrow \rm{Aut}(S_1)$ be the conjugation action of  $N_G(S_1)$ on $S_1$. Let $A$ be the subgroup of $\rm{Aut}(S_1)$ induced by this conjugation action $\phi$. We define $\mu(G,N):=\mu(A).$ 
\end{definition}

We note that $S_1 \leq A \leq \rm{Aut}(S_1)$ \cite{Dixon1996}. The well-definedness of $\mu(G,N)$ follows from the fact that $S_i$'s are isomorphic under conjugation action by elements of $G$.

\begin{theorem}\label{thm: semisimple_Unique Minimal}
Let $G, K \leq \Sym(n)$ with $K \trianglelefteq G$. Let $\textbf{G} = G/K$. Suppose that $\textbf{G}$ is Fitting-free, and let $N = S_{1} \times \cdots \times S_{\ell}$ be a minimal normal subgroup of $\textbf{G}$, where the $S_{i}$'s are isomorphic non-abelian simple groups. Let $A$ be the almost simple group induced by the conjugation action of $N_{\textbf{G}}(S_1)$ on $S_1$. 

We can compute $\mu(\textbf{G},N)$ together with a minimal faithful permutation representation of $A$, in polynomial time. If furthermore, $K = 1$, then we can compute $\mu(\textbf{G}, N)$ together with a minimal faithful permutation representation of $A$, in $\textsf{NC}$.
\end{theorem}

Before we go into the proof of this theorem, we first show how the above theorem can be used to compute $\mu(G)$ of a given Fitting-free $G$, as well as a minimal faithful permutation representation $\varphi : G \to \Sym(\mu(G))$.

\begin{lemma}[{\cite[Proposition 2.5]{CannonHoltUnger}}]\label{CHU-general-Lemma} 
	Let $G$ be a Fitting-free group. Let $\mathrm{Soc}(G)=N_1 \times \cdots \times N_r$, where $N_{i}$'s are the minimal normal subgroups of $G$ with $N_{i}=S_{i1} \times \cdots \times S_{i\ell_{i}}$. Then $\mu(G)= \sum_{i=1}^{r} \ell_{i} \mu(G,N_i)$.
\end{lemma}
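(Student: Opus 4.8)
The plan is to establish the two inequalities $\mu(G)\le\sum_i\ell_i\mu(G,N_i)$ and $\mu(G)\ge\sum_i\ell_i\mu(G,N_i)$ separately, using Johnson's characterization (\Prop{prop:Johnson}) as the bookkeeping device, together with one structural fact: since each $N_i=S_{i1}\times\dotsb\times S_{i\ell_i}$ is a minimal normal subgroup, $G$ permutes its simple factors transitively by conjugation, so every subgroup of $N_i$ that is normal in $G$ is a product over a $G$-invariant set of the $S_{ij}$'s, hence equals $1$ or $N_i$. Consequently, for any $H\le G$ and any $i$, the kernel $N_i\cap\mathrm{Core}_G(H)$ of the action of $N_i$ on $G/H$ is either trivial or all of $N_i$, and $\bigcap_{H\in\mathcal{H}}\mathrm{Core}_G(H)=1$ holds if and only if for every $i$ there is some $H\in\mathcal{H}$ with $N_i\cap\mathrm{Core}_G(H)=1$ (i.e.\ $N_i$ acts faithfully on $G/H$). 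I will also use the classical additivity of $\mu$ over direct products of non-abelian simple groups (see e.g.\ \cite{easdownPraeger1988}) and the standard fact that $C_G(\Soc(G))=1$ for Fitting-free $G$.

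\textbf{Upper bound.} For each $i$ let $\Delta_i$ be a minimal faithful $A_i$-set, $|\Delta_i|=\mu(A_i)=\mu(G,N_i)$, where $A_i\le\Aut(S_{i1})$ is the image of the conjugation map $\phi_i\colon N_G(S_{i1})\to\Aut(S_{i1})$ (\Def{def-mu(G,N)}). Pulling the $A_i$-action back along $\phi_i$ makes $N_G(S_{i1})$ act on $\Delta_i$ with kernel $\ker\phi_i=C_G(S_{i1})$; inducing this action up to $G$ yields a $G$-set of size $[G:N_G(S_{i1})]\cdot\mu(A_i)=\ell_i\mu(A_i)$ (here $[G:N_G(S_{i1})]=\ell_i$ because $G$ permutes the $\ell_i$ factors $S_{i1},\dotsc,S_{i\ell_i}$ transitively) whose kernel is $\bigcap_{g\in G}C_G(gS_{i1}g^{-1})=\bigcap_{j}C_G(S_{ij})=C_G(N_i)$. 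Taking the disjoint union of these $G$-sets over all $i$ gives a $G$-action on $\sum_i\ell_i\mu(G,N_i)$ points with kernel $\bigcap_i C_G(N_i)=C_G(\Soc(G))=1$; hence $\mu(G)\le\sum_i\ell_i\mu(G,N_i)$.

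\textbf{Lower bound.} Fix an optimal family $\mathcal{H}$ from \Prop{prop:Johnson}, so $\mu(G)=\sum_{H\in\mathcal{H}}[G:H]$ and $\bigcap_H\mathrm{Core}_G(H)=1$. For $H\in\mathcal{H}$ set $J_H:=\{\,i : N_i\cap\mathrm{Core}_G(H)=1\,\}$. The crux is the \emph{per-subgroup estimate}
\[
[G:H]\ \ge\ \sum_{i\in J_H}\ell_i\,\mu(G,N_i)\qquad\text{for every }H\in\mathcal{H}.
\]
Granting this, and using that by the structural fact each $i$ lies in $J_H$ for at least one $H\in\mathcal{H}$, a double count finishes the job:
\begin{align*}
\mu(G)=\sum_{H\in\mathcal{H}}[G:H]
&\ \ge\ \sum_{H\in\mathcal{H}}\sum_{i\in J_H}\ell_i\mu(G,N_i)
=\sum_{i=1}^{r}\ell_i\mu(G,N_i)\cdot\bigl|\{H\in\mathcal{H} : i\in J_H\}\bigr|\\
&\ \ge\ \sum_{i=1}^{r}\ell_i\mu(G,N_i).
\end{align*}

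\textbf{Proving the per-subgroup estimate is the main obstacle.} Writing $\Omega=G/H$ (a transitive faithful $G$-set) and $N:=\prod_{i\in J_H}N_i$ (which acts faithfully on $\Omega$), one first reduces to a single index: the $N_i$ with $i\in J_H$ generate their internal direct product and occupy ``disjoint coordinates'', so their contributions add — this is where additivity of $\mu$ over products of non-abelian simples enters. For a fixed $i$, decompose $\Omega$ into $N_i$-orbits; since $N_i\trianglelefteq G$ and $G$ is transitive they all have a common size $m$, the point stabilizer $H$ lies inside the setwise stabilizer of the orbit $\Delta$ through the base point, and the kernel of $N_i$ on $\Delta$ is a sub-product $\prod_{j\in T}S_{ij}$ invariant under that setwise stabilizer. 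Faithfulness of $N_i$ on $\Omega$ together with $G$-transitivity on the factors forces $T\subsetneq[\ell_i]$, so $N_i$ acts on $\Delta$ through a faithful transitive action of a nontrivial power $S^{\ell_i-|T|}$. The remaining — and genuinely delicate — step is to combine the orbit sizes with the number of orbits and with the fact that $N_G(S_{i1})$ induces exactly $A_i$ on $S_{i1}$, to conclude that the total degree is at least $\ell_i\mu(A_i)$. This is precisely the technical heart of \cite{easdownPraeger1988} and \cite[Prop.~2.5]{CannonHoltUnger}, and is exactly where the sporadic ``doublings/triplings'' recorded in \Cref{Table-MD-AlmostSimple} (coming from field and graph automorphisms) are accounted for; I would follow their analysis for this part.
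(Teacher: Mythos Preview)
The paper does not prove this lemma at all: it is stated as \cite[Proposition~2.5]{CannonHoltUnger} and used as a black box in the proof of \Cref{thm:semisimple-group}. So there is no ``paper's own proof'' to compare against; your write-up is an independent attempt to supply one.

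On its merits: the upper bound is fine and is the standard construction (induce a faithful $A_i$-set through $N_G(S_{i1})$, observe the kernel is $C_G(N_i)$, take the disjoint union, and use $C_G(\Soc(G))=1$). The lower bound, however, is not actually proved. You correctly reduce it to a per-subgroup estimate $[G:H]\ge\sum_{i\in J_H}\ell_i\mu(G,N_i)$, but then explicitly say that the ``genuinely delicate'' step---passing from a faithful transitive action of $N_i$ on an orbit $\Delta$ to the bound $|\Omega|\ge\ell_i\mu(A_i)$---is ``precisely the technical heart of \cite{easdownPraeger1988} and \cite[Prop.~2.5]{CannonHoltUnger}'' and that you ``would follow their analysis for this part.'' That is circular: the statement you are trying to prove \emph{is} \cite[Prop.~2.5]{CannonHoltUnger}, and the content of that proposition lies exactly in this step. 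The orbit/stabilizer bookkeeping you sketch (common orbit size, kernel a sub-product, etc.) is the easy part; what is missing is the argument that the almost-simple overgroup $A_i$---not just $S_{i1}$---controls the minimal degree of each orbit, which is where the exceptional cases of \Cref{Table-MD-AlmostSimple} enter and where the actual work is. As written, your lower bound is an outline that points to the reference for its own conclusion.
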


\begin{theorem}\label{thm:semisimple-group}
Let $G, K \leq \Sym(n)$ with $K \trianglelefteq G$. Let $\textbf{G} = G/K$. Suppose that $\textbf{G}$ is Fitting-free.  Then in polynomial time, we can compute $\mu(\textbf{G})$ and a minimal faithful permutation representation of $\textbf{G}$. Furthermore, if $K = 1$, then we can compute $\mu(\textbf{G})$ in $\textsf{NC}$, and a minimal faithful permutation representation of $\textbf{G}$ in $\textsf{RNC}$.
\end{theorem}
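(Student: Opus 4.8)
The plan is to combine Lemma~\ref{CHU-general-Lemma} with Theorem~\ref{thm: semisimple_Unique Minimal}. By Lemma~\ref{CHU-general-Lemma}, once we know the decomposition $\Soc(\textbf{G}) = N_1 \times \cdots \times N_r$ into minimal normal subgroups, together with the number $\ell_i$ of simple factors in each $N_i$ and the values $\mu(\textbf{G}, N_i)$, we simply output $\sum_{i=1}^r \ell_i \mu(\textbf{G}, N_i)$. So the proof reduces to (1) computing the socle and its decomposition into minimal normal subgroups, with the simple factors of each $N_i$ made explicit, and (2) invoking Theorem~\ref{thm: semisimple_Unique Minimal} once per minimal normal subgroup. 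The final arithmetic (multiplying $\ell_i$ by $\mu(\textbf{G},N_i)$ and summing $r \leq \log|\textbf{G}| \in O(n\log n)$ terms) is trivially in $\textsf{NC}$.

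For step (1), I would split into the two cases. When $K = 1$ (permutation group case): by Proposition~\ref{prop:ComputeSocleFittingFree} we compute generators for $\Soc(G)$ and decompose it into a direct product of non-abelian simple factors, all in $\textsf{NC}$; then by Proposition~\ref{prop:MinimalNormalFittingFree} we compute the minimal normal subgroups of $G$ in $\textsf{NC}$ (grouping the simple factors into orbits under the conjugation action of $G$), which simultaneously gives us the $\ell_i$ and explicit generators for the simple factors of each $N_i$. When $\textbf{G} = G/K$ is a genuine quotient: we first compute $\Soc(G)$ in polynomial time using the Kantor--Luks machinery \cite{KantorLuksQuotients}, decompose it into non-abelian simple factors, and then Proposition~\ref{prop:MinimalNormalFittingFree} (whose statement already covers quotients, in $\textsf{NC}$) yields the minimal normal subgroups of $\textbf{G}$. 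Note that since $\textbf{G}$ is Fitting-free, $K \supseteq \rad(G)$ and $\Soc(\textbf{G})$ is a direct product of non-abelian simple groups, so there are no abelian parts to worry about.

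For step (2), for each minimal normal subgroup $N_i$ of $\textbf{G}$ with its simple factors in hand, apply Theorem~\ref{thm: semisimple_Unique Minimal} to obtain $\mu(\textbf{G}, N_i)$: in polynomial time in general, and in $\textsf{NC}$ when $K = 1$. Since there are at most $O(n\log n)$ such $N_i$ and the computations for distinct $N_i$ are independent, running them in parallel preserves the $\textsf{NC}$ bound (and obviously the polynomial-time bound), because $\textsf{NC}$ is closed under taking a polynomial number of parallel subroutine calls.

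The main obstacle — and the place where the $\textsf{NC}$ bound is lost in the quotient case — is computing $\Soc(\textbf{G})$ itself. For permutation groups this is exactly Proposition~\ref{prop:ComputeSocleFittingFree}, which rests on the $\textsf{NC}$ centralizer-of-a-normal-subgroup algorithm of Babai--Luks--Seress; for quotients, as noted in the remarks, the analogous centralizer/core computations of Kantor--Luks appear resistant to parallelization, so we only get polynomial time there. Everything downstream of the socle computation — decomposition into simple factors, forming orbits to get minimal normal subgroups, and the per-factor call to Theorem~\ref{thm: semisimple_Unique Minimal} — respects the tighter $\textsf{NC}$ bound, so the complexity of the whole algorithm is governed by the socle step, giving $\textsf{P}$ in general and $\textsf{NC}$ when $K = 1$.
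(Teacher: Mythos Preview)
Your proposal is correct and follows essentially the same approach as the paper: compute the socle and its simple-factor decomposition (polynomial time via Kantor--Luks for quotients, $\textsf{NC}$ via Proposition~\ref{prop:ComputeSocleFittingFree} when $K=1$), group the factors into minimal normal subgroups via conjugation orbits, then apply Lemma~\ref{CHU-general-Lemma} together with Theorem~\ref{thm: semisimple_Unique Minimal}. One small notational slip: in the quotient case you write ``compute $\Soc(G)$'' where you mean $\Soc(\textbf{G})$; also, the aside that $K \supseteq \rad(G)$ is true but not needed for the argument.
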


\begin{proof}
{\bf Determining $\mu(\textbf{G})$.} Note that $Rad(\textbf{G})=1$. We first compute $\Soc(\textbf{G})$ and its direct product decomposition into non-abelian simple groups $\Soc(\textbf{G}) = S_{1} \times \cdots \times S_{k}$. This step is polynomial-time computable in the quotients model \cite{KantorLuksQuotients}, and $\textsf{NC}$ computable when $K = 1$ (Proposition~\ref{prop:ComputeSocleFittingFree}).

Given such a decomposition, we may easily recover in $\textsf{NC}$ the minimal normal subgroups of $\textbf{G}$ by considering the conjugation action of $G$ on $S_{1}, \ldots, S_{k}$ (Proposition~\ref{prop:MinimalNormalFittingFree}). 

Let $\mathrm{Soc}(\textbf{G})=N_1 \times \cdots \times N_r$, where $N_i$'s are minimal normal subgroups of $\textbf{G}$. Suppose that $N_{i}=S_{i1} \times \cdots \times S_{i\ell_{i}}$.  From \Cref{CHU-general-Lemma} we have $\mu(\textbf{G})= \sum_{i=1}^{r} \ell_{i} \mu(\textbf{G},N_i)$. Note that given a direct product decomposition of $\Soc(\textbf{G})$ and the $N_i$'s, we can find $r$ and $\ell_{i}$'s in $\textsf{NC}$. As the $N_{i}$'s are non-abelian, we have by \Cref{thm: semisimple_Unique Minimal} that we can compute $\mu(\textbf{G},N_i)$ in polynomial-time for quotients, and in $\textsf{NC}$ if $K = 1$, as desired.

{\bf Computing a Minimal Faithful Permutation representation of $\mu(\textbf{G})$.} We construct a faithful permutation representation of $G$ on $\mu(G)=\sum_{i=1}^{r} \ell_{i} \mu(\textbf{G},N_i)$ many points as follows. Let $A_i$ be the almost simple groups induced by the conjugation action of $N_{\textbf{G}}(S_{i1})$ on $S_{i1}$. By \Cref{thm: semisimple_Unique Minimal}, we have obtained a minimal faithful permutation representation of $A_i$ on $\mu(A_i)=\mu(\textbf{G}, N_i)$ many points. Therefore, we have a generating set of $A_i$ as permutations from $\Sym(\mu(A_i))$. Let $\phi_i:N_{\textbf{G}}(S_{i1}) \to A_i$ be the homomorphism induced by the conjugation action of $N_{\textbf{G}}(S_{i1})$ on $S_{i1}$. Since $[\textbf{G}:N_{\textbf{G}}(S_{i1})]=\ell_{i}$, we can compute a set of coset representatives (i.e., a transversal $T_i$) of $N_{\textbf{G}}(S_{i1})$ in $\textbf{G}$ in polynomial time in the setting of quotients \cite{BabaiLuksSeress} (see e.g., \cite[(3.4)]{LuksReduction}), and in \textsf{RNC} in the setting of permutation groups \cite{BabaiLuksSeress}. Let $T_i'=\{N_{\textbf{G}}(S_{i1})t_i : t_i \in T_i \}$ be a set of right cosets of $N_{\textbf{G}}(S_{i1})$ in $\textbf{G}$, then $|T_i|=|T_i'|=\ell_i$. Let $\chi_i :\textbf{G} \to \Sym(\ell_{i})$ be a homomorphism induced by the natural action of $\textbf{G}$ on $T_i'$, and let $P_i$ be the subgroup of $\Sym(\ell_{i})$ induced by this action. Let $f_i: P_i \to \Sym(\ell_{i})$ be the inclusion map, sending $f_{i}(x) = x$, for all $x \in P_{i}$. We define a map $\phi_i \wr f_i : N_{\textbf{G}}(S_{i1}) \wr P_i \to A_i \wr \Sym(\ell_{i})$. 

For each $g\in G$, denote by $\bar{g}$, the unique element in $N_{\textbf{G}}(S_{i1})g \,\, \cap T_i$. Given $g$ we can compute $\overline{g}$ as follows. For each $t_i \in T_i$, we check if $t_ig^{-1} \in N_{\textbf{G}}(S_{i1})$ in $\textsf{NC}$ by Lemma~\ref{QuotientsNC}(b). We can regard $P_i$ as acting on the set $T_i$. Let $N_{\textbf{G}}(S_{i1}) \wr P_i$ be the wreath product defined using this action of $P_i$ on $T_i$. Then the base group of $N_{\textbf{G}}(S_{i1}) \wr P_i$ is ${\rm Fun}(T_i, N_{\textbf{G}}(S_{i1}))$, the set of all function from $T_i$ to $N_{\textbf{G}}(S_{i1})$. The action of $P_i$ on the base group is given by $y \to y^p$, where $y^p(t)=y(t^{p^{-1}})$ for $y \in {\rm Fun}(T_i, N_{\textbf{G}}(S_{i1}))$, $t \in T_i$, and $p \in P_i$. Then there is a monomorphism $\pi_i: \textbf{G} \to N_{\textbf{G}}(S_{i1}) \wr P_i$ defined by $\pi_i(g)=y^{\chi_i(g)}$, where $g \in \textbf{G}$ and $y \in {\rm Fun}(T_i, N_{\textbf{G}}(S_{i1}))$ defined by $y(t)=\overline {tg^{-1}} gt^{-1}$ (\cite[Section 2]{GrossKovacs1984}, see e.g., \cite[Section 2.2]{CH04}).

Define $\rho_i: \textbf{G} \to A_i \wr \Sym(\ell_{i})$ to be the composition of $\pi_i$ and $\phi_i \wr f_i$.  By \cite[Lemma 2.2]{CH04}, $\ker(\rho_i)$ is $C_{\textbf{G}}(N_i)$. Consider a map $\prod_{i=1}^{r} \rho_i: \textbf{G} \to \prod_{i=1}^{r} A_i \wr \Sym(\ell_{i})$ given by $\big(\prod_{i=1}^{r} \rho_i\big)(g)=(\rho_1(g), \ldots, \rho_r(g))$. The kernel of $\prod_{i=1}^{r} \rho_i$ is $\bigcap_{i=1}^r C_{\textbf{G}}(N_i)= C_{G}(\Soc(\textbf{G}))=1$ \cite[Section~3.1]{CH03}. Now $A_i \wr \Sym(\ell_{i})$ admits a natural embedding into $\Sym(\ell_{i} \mu(A_{i}))$ (see e.g., \cite{CH03, GrossKovacs1984}). We briefly recall the embedding here. We view the $\ell_{i}  \mu(A_i)$ many points as $\ell_{i}$ many distinct chunks, each of size $\mu(A_i)$. An element of $\Sym(\ell_{i})$ will then act on these chunks, and an element of $A_i$ will define the action within chunks (on $\mu(A_i)$ many points). 

Therefore, $\prod_{i=1}^{r} \rho_i$ is a faithful action of $G$ on $\sum_{i=1}^{r} \ell_{i} \mu(A_i)=\mu(\textbf{G})$ many points. 
\end{proof}

Now we prove~\Cref{thm: semisimple_Unique Minimal} in the remainder of this section. \\

\noindent \emph{\textbf{Proof of \Cref{thm: semisimple_Unique Minimal}}}. We follow the strategy of \cite[Theorem~4.2]{DasThakkarMPD}.
Let $N=S_1\times\cdots\times S_{\ell}$. The next task is to find the group $A$ mentioned in Definition~\ref{def-mu(G,N)}. Since $\mu(\textbf{G}, N)=\mu(A)$, our task is to find $\mu(A)$.  

As we are given the direct decomposition of $N$ into $S_1 \times \cdots \times S_{\ell}$, we have by \Cref{Normalizer} that we can, in $\textsf{NC}$ (for both the quotients and permutation groups models), compute $N_{\textbf{G}}(S_1)=\langle g_1,\ldots,g_t \rangle$. Each element $g \in N_{\textbf{G}}(S_1)$ induces an automorphism $C_g: S_1 \rightarrow S_1$ which maps an element $s$ to $g^{-1}sg$ i.e.,  $C_g(s)=g^{-1}sg$. Since $N_\textbf{G}(S_1)=\langle g_1, \ldots, g_t  \rangle$, we have  $A=\langle C_{g_{1}}, \ldots, C_{g_{t}} \rangle$ and $N_{\textbf{G}}(S_1)/C_{\textbf{G}}(S_1) \cong A$. The latter equality can be used to compute $|A|$. As $C_{N_{\textbf{G}}(S_1)}(S_1) = C_{\textbf{G}}(S_1) \trianglelefteq N_{\textbf{G}}(S_1)$, we can compute $C_{\textbf{G}}(S_1)$ in polynomial-time in the setting of quotients \cite{KantorLuksQuotients}, and in $\textsf{NC}$ for permutation groups \cite{BabaiLuksSeress}.

To compute $\mu(A)$, we use \cite[Proposition~2.2]{CannonHoltUnger}, which says that $\mu(A)$ equals $\mu(S_1)$ except for the cases mentioned in \cite[Table]{CannonHoltUnger} (cf. \Cref{MD-AlmostSimple}). If $S_1$ and $A$ do not correspond to any of the cases mentioned in the table, then $\mu(A)=\mu(S_1)$ can be computed using Theorem~\ref{thm:ConstructiveRecognition}.

Using Theorem~\ref{thm:ConstructiveRecognition}, we first determine the name of $S_1$ in polynomial time for quotients and $\textsf{NC}$ for permutation groups. Once we have the name of $S_1$, we are able to determine if $S_1$ is one of the finite simple groups in Table~\ref{Table-MD-AlmostSimple}. Next, we need to test if $A$ satisfies the condition mentioned in the 3rd column of the same row in the table. If the conditions mentioned in the \Cref{Table-MD-AlmostSimple} are satisfied, then the table itself gives the value of $\mu(A)$. Otherwise, by \Cref{MD-AlmostSimple}, we know that $\mu(A)=\mu(S_1)$, and we can compute $\mu(S_1)$ using Theorem~\ref{thm:ConstructiveRecognition}.

Note that as the groups mentioned rows 1-9 of the \Cref{Table-MD-AlmostSimple} are of constant size, these rows are easy to handle. We handle other rows separately. First, we need to check if $S_1$ is isomorphic to one of the following simple groups: $\POmegaPlus(8,q)$, $\PSL(d,q),\PSp(4,2^e), \POmegaPlus(2d,3),\G(3^e),\F(2^e),\Esix(q)$. For classical simple groups $\POmegaPlus(8,q)$, $\PSL(d,q),\PSp(4,2^e), \POmegaPlus(2d,3)$ this check is computable in $\textsf{NC}$ for both permutation groups and quotients (\Cref{thm:ConstructiveRecognition}). If $S_1$ is isomorphic to a simple group of Lie type, then its order is bounded by $n^9$. By Lemma~\ref{lem:List}, we can write down the multiplication table for $S_1$ in $\textsf{NC}$. As there are at most two finite simple groups of order $|S_1|$, we can choose prescribed permutation representations for the appropriate subset of $\G(3^e), \F(2^e), \Esix(q)$  \cite{ref4MV,ref5V,ref6V,ref7V} and apply  Lemma~\ref{lem:List} to obtain the multiplication tables for ``standard copies" of these groups. Then, using the generator enumeration strategy, we can constructively recognize the isomorphism type of $S_1$ in $\textsf{NC}$ \cite{Wolf, TangThesis} by building an isomorphism between $S_1$ and its  standard copy. Now by Theorem~\ref{thm:ConstructiveRecognition}, we can compute both $\mu(\G(3^e)), \mu(\F(2^e)), \mu(\Esix(q))$ as well as a minimal faithful permutation representation on the cosets of a maximal subgroup in $\textsf{NC}$. Therefore, we can check in $\textsf{NC}$ if $S_1$ isomorphic to one of $\G(3^e)$, $\F(2^e)$, $\Esix(q)$.

So we can now assume that we have identified if $S_1$ is isomorphic to one of the simple groups mentioned in the 2nd column of Table~\ref{Table-MD-AlmostSimple} in $\textsf{NC}$. We now need to check if $A$ satisfies the corresponding conditions in the 3rd column of Table~\ref{Table-MD-AlmostSimple}. We handle this check for each of the cases in the next few sections: $\POmegaPlus(8,q)$ (Section~\ref{sec:POmegaPlus}); $\PSL(d,q)$ (Section~\ref{PSL-case}); $\PSp(4, 2^e)$ (Section~\ref{PSp-case}); $\POmegaPlus(2d,3)$ (Section~\ref{POmega-case}); and $\G, \F, \Esix$ (Section~\ref{sec:exceptional-groups}). 

While computing the minimal faithful permutation degree of the non-abelian simple groups not listed on Table~\ref{Table-MD-AlmostSimple} poses no challenge, we must still build minimal faithful permutation representations for the corresponding almost simple groups. We handle this in Section~\ref{sec:RemainingSimpleGroups}.

\section{\texorpdfstring{$S_1 \cong \POmegaPlus(8,q)$}.} \label{sec:POmegaPlus}
 
In this section, we will establish the following.

\begin{proposition} \label{prop:5.1}
Let $G, K \leq \text{Sym}(n)$, with $K \trianglelefteq G$. Let $\textbf{G} := G/K$ be Fitting-free, with minimal normal subgroup $N = S_{1} \times \cdots \times S_{\ell}$, where the $S_{i}$'s are isomorphic non-abelian simple groups, isomorphic to $\POmegaPlus(8,q)$ for some prime power $q \geq 4$. Furthermore, suppose that we are given the following:
\begin{itemize}
\item Generators for each of the $S_{i}$'s.
\item A minimal faithful permutation representation $\text{Iso} : S_{1} \to S_{1}$, detailing the action of $S_{1}$ on the set $\overline{V}$ of $1$-dimensional anisotropic subspaces of $V = \mathbb{F}_{q}^{8}$. 

\item A basis $\beta$ for $V$, as well as the underlying quadratic form $Q$ and the associated bilinear form $\mathfrak{f}_{Q}$.

\item $N_{\textbf{G}}(S_1) = \langle g_1, \ldots, g_t \rangle$, as well as $C_{\textbf{G}}(S_1)$.
\item For each $i \in [t]$, let $C_{g_{i}} : S_1 \to S_1$ be given by $C_{g_i}(s) = g_{i}^{-1}sg_{i}$. Suppose that we are given $A = \langle C_{g_{i}}, \ldots, C_{g_{t}} \rangle$.
\end{itemize}

\noindent Then we can compute $\mu(\textbf{G},N) := \mu(A)$, as well as a minimal faithful permutation representation of $A$, in $\textsf{NC}$. 
\end{proposition}

We first recall some preliminaries from \cite{HallTrialityNotes}. An \emph{algebra} $\mathfrak{A}$ over the field $\mathbb{F}$ is an $\mathbb{F}$-vector space, together with a bilinear product $\pi : \mathfrak{A} \times \mathfrak{A} \to \mathfrak{A}$. The algebra admits \emph{composition} if there is a nondegenerate quadratic form $Q : \mathfrak{A} \to \mathbb{F}$ such that for all $a, b \in \mathfrak{A}$, $Q(a)Q(b) = Q(ab)$. In particular, $Q(1) = 1$.

We now turn to proving Proposition~\ref{prop:5.1}.
\begin{proof}[Proof of Proposition~\ref{prop:5.1}]
\noindent 
\begin{itemize}
\item \textbf{Determining $\mu(A)$.} From the 3rd column, we need to check if $3$ divides $|A /S_1|$. We can determine $|S_1|$, $|N_G(S_1)|$, $|C_G(S_1)|$ and $|A /S|$ in $\textsf{NC}$ (Lemma~\ref{QuotientsNC}). If $3$ divides $|A/S|$ then then $\mu(G,N)= 3\mu(S_1)$ otherwise $\mu(G,N)= \mu(S_1)$.

\item \textbf{Computing a minimal faithful permutation representation of $A$.} We now turn to computing a minimal faithful permutation representation of $A$. For each $i \in [t]$, define $\lambda_{g_{i}} : \text{Iso} \circ C_{g_{i}} \circ \text{Iso}$ ($i \in [t]$). Let $U \in \mathrm{\Omega^{+}}(8,q)$ with rows and columns of $U$ are numbered by $1,\ldots, d,$ $ -1, \ldots, -d$. Let:
\begin{align*}
L=\{&I+\beta(e_{i,j}-e_{-j,-i}), \\
&I-\beta(e_{-i,-j}-e_{j,i}), \\
&I+\beta(e_{i,-j}-e_{j,-i}), \\
&I+\beta(e_{-i,j}-e_{-j,i})  \mid 0 < i < j, \beta \in \mathbb{F}_3\},    
\end{align*} 
where $e_{i,j}$ is the matrix which has $1$ in the $(i, j)$th entry and $0$ in other places. Then $\mathrm{\Omega^{+}}(8,q)=\langle L \rangle$ (see e.g., \cite[Page~185]{carter-book}).

For each $i \in [t]$, we apply Lemma~\ref{construct-alpha-POmega} on $\lambda_{g_{i}}$ to obtain, in $\textsf{NC}$, $\alpha_{g_{i}} \in \Aut(\Omega^{+}(8,q))$ such that $\lambda_{g_{i}} = \overline{\alpha_{g_{i}}}$. Each $\alpha_{g_{i}}$ is specified by its action on $L$. The subgroup $\langle \alpha_{g_1}, \ldots, \alpha_{g_t} \rangle$ is the natural embedding of $A$ into $\Aut(\Omega^{+}(8,q))$. We will construct permutation representations for each $\alpha_{g_{i}}$ ($i \in [t]$).

From here on out, we will use $\mathcal{A} := \mathcal{A}_{V}$ to refer to the algebra arising from $(V, Q)$. For $x \in \mathfrak{A}$, the \emph{conjugate} of $x$ is $\overline{x} := -x + \mathfrak{f}_{Q}(x,1)1$. Let $\mathcal{S}$ be the set of nonzero singular vectors with respect to $\mathfrak{f}_{Q}$. Note that $\mathcal{S}$ is $\textsf{NC}$-computable \cite{KantorLuksMark,MarkThesis}.

We now turn to specifying the \emph{triality automorphisms} $\tau$ and $\kappa$ (see e.g., \cite{HallTrialityNotes}). Define $M^{\lambda} := \{ x\mathfrak{A} : x \in \mathcal{S} \}$ and $M^{\rho} := \{ \mathfrak{A}x : x \in \mathcal{S} \}$, and let $\mathcal{T} := V^{*} \dot\cup M^{\lambda} \dot\cup M^{\rho}$. Now $\tau$ acts on $\mathcal{T}$ as follows:
\[
[x] \xrightarrow{\tau} \overline{x}\mathfrak{A} \xrightarrow{\tau}\mathfrak{A}\overline{x} \xrightarrow{\tau} [x],
\]
where we use $[x]$ to denote the $1$-space generated by $x \in V$. Furthermore, define the permutation $\kappa$ on $\mathcal{T}$ as follows: 
\[
\kappa([x]) = [\overline{x}]; \kappa(x \mathfrak{A}) = \mathfrak{A} \overline{x};  \kappa(\mathfrak{A} x) = \overline{x} \mathfrak{A}.
\]
Note that $\kappa \tau \kappa = \tau^{-1}$ \cite[Proposition~5.19]{HallTrialityNotes}. Additionally, notice that we have explicit permutation representations of $\tau, \kappa$ on $\mathcal{T}$.

Let $q = p^{e}$, where $p$ is a prime. Let us consider the Frobenius automorphism $\sigma_{0}$ of $\Aut(\mathbb{F}_{q})$ given by $a \mapsto a^p$. The automorphism $\sigma_{0}$ is a generator of $\Aut(\mathbb{F}_{q})$ of order $e$. For each $\alpha_{g_{i}}$, define $\alpha_{g_{i}}^{t', t'', t'''} = \alpha_{g_{i}} \mathfrak{f}_{\sigma_{0}}^{-t'} \tau^{-t''} \kappa^{-t'''}$, where $\mathfrak{f}_{\sigma_{0}}$ is a field automorphism; $\tau$ and $\kappa$ are as defined above; and $0 \leq t' < e, 0 \leq t'' \leq 2, t''' \in \{0,1\}$. The maps $\alpha_{g_{i}}^{t', t'', t'''}$ help us to identify the type of automorphism of $\alpha_{g_{i}}$. 

Now for each $t', t'', t'''$, the automorphism $\alpha_{g_{i}}^{t', t'', t'''}$ is specified on a generating set $L$. Observe that there is exactly one choice of $t', t'', t'''$ for which $\alpha_{g_{i}}^{t', t'', t'''}$ is a diagonal automorphism. For each $t', t'', t'''$, we use Observation~\ref{Observation-typechecking-POmega} with Lemma~\ref{lemma-module-equ-cond-POmega} (with $B = L$) to check if $\alpha_{g_{i}}^{t', t'', t'''}$ is a diagonal automorphism. This check can be handled in $\textsf{NC}^{2}$. Let $t', t'', t'''$ such that $\alpha_{g_{i}}^{t', t'', t'''}$ is an inner or diagonal automorphism. Note that if $t'' = t''' = 0$, then $\alpha_{g_{i}} = \alpha_{g_{i}}^{t', t'', t'''} f_{\sigma_{0}}^{t'}$ does not have $\kappa$ or $\tau$ (in which case, we say that $\alpha_{g_{i}}$ is \emph{free of the triality automorphisms}). Note that $h_{i} := \alpha_{g_{i}}^{t', t'', t'''} f_{\sigma_{0}}^{t'} \in \mathrm{P}\Gamma \mathrm{L}(d,q)$, and so is (abstractly) a permutation of $V$ (though at this stage, need not be represented as a permutation of $V$). As we have fixed a basis $\beta$ of $V$, we may apply \cite[Lemma~6.5(iii)]{KantorLuksMark} to obtain $M = (a_{ij}) \leq \text{GL}(V)$ and $\sigma \in \Aut(\mathbb{F}_{q})$ such that the pair $(M, \sigma)$ induces the semilinear transformation $\gamma : \sum_{i=1}^{8} c_{i}b_{i} \mapsto \sum_{i=1}^{8} c_{i}^{\sigma} a_{ij}b_{j}$. 

We will now construct a permutation representation $\psi_{i}$ of $h_{i}$ as follows. We have that $h_{i}$ acts on the anisotropic $1$-spaces $[v]$ of $V$ by sending $[v] \mapsto [\gamma v]$. Furthermore, $h_{i}$ acts on $M^{\lambda}$ by sending $(x\mathfrak{A}) \mapsto (\gamma x)\mathfrak{A}$; and $h_{i}$ acts on $M^{\rho}$ by sending $\mathfrak{A}x \mapsto \mathfrak{A}( (\gamma^{-1})^{T} x)$.

Now if $(t'', t''') \neq (0, 0)$, we multiply $\psi_{i}$ with the permutation representation of $\tau^{t''}\kappa^{t'''}$.

If $\mu(A) = \mu(S)$, then we restrict the permutations $\psi_{1}, \ldots, \psi_{t}$ to the points of $V$. Otherwise, we have constructed a minimal faithful permutation representation of $A$ on $3\mu(S)$ points. \qedhere

\end{itemize}
\end{proof}

\section{\texorpdfstring{$S_1 \cong \PSL(d,q), d \geq 3, (d,q) \neq (3,2), (4,2)$}.}\label{PSL-case}

\begin{proposition} \label{prop:5.2}
Let $d \geq 3$, and $q$ be a prime power, such that $(d,q) \neq (3,2), (4, 2)$. Let $G, K \leq \text{Sym}(n)$, with $K \trianglelefteq G$. Let $\textbf{G} := G/K$ be Fitting-free, with minimal normal subgroup $N = S_{1} \times \cdots \times S_{\ell}$, where the $S_{i}$'s are isomorphic non-abelian simple groups, isomorphic to $\PSL(d,q)$ for some prime power $q$. Furthermore, suppose that we are given the following:
\begin{itemize}
\item Generators for each of the $S_{i}$'s.
\item A minimal faithful permutation representation $\text{Iso} : S_{1} \to S_{1}$, detailing the action of $S_{1}$ on the set $\overline{V}$ of $1$-dimensional subspaces of $V = \mathbb{F}_{q}^{d}$. In particular, we are given $V$ explicitly. 

\item A basis $\beta$ for $V$.

\item $N_{\textbf{G}}(S_1) = \langle g_1, \ldots, g_t \rangle$, as well as $C_{\textbf{G}}(S_1)$.
\item For each $i \in [t]$, let $C_{g_{i}} : S_1 \to S_1$ be given by $C_{g_i}(s) = g_{i}^{-1}sg_{i}$. Suppose that we are given $A = \langle C_{g_{i}}, \ldots, C_{g_{t}} \rangle$.
\end{itemize}

\noindent Then we can compute $\mu(\textbf{G},N) := \mu(A)$, as well as a minimal faithful permutation representation of $A$, in $\textsf{NC}$.
\end{proposition}

\begin{proof}
\noindent
\begin{itemize}
\item \textbf{Determining $\mu(A)$.} To begin, we will outline our approach in this case. Our goal is to determine whether \( A \not \leq \PGammaL \), as indicated in the third column. It is known that \( \PGammaL \leq \aut(\PSL(d,q)) \cong \aut(\mathrm{SL}(d,q)) \). Given that \( A \leq \aut(S_1) \) and \( S_1 \cong \PSL(d,q) \), we can represent \( A \) as a subgroup of \( \aut(\PSL(d,q)) \) via the isomorphism \( {\rm{Iso}} \). Furthermore, we have that \( \aut(\PSL(d,q)) \cong \aut(\mathrm{SL}(d,q)) \), allowing us to also view \( A \) as a subgroup of \( \aut(\mathrm{SL}(d,q)) \).

Now, we have both \( A \) and \( \PGammaL \) as subgroups of \( \aut(\mathrm{SL}(d,q)) \). Furthermore, the automorphisms that contain the graph automorphism $\mathbbm{g}$ in their decompositions are exactly the elements in $\aut(\mathrm{SL}(d,q))\setminus \PGammaL$ (see \Cref{subsec: simplegroup}). Therefore, to verify whether \( A \) is a subgroup of \( \PGammaL \), it suffices to check that each generator of \( A \) is free from the graph automorphism.

Let $\mathrm{SL}(d,q)=\langle L \rangle$, where $L=\{I+\beta e_{i,j} \mid \beta \in \mathbb{F}_{q}^{*}, i,j \in [d], i \neq j\}$, $e_{i,j}$ is the matrix which has $1$ in the $(i,j)$th entry and $0$ in the remaining entries (see e.g., \cite[Page~185]{carter-book}). Define $\lambda_{g_{i}} \in \aut(\PSL(d,q))$ as $\lambda_{g_{i}}= \mathrm{Iso} \circ C_{g_{i}} \circ \mathrm{Iso}^{-1}$. Thus, $A$ is embedded as the subgroup $\langle \lambda_{g_{1}}, \ldots, \lambda_{g_{t}} \rangle$ in  $\aut(\PSL(d,q))$. Now we apply Lemma~\ref{construct-alpha} on each $\lambda_{g_i}$ to get $\alpha_{g_i} \in \aut(\mathrm{SL}(d,q))$  such that $\lambda_{g_i}=\bar{\alpha}_{g_i}$. This step is computable in polynomial-time in the setting of quotients and $\textsf{NC}$ in the setting of permutation groups. Each $\alpha_{g_i}$ is specified by its action on $L$. The subgroup $\langle \alpha_{g_1},\ldots, \alpha_{g_t}\rangle$ is the natural embedding of $A$ in $
\aut(\mathrm{SL}(d,q))$. 

Let $q=p^e$. Let us consider the Frobenius automorphism $\sigma_0$ of $\aut(\mathbb{F}_q)$ given by $a \mapsto a^p$. The automorphism $\sigma_0$ is a generator of $\aut(\mathbb{F}_q)$ of order $e$. For each $\alpha_{g_{i}}$, define $\alpha_{g_{i}}^{t',t''}=\alpha_{g_{i}} \mathbbm{f}_{\sigma_0}^{-t'}\mathbbm{g}^{-t''}$, where $\mathbbm{f}_{\sigma_0}$ is a field automorphism, $\mathbbm{g}$ is a graph automorphism, and $0 \leq t' < e$, $t'' \in \{0,1\}$. The maps $\alpha_{g_{i}}^{t',t''}$ help us to identify the type of the automorphism $\alpha_{g_{i}}$. 

For each $i$, and each $t',t''$, the automorphism $\alpha_{g_{i}}^{t',t''}$ is specified on a generating set $L$. Observe that for a fixed $i$, there is exactly one choice of $t'$ and $t''$ such that $\alpha_{g_{i}}^{t',t''}$ is a diagonal automorphism or an inner automorphism. For each $t',t''$, we use \Cref{Observation-typechecking-PSL} and \Cref{lem-type-checking} (with $B=L$) to check if $\alpha_{g_{i}}^{t',t''}$ is an inner or a diagonal automorphism. This check can be handled in $\textsf{NC}^2$. Let $t',t''$ be such that $\alpha_{g_{i}}^{t',t''}$ is either an inner automorphism or a diagonal automorphism. Note that $t''=0$ if and only if $\alpha_{g_{i}}=\alpha_{g_{i}}^{t',t''} \circ  \mathbbm{f}_{\sigma_0}^{t'}$ and it does not have a graph automorphism in its decomposition, i.e., $\alpha_{g_{i}}  \in \PGammaL$.  

If for each $i$, $\alpha_{g_i}$ turns out to be in $\PGammaL$ we can conclude that $A\leq \PGammaL$. Thus, we can test if $A \not \leq \PGammaL$ in polynomial-time in the setting of quotients and $\textsf{NC}$ in the setting of permutation groups. If $A \leq \PGammaL$ then $\mu(\textbf{G},N)=\mu(S_1)$. Otherwise, $\mu(\textbf{G},N)=2\mu(S_1)$ (see e.g., \cite{CannonHoltUnger}).

\item \noindent \textbf{Computing a Minimal Faithful Permutation Representation of $A$.} We now turn computing a minimal faithful permutation representation of $A$. We first turn to specifying the permutation representation of the graph automorphism, $\iota_{\beta} : \PSL(d,q) \to \PSL(d,q)$ sending $M \mapsto (M^{-1})^{T}$. Recall that $\mu(A) = 2\mu(S)$ if and only if some generator of $A$ contains the graph automorphism. In particular, we require $2\mu(S)$ points to construct a minimal faithful permutation representation of $\iota_{\beta}$. We accomplish this in the following manner. Let $V^{*} = \text{Hom}(V, \mathbb{F}_{q})$ be the dual space of $V$. As we are given $V$ explicitly, we may construct $V^{*}$ in $\textsf{NC}$. For a subspace $U \leq V$, define:
\[
U^{\perp} = \{ \ell : V \to \mathbb{F}_{q} \mid \ell(u) = 0 \text{ for all } u \in U\}.
\]
For each $1$-space $[v] \leq V$, we consider $([v], +)$ to be the corresponding $1$-space in $V$, and $([v], -)$ to correspond to $[v]^{\perp}$. Now $\iota_{\beta}$ swaps $([v], +)$ and $([v], -)$ for all $1$-spaces $[v]$ of $V$. 

While computing $\mu(A)$ we have computed a decomposition of $\alpha_{g_{i}}$ as $\alpha_{g_{i}}^{t',t''}\mathbbm{g}^{t''} \mathbbm{f}_{\sigma_0}^{t'}$. However, we can also compute the decomposition of $\alpha_{g_{i}}$ as $\alpha_{g_{i}}^{t',t''} \mathbbm{f}_{\sigma_0}^{t'} \mathbbm{g}^{t''}$ as in \Cref{baarnhielm Thm-decomposition} similar to the above. With this later decomposition, suppose for each $i \in [t]$, let $t', t''$ such that $\alpha_{g_{i}}^{t', t''}$ is inner or diagonal. In particular, $\alpha_{g_{i}}^{t', t''}\mathfrak{f}_{\sigma_{0}}^{t'} \in \PGammaL$, and so $\alpha_{g_{i}}^{t', t''}\mathfrak{f}_{\sigma_{0}}^{t'}$ is (abstractly) a permutation of $V$ (though at this stage, need not be represented as a permutation of $V$). As we have fixed a basis $\beta$ of $V$, we may apply \cite[Lemma~6.5(iii)]{KantorLuksMark} to obtain $M = (a_{ij}) \leq \GL(V)$ and $\sigma \in \Aut(\mathbb{F}_{q})$ such that the pair $(M,\sigma)$ induces the semilinear transformation $\tau : \sum_{i=1}^{d} c_{i}e_{i} \mapsto \sum_{i=1}^{d} c_{i}^{\sigma} a_{ij} e_{j}$. By applying $\tau$ to $V$, we may construct a permutation representation $\psi_{i}$ of $\alpha_{g_{i}}^{t', t''}\mathfrak{f}_{\sigma_{0}}^{t'}$. Precisely, $\alpha_{g_{i}}^{t', t''}\mathfrak{f}_{\sigma_{0}}^{t'}$ maps points of the form $([v], +) \mapsto ([\tau v], +)$, and points of the form $([v], -) \mapsto ([(\tau^{-1})^{T}v], -)$.

Now if $t'' \neq 0$, then $\alpha_{g_{i}}$ has a graph automorphism. So we multiply $\psi_{i}$ with the permutation representation of $\iota_{\beta}$.

If $\mu(A) = \mu(S)$, then we restrict the permutation representations $\psi_{1}, \ldots, \psi_{t}$ to the points of the form $([v], +)$. Otherwise, we have constructed a minimal faithful permutation representation of $A$ on $2\mu(S)$ points. \qedhere
\end{itemize}
\end{proof}

\section{\texorpdfstring{$S_1 \cong \PSp(4,2^e), e \geq 2$}.}\label{PSp-case}

\begin{proposition} \label{prop:PSp}
Let $e \geq 2$. Let $G, K \leq \text{Sym}(n)$ with $K \trianglelefteq G$. Let $\textbf{G} := G/K$ be Fitting-free, with minimal normal subgroup $N = S_{1} \times \cdots \times S_{\ell}$, where the $S_{i}$'s are all isomorphic non-abelian simple groups, isomorphic to $\PSp(4,2^e)$. Furthermore, suppose that we are given the following.
\begin{itemize}
\item Generators for each of the $S_i$'s.
\item A minimal faithful permutation representation $\text{Iso} : S_1 \to S_1$, detailing the action of $S_1$ on the $1$-spaces $\overline{V}$ of a symplectic space $V = \mathbb{F}_{2^{e}}^{4}$. In particular, we are given $V$ explicitly by its full set of elements.

\item $N_{\textbf{G}}(S_1) = \langle g_1, \ldots, g_t \rangle$, as well as $C_{\textbf{G}}(S_1)$.

\item For each $i \in [t]$, let $C_{g_{i}} : S_1 \to S_1$ be given by $C_{g_{i}}(s) = g_{i}^{-1}sg_{i}$. Suppose that we are given $A = \langle C_{g_{1}}, \ldots, C_{g_{t}} \rangle$.
\end{itemize}

Then we can compute $\mu(\textbf{G}, N) := \mu(A)$, as well as a minimal faithful permutation representation of $A$, in $\textsf{NC}$.
\end{proposition}

The case when $S_1 \cong \PSp(4,2^e)$ is handled similarly to the case of $S_1 \cong \PSL(d,q)$. For completeness, we provide the details below.

\begin{proof}
\noindent 
\begin{itemize}
    
\item \textbf{Determining $\mu(A)$.} To begin, we will outline our approach in this case. Note that  $Z=Z(\rm{Sp}(4,2^e))=1$. Thus, $\PSp(4,2^e) =\mathrm{Sp}(4,2^e)$, and so $\aut(\PSp(4,2^e))=\aut(\rm{Sp}(4,2^e))$.  From the third column,  our aim is to decide if $A \not \leq \rm{P}\Gamma\rm{Sp}(4,2^e)$. Since $S_1 \cong \PSp(4,2^e)$, we can view $A$ as a subgroup of $\aut(\PSp(4,2^e))= \aut(\rm{Sp}(4,2^e))$. Thus, we have $A, \rm{P}\Gamma\rm{Sp}(4,2^e) \leq \aut(\rm{Sp}(4,2^e))$ \cite{KleidmanLiebeck,carter-book}.

As described in \Cref{def:automorphisms}, there are four types of automorphisms. Also, any automorphism of $\rm{Sp}(4,2^e)$ can be decomposed as $ \mathbbm{i}\mathbbm{d}\mathbbm{f}\mathbbm{g}$ \cite{carter-book}. The automorphisms of $\rm{Sp}(4,2^e)$ containing the graph automorphism $\mathbbm{g}$ in its decomposition are exactly the elements in $\aut(\rm{Sp}(4,2^e))\setminus \rm{P}\Gamma\rm{Sp}(4,2^e)$ \cite{KleidmanLiebeck,carter-book}. Therefore, to check if $A$ is a subgroup of $\rm{P}\Gamma\rm{Sp}(4,2^e)$, it is enough to check if each of the generators of $A$ is free of the graph automorphism. 

We now provide the computational details. For a matrix $U \in \rm{Sp}(4,2^e)$, numbering the row and columns by $1,\ldots, d, -1, \ldots, -d$. Let
 \begin{align*}
 L=\{&I+\beta(e_{i,j}-e_{-j,-i}),\\ 
 &I-\beta(e_{-i,-j}-e_{j,i}),\\ 
 &I+\beta(e_{i,-j}+e_{j,-i}),\\
 &I-\beta(-e_{-i,j}-e_{-j,i}),\\ 
 &I+\beta e_{i,-i},\\ 
 &I+\beta e_{-i,i}  \mid 0 < i < j, \beta \in \mathbb{F}_{2^e}\},
 \end{align*}
 where $e_{i,j}$ is the matrix which has $1$ in the $(i, j)$th entry and $0$ in other places. Then $\rm{Sp}(4,2^e)=\langle L \rangle$ (see e.g., \cite[Page~186]{carter-book}). Note that for each $U \in L$, $\ord(U)=2$.

Define $\lambda_{g_{i}} \in \aut(\PSp(4,2^e))=\aut(\rm{Sp}(4,2^e))$ as $\lambda_{g_{i}}= \mathrm{Iso} \circ C_{g_{i}} \circ \mathrm{Iso}^{-1}$. Thus $A \cong \langle \lambda_{g_{1}}, \ldots, \lambda_{g_{t}} \rangle \leq \aut(\rm{Sp}(4,2^e))$. For each $\lambda_{g_{i}}$, define $\alpha_{g_{i}}^{t',t''}=\lambda_{g_{i}} \mathbbm{f}_{\sigma_0}^{-t'}\mathbbm{g}^{-t''}$ such that $0 \leq t' < e$ and $t'' \in \{0,1\}$. Our aim is to identify the type of $\lambda_{g_{i}}$. Note that it is enough to identify the type of $\alpha_{g_{i}}^{t',t''}$.

For each $i$, and each $t',t''$, the automorphism $\alpha_{g_{i}}^{t',t''}$ is specified on a generating set $L$. Observe that for a fixed $i$, there is exactly one choice of $t'$ and $t''$ such that $\alpha_{g_{i}}^{t',t''}$ is a diagonal automorphism or an inner automorphism. For each $t',t''$, we use \Cref{Observation-typechecking-PSL} and \Cref{lem-type-checking} (with $B=L$) to check if $\alpha_{g_{i}}^{t',t''}$ is a diagonal automorphism. This check is computable in $\textsf{NC}$. Let $t',t''$ be such that $\alpha_{g_{i}}^{t',t''}$ is an inner or a diagonal automorphism. One can see that $t''=0$ if and only if $\lambda_{g_{i}}=\alpha_{g_{i}}^{t',t''} \mathbbm{f}_{\sigma_0}^{t'}$, and $\lambda_{g_{i}}$ is free of graph automorphism i.e., $\lambda_{g_{i}} \in \rm{P}\Gamma\rm{Sp}(4,2^e)$.  

We repeat the above procedure for each $i$ to determine the type of $\lambda_{g_{i}}$. Suppose for each $i$, we get $\alpha_{g_{i}}^{t',t''}$ is a diagonal automorphism or an inner automorphism, and $t''=0$. Then $\lambda_{g_{i}} \in \rm{P}\Gamma\rm{Sp}(4,2^e)$. Therefore, $A \leq \rm{P}\Gamma\rm{Sp}(4,2^e)$. Thus we can test if $A \not \leq \rm{P}\Gamma\rm{Sp}(4,2^e)$ in $\textsf{NC}$. If $A \leq \rm{P}\Gamma\rm{Sp}(4,2^e)$ then $\mu(G,N)=\mu(S_1)$, otherwise, $\mu(G,N)=2\mu(S_1)$ (see e.g., \cite{CannonHoltUnger}).

\item \textbf{Computing a Minimal Faithful Permutation Representation of $A$.} We now turn  computing a minimal faithful permutation representation of $A$. As $V = \mathbb{F}_{2^e}^{4}$ is given explicitly, we can in $\textsf{NC}$ compute a basis $\{e_1,e_2,f_1,f_2\}$ such that $\langle e_i, e_j \rangle=\langle f_i, f_j \rangle = 0$ and $\langle e_i, f_j \rangle =\delta_{ij}$, where $\delta_{ij}=1$ if and only if $i=j$; otherwise $\delta_{ij} = 0$. We now turn to specifying the permutation representation of the graph automorphism, $\iota_{\beta} : \PSp(4,2^e) \to \PSp(4,2^e)$ sending $M \mapsto (M^{-1})^{T}$. 

Recall that $\mu(A) = 2\mu(S)$ if and only if some generator of $A$ contains the graph automorphism. In particular, we require $2\mu(S)$ points to construct a minimal faithful permutation representation of $\iota_{\beta}$. We accomplish this in the following manner. For a subspace $U \leq V$, define:
\[
U^{\perp} = \{ v \in V : \langle v,u \rangle =0  \text{ for all } u \in U\}.
\]
For each one-space $[v]$ of $V$, we construct elements $([v], +), ([v], -)$. Now $\iota_{\beta}$ swaps $([v], +)$ and $([v], -)$ for all $1$-spaces of $V$ (see e.g., \cite[Section 7.1]{cameronnotes}).

While computing $\mu(A)$ we have computed a decomposition of $\alpha_{g_{i}}$ as $\alpha_{g_{i}}^{t',t''}\mathbbm{g}^{t''} \mathbbm{f}_{\sigma_0}^{t'}$. However, we can also compute the decomposition of $\alpha_{g_{i}}$ as $\alpha_{g_{i}}^{t',t''} \mathbbm{f}_{\sigma_0}^{t'} \mathbbm{g}^{t''}$ as in \Cref{baarnhielm Thm-decomposition} similar to the above. With this later decomposition, suppose for each $i \in [t]$, let $t', t''$ such that $\alpha_{g_{i}}^{t', t''}$ is inner or diagonal. In particular, $\alpha_{g_{i}}^{t', t''}\mathfrak{f}_{\sigma_{0}}^{t'} \in \rm{P}\Gamma\rm{Sp}(4,2^e)$, and so $\alpha_{g_{i}}^{t', t''}\mathfrak{f}_{\sigma_{0}}^{t'}$ is (abstractly) a permutation of $V$ (though not necessarily represented as such at this stage). As we have fixed a basis $\beta$ of $V$, we may apply \cite[Lemma~6.5(iii)]{KantorLuksMark} to obtain $M = (a_{ij}) \leq \GL(V)$ and $\sigma \in \Aut(\mathbb{F}_{q})$ such that the pair $(M,\sigma)$ induces the semilinear transformation $\tau : \sum_{i=1}^{d} c_{i}e_{i} \mapsto \sum_{i=1}^{d} c_{i}^{\sigma} a_{ij} e_{j}$. By applying $\tau$ to $V$, we may construct a permutation representation $\psi_{i}$ of $\alpha_{g_{i}}^{t', t''}\mathfrak{f}_{\sigma_{0}}^{t'}$. Precisely, $\alpha_{g_{i}}^{t', t''}\mathfrak{f}_{\sigma_{0}}^{t'}$ maps points of the form $([v], +) \mapsto ([\tau v], +)$, and points of the form $([v], -) \mapsto ([(\tau^{-1})^{T}v], -)$.

Now if $t'' \neq 0$, then $\alpha_{g_{i}}$ has a graph automorphism. So we multiply $\psi_{i}$ with the permutation representation of $\iota_{\beta}$ 

If $\mu(A) = \mu(S)$, then we restrict the permutation representations $\psi_{1}, \ldots, \psi_{t}$ to the points of the form $([v], +)$. Otherwise, we have constructed a minimal faithful permutation representation of $A$ on $2\mu(S)$ points. \qedhere
\end{itemize}

\end{proof}

\section{\texorpdfstring{$S_1 \cong  \POmegaPlus(2d,3), d \geq 4$}.}\label{POmega-case}

\begin{proposition} \label{prop:POmega+2d3}
Let $d \geq 4$ and let $G, K \leq \text{Sym}(n)$, with $K \trianglelefteq G$. Let $\textbf{G} := G/K$ be Fitting-free, with minimal normal subgroup $N = S_{1} \times \cdots \times S_{\ell}$, where the $S_{i}$'s are isomorphic non-abelian simple groups, isomorphic to $\POmegaPlus(2d,3)$. Furthermore, suppose that we are given the following:
\begin{itemize}
\item Generators for each of the $S_{i}$'s.
\item A minimal faithful permutation representation $\text{Iso} : S_{1} \to S_{1}$, detailing the action of $S_{1}$ on the set $\overline{V}$ of $1$-dimensional anisotropic subspaces of $V=\mathbb{F}_{3}^{2d}$ .

\item A basis $\beta$ for $V$.

\item $N_{\textbf{G}}(S_1) = \langle g_1, \ldots, g_t \rangle$, as well as $C_{\textbf{G}}(S_1)$.
\item For each $i \in [t]$, let $C_{g_{i}} : S_1 \to S_1$ be given by $C_{g_i}(s) = g_{i}^{-1}sg_{i}$. Suppose that we are given $A = \langle C_{g_{i}}, \ldots, C_{g_{t}} \rangle$.
\end{itemize}

\noindent Then we can compute $\mu(\textbf{G},N) := \mu(A)$, as well as a minimal faithful permutation representation of $A$, in $\textsf{NC}$.
\end{proposition}

\begin{proof}
\noindent 
\begin{itemize}
    
\item \textbf{Determining $\mu(A)$.} Our aim in this case is to determine if $3 \nmid |A|/|S|$ and  $A \not \leq {\rm{PO}}^+(2d,3)$ in the 3rd column of Table~\ref{Table-MD-AlmostSimple}. Note that as $A\leq \aut(S_1)$ and $S_1\cong \POmegaPlus(2d,3)$, using ${\rm{Iso}}$, we can view $A \leq \aut(\POmegaPlus(2d,3))$. By  \Cref{Thm-Iso-POmegga-Omega}, we have $A$ as a subgroup of $\aut(\Omega^+(2d,3))$. Note that ${\rm{PO}}^{+}(2d,3) \leq \Aut(\Omega^{+}(2d,3))$. 

Each element of $\aut(\Omega^{+}(2d,3))$ can be decomposed in terms of inner, diagonal, field, and graph automorphisms (cf. \Cref{def:automorphisms}). However, in the case of $\Omega^+(2d,3)$, the field has size $3$. Thus, it does not admit any non-trivial field automorphisms. Also, $\Omega^+(2d,3)$ does not have any graph automorphisms \cite{KleidmanLiebeck}. 

Suppose $X = \begin{pmatrix} 0_{d} & I_{d}\\ I_{d} & 0_{d} \end{pmatrix}$  is the matrix of the symmetric bilinear form fixed by the element of $\rm{O}^+(2d,3)$, then
$\rm{O}^+(2d,3) =\{F \in M_{2d}(3) : FXF^{t}=X \}$ \cite{KleidmanLiebeck,dieudonne1951}. Let $Z=Z({\rm{O}}^+(2d,3)) \cap \Omega^+(2d,3)$. Note that any automorphism of $\Omega^+(2d,3)$ is induced by the projective image of the matrices $F \in \mathrm{GL}(2d,3)$ such that $FX F^{t} = \pm X$. Here, we use the fact that ${\rm PO}^{+}(2d,3)={\rm O}^{+}(2d,3)/Z({\rm O}^{+}(2d,3))$ can be viewed as a subgroup of $\aut(\Omega^{+}(2d,3))$, and it contains those automorphisms of $\aut(\Omega^{+}(2d,3))$ that fix the symmetric bilinear form. Namely, let $FZ \in  {\rm PO}^{+}(2d,3)$, and let $\alpha_{F}$ be an induced automorphism obtained from $F$. If $FX F^{t} = X$, then $\alpha_{F} \in {\rm PO}^{+}(2d,3)$. Therefore, to test if $A\leq {\rm{PO}}^+(2d,3)$ it is enough to check if the generators of $A$ are induced by the projective image of $F \in {\rm O}^{+}(2d,3)$.

We now describe the computational details. Let $U \in \mathrm{\Omega^{+}}(2d,3)$ with rows and columns of $U$ are numbered by $1,\ldots, d,$ $ -1, \ldots, -d$. Let:
\begin{align*}
L=\{&I+\beta(e_{i,j}-e_{-j,-i}), \\
&I-\beta(e_{-i,-j}-e_{j,i}), \\
&I+\beta(e_{i,-j}-e_{j,-i}), \\
&I+\beta(e_{-i,j}-e_{-j,i})  \mid 0 < i < j, \beta \in \mathbb{F}_3\},    
\end{align*} 
where $e_{i,j}$ is the matrix which has $1$ in the $(i, j)$th entry and $0$ in other places. Then $\mathrm{\Omega^{+}}(2d,3)=\langle L \rangle$ (see e.g., \cite[Page~185]{carter-book}). Note that as $|Z| = 2$, we have that for each $U \in L$, $\ord(U)=3$ and $(|Z|,\ord(U))=1$.

Define $\lambda_{g_{i}} \in \aut(\POmegaPlus(2d,3))$ as $\lambda_{g_{i}}= \mathrm{Iso} \circ C_{g_{i}} \circ \mathrm{Iso}^{-1}$. Thus $A \cong \langle \lambda_{g_{1}}, \ldots, \lambda_{g_{t}} \rangle \leq \aut(\POmegaPlus(2d,3))$. We apply \Cref{construct-alpha-POmega} to each $\lambda_{g_i}$ to get $\alpha_{g_i} \in \aut(\Omega^+(2d,3))$ such that $\lambda_{g_i}=\bar{\alpha}_{g_i}$. The subgroup $\langle \alpha_{g_1},\ldots, \alpha_{g_t}\rangle$ is the natural embedding of $A$ in $\aut(\Omega^+(2d,3))$.

Note that now it is enough to check if $\alpha_{g_i}$ is in ${\rm{PO}}^+(2d,3)$ for each $i \in [t]$. Equivalently, we want to check if $\alpha_{g_i}$ is induced by an element of ${\rm{PO}}^+(2d,3)$. Fix $i$ and consider the following system of $|L|d^2$ equation for each $\alpha_{g_i}$:
\[
F_{{g_i}}U_j=\alpha_{g_{i}}(U_j)F_{{g_i}}  \text{ for each } 1 \leq j \leq |L|,
\]
where the unknowns are the entries of $F_{{g_i}}$. This system has $d^2|L|$ equations on $d^2$ unknowns. We can solve this system of  equations in $\textsf{NC}^2$ \cite{MulmuleyRank}. By \Cref{lemma-module-equ-cond-POmega}, a non-zero solution to the system always exists. Moreover, if $F_{{g_i}}$ is solution to the above system, then by \Cref{thm-schur}, $F_{{g_i}}\in {\rm{GL}}(2d,3)$.  Thus, the automorphism $\alpha_{g_i}$ is induced by the projective image of $F_{{g_i}} \in {\rm{GL}}(2d,
3)$. Further, $\alpha_{g_{i}} \in \mathrm{Aut(\Omega^{+}}(2d,3))$ is in ${\rm{PO}}^{+}(2d,3)$ if and only if the matrix $F_{g_{i}}$, satisfy $F_{g_{i}}X F_{g_{i}}^{t} =X$ i.e., $F_{g_{i}} \in \mathrm{O^{+}}(2d,3)$. 

For each $\alpha_{g_{i}}$ we find the matrix $F_{g_{i}}$ in $\textsf{NC}^2$ as described above. Also, note that $|L|$ is polynomially bounded. For each $F_{g_{i}}$ we check if $F_{g_{i}}X F_{g_{i}}^{T} =X$. Therefore we can determine if $A \not \leq {\rm{PO}}^{+}(2d,3)$, and if $3 \nmid |A/S_1|$ in $\textsf{NC}$. If $A \not \leq {\rm{PO}}^{+}(2d,3)$ and $3 \nmid |A/S_1|$ then $\mu(G,N)= \frac{(3^d-1)(3^{d-1}+1)}{2}$, otherwise $\mu(G,N)=\mu(S_1)$ (see e.g., \Cref{MD-AlmostSimple}).

\item \textbf{Computing a Minimal Faithful Permutation Representation of $A$.} 

If $\mu(A)=\mu(S)$ then a minimal faithful permutation representation of $A$ on the set of $1$-dimensional anisotropic subspaces of $V$ is induced by a minimal faithful permutation representation of $S$ on the same set. 

If $\mu(A) \neq \mu(S)$ then we proceed as follows. It follows from \cite[Proposition~1]{ref4MV} that ${\rm{PO}}^+(2d,3)$ admits a minimal faithful permutation representation on the set of $1$-dimensional subspaces isotropic with respect to 
$X$ (see \Cref{subsec: simplegroup}). As we are given $V$ explicitly, we may construct this set of $1$-dimensional subspaces of $V$ isotropic with respect to 
$X$ in $\textsf{NC}$. Then for each generator $\alpha_{g_{i}}$ of $A$, the induced matrix $F_{\alpha_{g_{i}}}$ acts on the set of $1$-dimensional subspaces of $V$ isotropic with respect to a nondegenerate quadratic form $X$ naturally via matrix multiplication. \qedhere
\end{itemize}
\end{proof}

\section{\texorpdfstring{$S_1 \cong \mathcal{L}(\mathbb{F}_{q})$, $\mathcal{L}(\mathbb{F}_{q}) \in \{\G(3^e), \F(2^e), \Esix(q)\}$}.}\label{sec:exceptional-groups}

In this section, we handle the remaining simple groups (simple groups of Lie type from rows 14, 15, 16 in \Cref{Table-MD-AlmostSimple}). 

\begin{proposition} \label{prop:LieConstructive1}
Let $G, K \trianglelefteq \Sym(n)$, with $K \trianglelefteq G$. Let $\textbf{G} := G/K$ be Fitting-free with minimal normal subgroup $N = S_{1} \times \cdots \times S_{\ell}$, where the $S_{i}$'s are isomorphic non-abelian simple groups. Suppose that each $S_{i}$ is isomorphic to $\mathcal{L}(\mathbb{F}_{q}) \in \{ \G(3^e), \F(2^e), \Esix(q) \}$. Additionally, suppose that if $\mathcal{L}(\mathbb{F}_{q}) \cong \G(3^e)$, then $e > 1$. Furthermore, suppose that we are given the following:
\begin{itemize}
\item Generators, as well as the multiplication table, for each $S_i$.

\item A minimal faithful permutation representation $\text{Iso} : S_1 \to S_1$ detailing the action of $S_1$ on the cosets of $S_1/P$, where $P$ is a maximal subgroup of $S_1$. In particular, we are given a $2$-element generating sequence of $S_1$ satisfying the Chevalley presentation for $\mathcal{L}(\mathbb{F}_{q})$.

\item $N_{\textbf{G}}(S_1) = \langle g_1, \ldots, g_t \rangle$, as well as $C_{\textbf{G}}(S_1)$.

\item For each $i \in [t]$, let $C_{g_{i}} : S_1 \to S_1$ be given by $C_{g_{i}}(s) = g_{i}^{-1}sg_{i}$. Suppose that we are given $A = \langle C_{g_{1}}, \ldots, C_{g_{t}} \rangle$.
\end{itemize}

Then we can compute $\mu(\textbf{G}, N) := \mu(A)$, as well as a minimal faithful permutation representation of $A$, in $\textsf{NC}$.
\end{proposition}

Our approach deviates slightly from what we have done in the previous sections. The reason is that the exceptional groups have small orders, allowing us to compute their Cayley tables. However, we also do not have the matrix group structure as in earlier cases, which makes the analysis more intricate.

\begin{proof}
We will describe the details for the case when $S_1 \cong \G(3^e)$. The other cases follow similarly. Note that if $S_1 \cong \G(3^e)$, our aim in this case is to determine whether $A \not \leq \Gamma \G(3^e)$ in the 3rd column of \Cref{Table-MD-AlmostSimple}. We have $A, \Gamma \G(3^e) \leq \aut(\G(3^e))$, and $\aut(\G(3^e)) \setminus \Gamma \G(3^e)$ are exactly those automorphisms which contain a  graph automorphism in their decompositions. Thus, it is enough to check if each generator of $A$ is free of graph automorphism.

We now describe the computational details. 

\begin{itemize}
\item \textbf{Determining $\mu(A)$.} 
Define $\lambda_{g_{i}} \in \aut(\mathcal{L}(\mathbb{F}))$ as $\lambda_{g_{i}}= \mathrm{Iso} \circ C_{g_{i}} \circ \mathrm{Iso}^{-1}$. Since $\aut(S_1)= \aut(\G(3^e))$ we have $A=\langle \lambda_{g_{1}},\ldots, \lambda_{g_{t}}\rangle \leq  \aut(\G(3^e))$ using an isomorphism, Iso.

For each $\lambda_{g_{i}}$, define $\alpha_{g_{i}}^{t',t''}=\lambda_{g_{i}}\mathbbm{f}_{\sigma_0}^{-t'}\mathbbm{g}^{-t''}$ for $0 \leq t' < e$ and $t'' \in \{0,1\}$. Now we want to check the type of $\alpha_{g_{i}}^{t',t''}$ which will be used to identify the type of $\lambda_{g_{i}}$. For each $i$, and each $t',t''$, we check if $\alpha_{g_{i}}^{t',t''}$ is either an inner automorphism or a diagonal automorphism by checking its value at all the elements of a group. This can be done since we have a Cayley table of $\G(3^e)$. Thus we can identify the type $\alpha_{g_{i}}^{t',t''}$. For a fix $i$, there are $2e$ possible $\alpha_{g_{i}}^{t',t''}$. Observe that out of these $2e$ possibilities, there is exactly one choice of $t'$ and $t''$ such that $\alpha_{g_{i}}^{t',t''}$ is either an inner or a diagonal automorphism. 

Suppose we identify that $\alpha_{g_{i}}^{t',t''}$ is an inner or a diagonal automorphism for some $t',t''$. If $t''=0$ then we get  $\lambda_{g_{i}}=\alpha_{g_{i}}^{t',t''}  \mathbbm{f}_{\sigma_0}^{t'}$ and $\lambda_{g_{i}} \in \Gamma \G(3^e)$. Thus for each $\lambda_{g_{i}}$ we can test if $\lambda_{g_{i}} \in \Gamma \G(3^e)$. Hence, we can identify if $A \not \leq \Gamma \G(3^e)$. If $A  \leq \Gamma \G(3^e)$ then $\mu(G,N)=\mu(S_1)$ otherwise $\mu(G,N)=2\mu(S_1)$ (see e.g., Lemma 2.4, \cite{CannonHoltUnger}).

Note that we can write each Chevalley generator explicitly, and we can also check if the automorphism is inner or a diagonal in $\textsf{NC}$.

\item \textbf{Computing a Minimal Faithful Permutation Representation of $A$.}  Note that the minimal faithful permutation representation of a simple group is always transitive \cite{thesis-chamberlain}. Therefore, to compute one such for $\G(3^e)$, it is enough to find a maximal subgroup such that the action on its cosets gives a minimal faithful permutation representation of $\G(3^e)$. Since all the maximal subgroups can be generated by at most four elements \cite{BurnessLiebeckShalev}, we can compute all such maximal subgroups for which the action of $\G(3^e)$ on their cosets is faithful. By \cite[Section 2]{ref5V}, there are up to conjugation, two such maximal subgroups, say $P_1$ and $P_2$ of index $\mu(\G(3^e))$. Without loss of generality, let $P_1$ denote the subgroup $P$ given in the statement of the proposition. One can verify that the graph automorphism $\mathbbm{g}$ sends $P_1$ to $P_2$ \cite{ref5V}. We can find both $P_1$ and $P_2$ in $\textsf{NC}$ as $|\G(3^e)| \leq n^9$. Moreover, we can also list $\{h_1 P_1,\dots, h_{\mu(S_1)} P_1\}$, a left coset transversal of $P_1$, in $\textsf{NC}$ (\Cref{lem:List}).

The generators $\lambda_{g_i}$ of $A$ are automorphisms of $\G(3^e)$, each $\lambda_{g_i}$ is composition of inner/diagonal ($h(\chi)$), field ($\mathbbm{f}_{\sigma_0}$), or graph automorphism ($\mathbbm{g}$) of $\G(3^e)$. Each of these automorphism is defined explicitly in \Cref{subsec: simplegroup}. As discussed before, we can write each Chevalley generators explicitly, and we can also check if the automorphism is inner or a diagonal in polynomial-time in the setting of quotients by and in $\textsf{NC}$ in the setting of permutation groups. It is then enough to define the action of each of these automorphisms of $\G(3^e)$ to get a minimal faithful permutation representation of $\G(3^e)$.

We first turn to constructing a permutation representation of $ \mathbbm{g}$, the graph automorphism. Recall that $\mu(A) = 2\mu(S)$ if and only if some generator of $A$ contains the graph automorphism. In particular, we require $2\mu(S)$ points to construct a minimal faithful permutation representation of $ \mathbbm{g}$. For this, we consider the set $\{h_1 P_1,\dots, h_{\mu(S_1)} P_1,\mathbbm{g}(h_1)P_2,\dots,\mathbbm{g}(h_{\mu(S_1)}) P_2\}$. As $\mathbbm{g}$ is a graph automorphism which sends $P_1$ to $P_2$, each cosets $\mathbbm{g}(h_{i}) P_2$ are distinct for $i=1,\dots,\mu(S_1)$. The action of $\mathbbm{g}$ on $\{h_1 P_1,\dots, h_{\mu(S_1)} P_1,\mathbbm{g}(h_1)P_2,\dots,\mathbbm{g}(h_{\mu(S_1)}) \}$ is defined as $\mathbbm{g}(h_iP_1)=\mathbbm{g}(h_i) P_2$ and $\mathbbm{g}(\mathbbm{g}(h_i) P_2)=h_i P_1$ for $i=1,\dots,\mu(S_1)$.

If $\alpha \in \Aut(G)$ is any of field, inner, or diagonal  automorphism, then we can define the induced action $\alpha(gP_{1}) = \alpha(g)P_{1}$ and $\alpha(gP_{2}) = \alpha(g)P_{2}$.

If $\mu(A)=\mu(S)$ then we restrict the permutation representation to the points $\{h_1 P_1,\dots, h_{\mu(S_1)} P_1\}$. Since $A$ is generated by the elements of $S_1$, together with diagonal $(h(\chi))$ field $(\mathbbm{f}_{\sigma_0})$, and graph $(\mathbbm{g})$ automorphisms, we have constructed a minimal faithful permutation representation of $A$. 
\end{itemize}

The case when $S_1$ is isomorphic to either $\F(2^e)$ or $ \Esix(q)$ is handled similarly as above -- we use in place of $\G(3^e)$ and $\Gamma\G(3^e)$ either: $\F(2^e)$ and $\Gamma\F(2^e)$; or $\Esix(q)$ and $\Gamma\Esix(q)$. The proof then goes through \emph{mutatis mutandis}.
\end{proof}

\section{Handling the Remaining Simple Group Cases} \label{sec:RemainingSimpleGroups}

In this section, we handle the remaining case from \Cref{Table-MD-AlmostSimple}, as well as all the simple groups when $\mu(A)=\mu(S)$. We show that in each case, one can construct a minimal faithful permutation representation of $S$ and $A$ efficiently. 

\begin{proposition}
Let $G, K \leq \text{Sym}(n)$, with $K \trianglelefteq G$. Let $\textbf{G} := G/K$ be Fitting-free, with minimal normal subgroup $N = S_{1} \times \cdots \times S_{\ell}$, where the $S_{i}$'s are isomorphic non-abelian simple groups, isomorphic to some non-abelian simple group from rows 1-9 of Table~\ref{Table-MD-AlmostSimple}. Furthermore, suppose that we are given the following:
\begin{itemize}
\item Generators for each of the $S_{i}$'s.
\item A minimal faithful permutation representation $\text{Iso} : S_{1} \to S_{1}$.
\end{itemize}

\noindent Then we can compute $\mu(\textbf{G},N) := \mu(A)$, as well as a minimal faithful permutation representation of $A$, in $\textsf{NC}$.
\end{proposition}

\begin{proof}
Note that these groups mentioned in rows 1-9 of \Cref{Table-MD-AlmostSimple} are of constant size and so we can easily construct a minimum permutation representation of $S_1$ and $A$ in $\textsf{NC}$, as follows. We consider in parallel, each $d \leq |S_{1}|$. For each such $d$, list all possible maps $\varphi : S_{1} \to \text{Sym}(d)$ and test whether each such map is an injective homomorphism. We then take such a $\varphi$ for the smallest such $d$ as our minimal faithful permutation representation. Similarly, we also compute a minimal faithful permutation representation of $A$ as well. 
\end{proof}

From now on, we only consider simple groups $S$ such that $\mu(A)=\mu(S_1)$, which are not part of \Cref{Table-MD-AlmostSimple}.

\begin{proposition}
Let $G, K \leq \text{Sym}(n)$, with $K \trianglelefteq G$. Let $\textbf{G} := G/K$ be Fitting-free, with minimal normal subgroup $N = S_{1} \times \cdots \times S_{\ell}$, where the $S_{i}$'s are isomorphic non-abelian simple groups, isomorphic to $\text{Alt}(m)$ for some $m \geq 6$. Furthermore, suppose that we are given the following:
\begin{itemize}
\item Generators for each of the $S_{i}$'s.
\item A minimal faithful permutation representation $\text{Iso} : S_{1} \to S_{1}$.

\item $N_{\textbf{G}}(S_1) = \langle g_1, \ldots, g_t \rangle$, as well as $C_{\textbf{G}}(S_1)$.
\item For each $i \in [t]$, let $C_{g_{i}} : S_1 \to S_1$ be given by $C_{g_i}(s) = g_{i}^{-1}sg_{i}$. Suppose that we are given $A = \langle C_{g_{i}}, \ldots, C_{g_{t}} \rangle$.
\end{itemize}

\noindent Then we can compute $\mu(\textbf{G},N) := \mu(A)$, as well as a minimal faithful permutation representation of $A$, in $\textsf{NC}$.
\end{proposition}

\begin{proof}
Note that $\aut(S_1)=\text{Sym}(m)$ and $A$ can be either $\text{Alt}(m)$ or $\text{Sym}(m)$. We now turn to determining whether $A = \text{Sym}(n)$. Recall that $N_{G}(S_{1})/C_{G}(S_{1}) \cong A$. Thus, $A = \text{Sym}(n)$ if and only if $|N_{G}(S_{1})| / |C_{G}(S_{1})| = n!$. By \Cref{PermutationGroupsNC}(a), we can compute $|N_{G}(S_{1})|$ and $|C_{G}(S_{1})|$ in $\textsf{NC}$. Thus, as $N_{G}(S_{1})/C_{G}(S_{1}) \cong A$, we can compute $|A|$ in $\textsf{NC}$. Now $A \cong \text{Alt}(m)$ if and only if $|A| = m!/2$. In this case, we may use Theorem~\ref{thm:ConstructiveRecognition} to obtain a minimal faithful permutation representation of $A$. 

Otherwise, $A \cong \Sym(2) \ltimes S_{1}$. As $\text{Iso}$ is a minimal faithful permutation representation of $S_{1} \cong \text{Alt}(m)$, we may build a corresponding permutation for the $\Sym(2)$ factor using standard constructions. Thus, in $\textsf{NC}$, we can construct a minimal faithful permutation representation of $A \cong \Sym(m)$. The result now follows.
\end{proof}

\begin{proposition} \label{prop:RemainingClassical}
Let $G, K \leq \text{Sym}(n)$, with $K \trianglelefteq G$. Let $\textbf{G} := G/K$ be Fitting-free, with minimal normal subgroup $N = S_{1} \times \cdots \times S_{\ell}$, where the $S_{i}$'s are isomorphic non-abelian simple groups, isomorphic to a classical simple group of dimension $d$ over $\mathbb{F}_{q}$ not listed in Table~\ref{Table-MD-AlmostSimple}. Suppose that $|S_{i}| \geq n^9$. Furthermore, suppose that we are given the following:
\begin{itemize}
\item Generators for each of the $S_{i}$'s.
\item A minimal faithful permutation representation $\text{Iso} : S_{1} \to S_{1}$, detailing the action of $S_{1}$ on some orbit $\overline{V}$ of $1$-dimensional subspaces of $V=\mathbb{F}_{q}^{d}$ .

\item A basis $\beta$ for $V$.

\item $N_{\textbf{G}}(S_1) = \langle g_1, \ldots, g_t \rangle$, as well as $C_{\textbf{G}}(S_1)$.
\item For each $i \in [t]$, let $C_{g_{i}} : S_1 \to S_1$ be given by $C_{g_i}(s) = g_{i}^{-1}sg_{i}$. Suppose that we are given $A = \langle C_{g_{i}}, \ldots, C_{g_{t}} \rangle$.
\end{itemize}

\noindent Then we can compute $\mu(\textbf{G},N) := \mu(A)$, as well as a minimal faithful permutation representation of $A$, in $\textsf{NC}$.
\end{proposition}

\begin{proof} 
For each $i \in [t]$, define $\lambda_{i} \in \Aut(S_{1})$ by $\lambda_{g_{i}} := \text{Iso} \circ C_{g_{i}} \circ \text{Iso}^{-1}$. Thus, $A$ is embedded as the subgroup $\langle \lambda_{g_{1}}, \ldots, \lambda_{g_{t}} \rangle$ in $\Aut(S_{1})$. Since $S_1$ does not appear in \Cref{Table-MD-AlmostSimple}, each $\lambda_{g_{i}}$ ($i \in [t]$) is free of graph automorphisms \cite{CannonHoltUnger}. In particular, each $\lambda_{g_{i}}$ is a semilinear transformation on $V$, and so is (abstractly) a permutation of $V$ (though need not be represented as a permutation of $V$ at this stage). As we have a basis $\beta$ of $V$, we may apply \cite[Lemma~6.5(iii)]{KantorLuksMark} to obtain $M = (a_{ij}) \leq V$ and $\sigma \in \Aut(\mathbb{F}_{q})$ such that the pair $(M, \sigma)$ induce the semilinear transformation $\tau : \sum_{i=1}^{d} c_{i}e_{i} \mapsto \sum_{i=1}^{d} c_{i}^{\sigma} a_{ij} e_{j}$. By applying $\tau$ to $V$, we may construct a permutation representation $\psi_{i}$ of $\lambda_{g_{i}}$ as follows. For a $1$-space $[v]$ of $V$, $\psi_{i}$ maps $[v] \mapsto [\tau v]$. Thus, we have constructed a minimal faithful permutation representation of $A$, as desired. 
\end{proof}

\begin{proposition} \label{prop:LieConstructive2}
Let $G, K \trianglelefteq \Sym(n)$, with $K \trianglelefteq G$. Let $\textbf{G} := G/K$ be Fitting-free with minimal normal subgroup $N = S_{1} \times \cdots \times S_{\ell}$, where the $S_{i}$'s are isomorphic non-abelian simple groups. Suppose that each $S_{i}$ is an exceptional group of Lie type or a classical simple group of order less than $n^9$, that is not listed in Table~\ref{Table-MD-AlmostSimple}. Furthermore, suppose that we are given the following:
\begin{itemize}
\item Generators, as well as the multiplication table, for each $S_i$.

\item A minimal faithful permutation representation $\text{Iso} : S_1 \to S_1$ detailing the action of $S_1$ on the cosets of $S_1/P$, where $P$ is a maximal subgroup of $S_1$. 

\item $N_{\textbf{G}}(S_1) = \langle g_1, \ldots, g_t \rangle$, as well as $C_{\textbf{G}}(S_1)$.

\item For each $i \in [t]$, let $C_{g_{i}} : S_1 \to S_1$ be given by $C_{g_{i}}(s) = g_{i}^{-1}sg_{i}$. Suppose that we are given $A = \langle C_{g_{1}}, \ldots, C_{g_{t}} \rangle$.
\end{itemize}

Then we can compute $\mu(\textbf{G}, N) := \mu(A)$, as well as a minimal faithful permutation representation of $A$, in $\textsf{NC}$.
\end{proposition}

\begin{proof}
Define $\lambda_{g_{i}} \in \aut(S_1)$ as $\lambda_{g_{i}}= \mathrm{Iso} \circ C_{g_{i}} \circ \mathrm{Iso}^{-1}$. This provides an embedding of $A$ into $\aut(S_1)$. For each $\lambda_{g_{i}}$ ($i \in [t]$), we construct a permutation representation $\psi_{i}$ by mapping the coset $gP \mapsto (\lambda_{g_{i}}(g))P$. Since $S_1$ does not appear in \Cref{Table-MD-AlmostSimple}, we have $\mu(A)=\mu(S_1)$. In particular, $A$ is free of the graph automorphisms. Thus, we have constructed a minimal faithful permutation representation of $A$, as desired.
\end{proof}

\section{Conclusion}

We investigated the computational complexity of \algprobm{Min-Per-Deg} for Fitting-free groups. When the groups are given as quotients of two permutation groups, we established polynomial-time bounds for computing both the minimal faithful permutation degree $\mu(G)$, as well as a permutation representation $\varphi : G \to \Sym(\mu(G))$. Furthermore, in the setting of permutation groups, we established an upper bound of $\textsf{NC}$ for computing $\mu(G)$, and $\textsf{RNC} $ for constructing a minimal faithful permutation representation. In the process, we exhibited an upper bound of $\textsf{NC}$ for computing the socle of a Fitting-free group, in the setting of permutation groups. Our work leaves open several problems. The most natural open problem is the following.

\begin{problem} \label{prob:p1}
In the setting of quotients, does \algprobm{Min-Per-Deg} for Fitting-free groups belong to $\textsf{NC}$?
\end{problem}

Resolving Problem~\ref{prob:p1} will likely require novel parallel solutions for much of the polynomial-time framework for quotients developed by Kantor and Luks \cite{KantorLuksQuotients}. We highlight key obstacles next.

\begin{problem}
Let $G, K \leq \Sym(n)$ with $K \trianglelefteq G$, and let $\textbf{G} := G/K$. Let $\textbf{N} \trianglelefteq \textbf{G}$. Compute the centralizer $C_{\textbf{G}}(\textbf{N})$ in $\textsf{NC}$.   
\end{problem}

In order to obtain a polynomial-time solution for computing $C_{\textbf{G}}(\textbf{N})$ in the setting of quotients, Kantor and Luks \cite{KantorLuksQuotients} reduce to computing the intersection $G \cap P$ between an arbitrary permutation group $G$ and a $p$-group $P$. Thus, we ask the following.

\begin{problem} \label{prob:Cores}
Let $G, P \leq \Sym(n)$ with $P$ a $p$-group. Can we compute $G \cap P$ in $\textsf{NC}$?
\end{problem}

\begin{problem}
Let $G, K \leq \Sym(n)$ with $K \trianglelefteq G$, and let $\textbf{G} := G/K$. Suppose that $\textbf{G}$ is Fitting-free. Compute $\Soc(\textbf{G})$ in $\textsf{NC}$.
\end{problem}

More generally, we ask about computing $\Soc(\textbf{G})$ in $\textsf{NC}$ when $\textbf{G}$ is \emph{not} assumed to be Fitting-free. This is open, even in the setting of permutation groups ($K = 1$).

Theorem~\ref{thm:MainFittingFree}(a) formalizes that much of the framework established by Cannon and Holt \cite{CH03} is polynomial-time computable. In order to strengthen these results to the setting of $\textsf{NC}$, it is necessary but not sufficient to resolve Problem~\ref{prob:p1}. We also need to compute the solvable radical.

\begin{problem} \label{prob:Radical}
Let $G, K \leq \Sym(n)$ with $K \trianglelefteq G$, and let $\textbf{G} := G/K$. Compute $\rad(\textbf{G})$ in $\textsf{NC}$.
\end{problem}

Problem~\ref{prob:Radical} is open, even in the setting of permutation groups ($K = 1$).

Finally, it remains open to derandomize Theorem~\ref{thm:MainFittingFree}(b). It would suffice to resolve the following problem.

\begin{problem}[{\cite[Problem~5]{BabaiLuksSeress}}]
Let $H \leq G \leq \Sym(n)$ be permutation groups. Suppose that $[G : H] \in \poly(n)$. Can a transversal for the cosets of $H$ in $G$ be constructed in $\textsf{NC}$?
\end{problem}

\section*{Acknowledgments}
We wish to thank Peter Brooksbank, Josh Grochow, Takunari Miyazaki, Eamonn O'Brien, and Sheila Sundaram for many helpful discussions. ML completed parts of this work at Tensors Algebra-Geometry-Applications (TAGA) 2024. ML thanks Elina Robeva, Christopher Voll, and James B. Wilson for organizing the conference. ML was partially supported by CRC 358 Integral Structures in Geometry and Number Theory at Bielefeld and Paderborn, Germany; the Department of Mathematics at Colorado State University; James B. Wilson's NSF grant DMS-2319370; and the Computer Science department at the College of Charleston. DT is supported by JSPS KAKENHI grant No. JP24H00071.

\bibliographystyle{alphaurl}
\bibliography{references}

\end{document}